\def \Z {{\mathbb{Z}}}
\def \x {{\bf x}}
\def \n {{\mathbb{N}}}
\newtheorem{theorem}{Theorem}
\newtheorem{lemma}{Lemma}
\newtheorem{proposition}{Proposition}
\newtheorem{conjecture}{Conjecture}
\theoremstyle{definition}
\newtheorem{example}{Example}
\newtheorem{assumption}{Assumption}
\theoremstyle{definition}
\newtheorem{definition}{Definition}
\begin{document}

\title{On Naisargik Images of Varshamov-Tenengolts and Helberg Codes}

\author{Kalp Pandya, 
    Devdeep Shetranjiwala, Naisargi Savaliya
    and Manish K. Gupta,~\IEEEmembership{Senior Member,~IEEE}\thanks{Corresponding Author's Email:mankg@guptalab.org}\\
    Dhirubhai Ambani Institute of Information and Communication Technology\\
Gandhinagar, Gujarat, 382007 India\\
Email: kalppandya94@gmail.com, devdeep0702@gmail.com, savaliyanaisargi@gmail.com, mankg@guptalab.org}
    
\maketitle

\begin{abstract}
The VT and Helberg codes, both in binary and non-binary forms, stand as elegant solutions for rectifying insertion and deletion errors. Within this manuscript, we consider the quaternary versions of these codes. It is well known that many optimal binary non-linear codes like Kerdock and Prepreta can be depicted as Gray images (isometry) of codes defined over $\Z_4$. Thus a natural question arises: Can we find similar maps between quaternary and binary spaces which gives interesting properties when applied to the VT and Helberg codes. We found several such  maps called Naisargik (natural) maps and we study the images of quaternary  VT and Helberg  codes under these maps. Naisargik and inverse Naisargik images gives interesting error-correcting properties for VT and Helberg codes. If two Naisargik images of VT code generates an intersecting one deletion sphere, then the images holds the same weights. A quaternary Helberg code designed to correct $s$ deletions can effectively rectify $s+1$ deletion errors when considering its Naisargik image, and $s$-deletion correcting binary Helberg code can corrects $\lfloor\frac{s}{2}\rfloor$ errors with inverse Naisargik image. 

\end{abstract}

\begin{IEEEkeywords}
 DNA Storage, DNA codes , Helberg codes , VT Codes, Gray map, Naisargik map, Quternary, Deletion errors
\end{IEEEkeywords}
\vspace{-.2in}
\section{Introduction}
\label{intro}

The significance of DNA storage lies in its capacity to address the limitations of conventional storage methods through high data density, stability, and environmental sustainability. This transformative technology efficiently manages the escalating demands of digital data storage, offering scalability and improved data security. DNA storage contains the data in form of sequence, consisting nucleotides adenine (A), cytosine (C), guanine (G), and thymine (T). Typical DNA storage follows the encoding of digital data in ACGT sequence and the decoding it back to it's original form. There are various encoding - decoding methods available for DNA storage data called codecs. The optimal codec algorithm is the one which corrects more number of errors. Codec does not only ensure the efficacy of data storage and retrieval but also systematically tackle potential errors arising from insertion or deletion during transmission. This exploration contributes to the establishment of a robust and dependable system at the convergence of information technology and biological sciences.

An excellent class of VT codes exists for special cases of single insertion or deletion errors. R. L. Varshamov and G. M. Tenengolts \cite{vtcodeoriginal} proposed binary VT code for asymmetric errors. Levenshtein \cite{Levenshtein1965BinaryCC} reconstructed the VT code and provided a linear-time decoding algorithm for the VT codes. Both VT constructions have the $\log(n) + O(1)$ redundant bits, where n represents length of the codewords. Later, Tenengolts \cite{nonbinaryVT} introduced non-binary VT code, which is also known as q-ary VT code, where q \(>\) 2. He also proposed a linear-time single error-correcting method along with an efficient encoder. Abroshan et al. \cite{encodingalgo} introduced a systematic encoder designed for non-binary VT codes, offering linear complexity. Additionally, they established a novel lower bound on the size of q-ary VT codes. Zihui et al. \cite{soft} introduces an innovative soft-in soft-out decoding algorithm tailored for VT codes applied to multiple received DNA strands. This algorithm addresses insertion, deletion, and substitution errors commonly encountered in DNA-based data storage. Sloane \cite{sloane2002singledeletioncorrecting} summarized the binary VT code and proved some import theorem about the length of the VT code $|VT_a(n)|$, where a is residue and n represents the length of the codeword. 

Helberg \cite{helberg} extended the VT code to correct the multiple insertion or deletion error. The number of insertions/deletions is denoted by s; s = 1, representing that code can correct a single error. Levenshtein \cite{Levenshtein1965BinaryCC} code are based on s = 1. The upper bound is $s \geq d_{min} - 1$, where $d_{min} $ is the minimum Hamming distance between codes. The lower bound is $s \leq \Delta w_{min} - 1$, where $\Delta w_{min}$ is the minimum weight of codewords. Abdel-Ghaffar \cite{le2023new} compares the Helberg code to the Levenshtein code and demonstrates that the Helberg code outperforms the Levenshtein code in terms of error correction capabilities and also provides proof for the same. Tuan \cite{abdelghaffar2012} et al. describe a new construction of multiple insertion and deletion correcting codes for non-binary bits by generalizing Helberg's method for constructing binary codes utilizes number-theoretic principles. Additionally, it outlines a linear decoding technique aimed at rectifying multiple deletions. They also provide theoretical bounds on the code's performance and contrast it with other codes capable of rectifying insertion and deletion errors. The paper concludes with potential applications of these new codes.

DNA storage data consists 4 types nucleotides. So, for each  nucleotide we need mapping of digital bits which can be done by elements from $\Z_4$. In 1994, many non-linear binary optimal codes were constructed using an isometry (called Gray map) from $\Z_4 \rightarrow \Z_2^2$ \cite{Z4nonlinear}. In partiicular, for example, non-linear Kerdock and Preparata codes become linear over $\Z_4$, when transformed from $\Z_2$ \cite{Z4nonlinear}. One natural question that we can ask is the following: Can we construct natural (we call such class of maps as naisargik maps) maps from quaternary space to binary space that can preserve the deletion errror correction property. Out of total 24 possible maps from $\Z_4 \rightarrow \Z_2^2$, we found eight maps for VT codes and only one map for Helberg code having interseting properties. In particular,...  In this paper, our emphasis is directed towards discovering such naisargik maps over VT codes and Helberg codes as these are the powerful single and multiple error correcting codes respectively. There are total  4! maps possible for mapping $\Z_4 \rightarrow \Z_2^2$. Out of which few are naisargik.

The study on $4! = 24$ maps shows that some of these maps gives interesting properties over VT codes and Helberg codes. Properties of VT code after applying map are if delete 2 equal consecutive bits then VT code can be two error correcting code, for one deletion, if the sequences generate the intersecting deletion sphere then these sequences have same number of ones and zeros. VT code holds this properties for eight different maps. As discussed above Helberg code is more powerful than VT code and map enhances the performance of Helberg code as the results shows that s deletion correcting codes over $\Z_{4}$ gives s+1 deletion correcting capability over $\Z_2^2$. Similarly, reverse mapping from $\Z_2^2$ to $\Z_{4}$ gives the s and s-1 deletion error correcting capability. The mathematical proofs and data generated for the observation are discussed thoroughly in below sections. More information about multiple error correction, different types of errors and mapping is discussed in \cite{8022906,erlich2017dna,hu2018optimal,helbergmoredetails,article1more,article2more,article3more}.

The structure of the paper is outlined as follows: Section II covers the theoretical background of the deletion sphere algorithm. Section III introduces the Naisargik map, inverse Naisargik map, and binary codes like reduction code and torsion code. Section IV shows the study of the Naisargik map and the inverse of the Naisargik map on VT code. Section V shows the study of the Naisargik map and the inverse of the Naisargik map on the Helberg code. Section VI concludes the study of Naisargik maps. Section VII contains a reference to the data used for the study. 

\section{Preliminaries and Background}
For a positive non-zero integer $q$, denote the ring of integers modulo q as $\Z_q$. A code $\mathscr{C}$ is a subset of $\Z_q^n$ containing $q^n$ codewords. In addition, if  $\mathscr{C}$ is also an additive subgroup of $\Z_q^n$, we call it a linear code. For insertion/deletion channels, quaternary (q=4) codes are helpful for DNA storage applications \cite{Levenshtein1965BinaryCC}. The deletion distance for $x, y \in \Z_q^n$ can be defined as $d_{e}(x, y)$, i.e. the minimum amount of deletion required for changing $x$ to $y$. $\forall x, y \in \mathscr{C}$, if $d_{e}(x, y) = 0$, this means $x = y$. For any given codeword $x \in \mathscr{C}$, if we remove $s$ symbols, we get a subsequence of $x$. The collection of all such subsequences of $x$ is known as the deletion sphere for the codeword $x$. The deletion sphere for a specific $s$ is defined as:
\begin{equation}
    D_s(x) = \{w | \text{w is an (n-s) length subsequence of x} \}.
\end{equation}
A code $\mathscr{C}$ can correct $s$ errors only if the spheres $D_s(x)$ and $D_s(y)$ do not intersect, i.e., $D_s(x) \cap D_s(y) = \emptyset$. If $\forall \; x, y \in \mathscr{C}$, $x \neq y$ and $D_s(x) \cap D_s(y) = \emptyset$, then the it's called s deletion error correcting code.

For example, consider $\mathscr{C} = \{000101, 010000\}$. If the deletion spheres for both codewords are distinct, we can say that it can correct $s$ errors. If we define the deletion sphere for both sequences for two deletions, then the deletion spheres for 000101 and 010000 are defined as follows:

\begin{align*}
   D_2(000101) &= \{(0101), (0001), (0011), (0010)\},\\
    D_2(010000) &= \{(0000), (1000), (0100)\}.  
\end{align*}

Since $D_2(000101) \cap D_2(010000) = \emptyset$, the codewords are $s = 2$ deletion error correcting.


Refer to Algorithm \ref{sphere} for generating deletion sphere for $s$ deletions. Let us explore the mapping in detail to understand how VT and Helberg codes exhibit interesting properties by combining the deletion sphere as a decoding algorithm and mapping $\Z_4 \rightarrow \Z_2^2$.
\section{Naisargik Maps for $\Z_4\rightarrow \Z_2^2$ and Codes}
Let $\Z_4 = \{0,1,2,3\}$ denote the set of nucleotides of the DNA, each represented by a single element. Correspondingly, let $\Z_2^2 = \{00,01,10,11\}$ denote the set of 2-bit digital data, where each element comprises binary digits 0 and 1. 4! = 24 possible mappings exist between elements of $\Z_4$ and $\Z_2^2$. Among these mappings, nine are identified as Naisargik maps, exhibiting noteworthy properties when applied to the VT and Helberg codes. Table \ref{tab:mapping} presents these Naisargik maps formally.
\renewcommand{\arraystretch}{1.3}
\begin{table}[h]
\centering
\caption{Naisargik maps for $\Z_4\rightarrow \Z_2^2$}
\label{tab:mapping}
\begin{tabular}{|c|c|}
\hline
 $\phi_i$        &   Maps\\ 
 \hline
$\phi_1$ & $\{0\rightarrow 00,\; 1\rightarrow 10,\;2\rightarrow 11,\;3\rightarrow  01\}$ \\ \hline
$\phi_2$ & $\{0\rightarrow 01,\; 1\rightarrow 00,\;2\rightarrow 10,\;3\rightarrow  11\}$ \\ \hline
$\phi_3$ & $\{0\rightarrow 01,\; 1\rightarrow 11,\;2\rightarrow 10,\;3\rightarrow  00\}$ \\ \hline
$\phi_4$ & $\{0\rightarrow 11,\; 1\rightarrow 01,\;2\rightarrow 00,\;3\rightarrow  10\}$ \\ \hline
$\phi_5$ & $\{0\rightarrow 11,\; 1\rightarrow 10,\;2\rightarrow 00,\;3\rightarrow  01\}$ \\ \hline
$\phi_6$ & $\{0\rightarrow 10,\; 1\rightarrow 00,\;2\rightarrow 01,\;3\rightarrow  11\}$ \\ \hline
$\phi_7$ & $\{0\rightarrow 10,\; 1\rightarrow 11,\;2\rightarrow 01,\;3\rightarrow  00\}$ \\ \hline
$\phi_8$ & $\{0\rightarrow 00,\; 1\rightarrow 01,\;2\rightarrow 11,\;3\rightarrow  10\}$ \\ \hline
$\phi_9$ & $\{0\rightarrow 11,\; 1\rightarrow 01,\;2\rightarrow 10,\;3\rightarrow  00\}$ \\ \hline
\end{tabular}
\end{table}

VT code uses eight Naisargik maps from $\phi_1 $ to $\phi_8$. At the same time, the properties of the Helberg code are observed over Naisargik map $\phi_9$. We will use the symbol $\phi$ for Naisargik maps for simplicity. In section IV, $\phi$ represents all the Naisargik maps from $\phi_1 $ to $\phi_8$ and in section V, $\phi$ represents the Naisargik map $\phi_9$. Similarly, $\phi^{-1}$ represents the inverse of all the Naisargik maps. VT code refers $\phi^{-1}$ as the inverse of Naisargik maps from $\phi_1 $ to $\phi_8$ and Helberg code referse $\phi^{-1}$ as inverse of Naisargik map $\phi_9$.

\section{Varshamov-Tenengolts Code (VT Code)}
VT code is a type of error-correcting code in coding theory and information theory, proposed by R. L. Varshamov and G. M. Tenengolts\cite{vtcodeoriginal}. These are a type of linear block, designed to have minimal Hamming distance between code words, allowing them to correct errors effectively.

There exist two distinct categories of VT codes: Binary VT codes and q-ary VT codes. In the context of a binary VT code, a codeword x corresponds to a residue denoted as 'a'. Conversely, in the q-ary VT code, a codeword x corresponds to a residue represented as (a, b). The process of VT coding involves generating residues and their associated sets of codewords. Below are the definitions of these VT codes.
\begin{definition}[Binary VT Code:]\label{Binary_VT_CODE}
    For VT residue $a \in \Z_{n+1}$ the equation of binary version of VT code\cite{nguyen2022bit} is,
\begin{equation}    
VT_a (n) = \{ x \in \Z_2^n : a\; (mod\; n+1) = \sum_{i=1}^n ix_i \}.
\end{equation}
\end{definition}
Every set of VT code corrects unit deletion/insertion error where 0 $\leq$ a $\leq$ n. Each set of residue will have at most $\frac{2^n}{n+1}$ sequences. If n+1 is a power of 2, then the length of the VT residue set is exactly $\frac{2^n}{n+1}$ \cite{vtcodenumbers}.

\begin{definition}[q-ary VT Code]\label{qary_Definition}
    Consider $x \in \mathbb{Z}_q^n$. The monotonicity sequence or signature sequence $\alpha(x)$ of length $n-1$ is defined as follows \cite{nguyen2022bit}:
    \[
    \alpha(x)_i = 
    \begin{cases}
      1, & \text{if } x_i \leq x_{i+1} \\
      0, & \text{otherwise}
    \end{cases}
    \]

For q, n $\geq$ 1, $a \in \Z_n$ and $b \in \Z_q$, q-ary VT code equation is

\begin{equation} 
\label{eq4}
\begin{split}
VT_{a,b} (n; q) &\triangleq \{ x \in \Z_q^n : \alpha(x) \in VT_a(n-1) \\
&\quad \text{and} \; \sum_{i=1}^n x_i \; (\text{mod} \; q) = b \}.
\end{split}    
\end{equation}
\end{definition}

This code will generate output for every possible a and b combination which is of size at least $\frac{q^n}{(qn)}$\cite{nguyen2022bit}.
Let us study the influence of Naisargik maps in the context of VT code defined over $\Z_4$. VT code defined over $\Z_4$ is called the Quaternary VT code.\\

\textbf{Note:} In this section, Naisargik maps $\phi_1$ to $\phi_8$ (Table \ref{tab:mapping}) are referred to as $\phi$.

\subsection{\textbf{Naisargik Image of Quaternary VT Code}}
Consider codeword $x \in \Z_4^n$. Map x with the help of $\phi$ such that $\phi(x) \in \Z_2^{2n}$. Quaternary VT code $VT_{a,b} (n;4)$ is the subset of $\Z_4^n$ and Naisargik image of Quaternary VT code $\phi(VT_{a,b} (n;4))$ is the subset of $\Z_2^{2n}$. Mathametically,
$$VT_{a,b} (n;4) \in \Z_4^n \implies \phi(VT_{a,b} (n;4)) \in \Z_2^{2n}$$

The observation states that Naisargik images of Quaternary VT code give the intersecting sphere for one deletion error. If the codewords x and y belong to the same residue set and give the intersecting deletion sphere, then the x and y have same weight. Let us look at the example of the approach.

\begin{example}
    Consider Quaternary VT code of length n = 4 and $\phi = \phi_8$. From definition \ref{qary_Definition}, generate all the possible residue sets (a,b) and corresponding set of codewords for $VT_{a,b} (4;4)$. The residue values and the number of codewords corresponding to the residue are given in Table \ref{qaryexample}. Consider residue value (1,2) and apply Naisargik map $\phi$ on each codeword corresponding to residue (1,2). Table \ref{tab:qtobexample} shows the codeword and corresponding mapped codeword. 

\begin{table}[h]
\centering
\caption{Naisargik images of $VT_{1,2} (4;4)$ with residue (1,2)}
\label{tab:qtobexample}
\begin{tabular}{|c|c|}
\hline
\textbf{Codewords of $VT_{1,2} (4;4)$} & \textbf{Codewords of $\phi(VT_{1,2} (4;4))$} \\
\hline
(0, 3, 2, 1) & (0, 0, 1, 0, 1, 1, 0, 1) \\
\hline
(1, 0, 0, 1) & (0, 1, 0, 0, 0, 0, 0, 1) \\
\hline
(1, 0, 2, 3) & (0, 1, 0, 0, 1, 1, 1, 0) \\
\hline
(1, 3, 2, 0) & (0, 1, 1, 1, 1, 0, 0, 0) \\
\hline
(2, 0, 0, 0) & (1, 1, 0, 0, 0, 0, 0, 0) \\
\hline
(2, 0, 1, 3) & (1, 1, 0, 0, 0, 1, 1, 0)\\
\hline
(2, 0, 2, 2) & (1, 1, 0, 0, 1, 1, 1, 1)\\
\hline
(2, 1, 1, 2) & (1, 1, 0, 1, 0, 1, 1, 1)\\
\hline
(2, 3, 1, 0) & (1, 1, 1, 0, 0, 1, 0, 0)\\
\hline
(3, 0, 0, 3) & (1, 0, 0, 0, 0, 0, 1, 0)\\
\hline
(3, 0, 1, 2) & (1, 0, 0, 0, 0, 1, 1, 1)\\
\hline
(3, 1, 1, 1) & (1, 0, 0, 1, 0, 1, 0, 1)\\
\hline
(3, 1, 3, 3) & (1, 0, 0, 1, 1, 0, 1, 0)\\
\hline
(3, 2, 2, 3) & (1, 0, 1, 1, 1, 1, 1, 0)\\
\hline
\end{tabular}
\end{table}
\par The decoding algorithm generates a deletion sphere for each codeword. VT code is only one deletion error-correcting, which means it generates a non-intersecting deletion sphere for codewords of the same residue. However, Naisargik images of VT codes generate the intersecting deletion sphere. Table \ref{tab:residuea} shows the deletion sphere of $\phi(VT_{1,2} (4;4))$ which corresponds to residue (1,2).

\par The observation states that taking image of Quaternary VT code using map do not hold the property of deletion error correction. However following proposition holds after mapping. The proposition hold true for the eight maps mentioned in Table \ref{tab:mapping}.

\end{example}

From the table \ref{tab:qtobexample}, we can formulate relations between $x$ and $X = \phi(x)$ and between Quaternary VT code parameters (recall definition \ref{qary_Definition}) $a$, $b$ and $X$. In the following lemmas, we state these relations.

\begin{lemma} \label{lemma: conversion}
    For $\phi = \phi_8$ from Table \ref{tab:mapping}, for any $x \in \Z_4^n$ and $X \in \Z_2^{2n}$ if $X = \phi(x)$, then
    \begin{equation}
        x_i = 3X_{2i-1} + X_{2i} - 2X_{2i-1}X_{2i}
    \end{equation}

    where $x_i$ represents $i^{th}$ bit of sequence x.
    \begin{proof}
        Let's assume the following linear relation for a specific bit $i$:
    \[
        x_i = a + bX_{2i-1} + cX_{2i} + dX_{2i-1}X_{2i}
    \]
    where $a$, $b$, $c$, and $d$ are constants to be determined.
    Now, we get four equations for all four inputs:
    \begin{align*}
        \text{For } X_{2i-1} = 0, \; X_{2i} = 0, \; and \; x_i = 0: & \quad a \cdot a = 0 \\
        \text{For } X_{2i-1} = 0, \; X_{2i} = 1, \; and \; x_i = 1: & \quad a \cdot 1 = c \\
        \text{For } X_{2i-1} = 1, \; X_{2i} = 0, \; and \; x_i = 3: & \quad a \cdot 3 = b \\
        \text{For } X_{2i-1} = 1, \; X_{2i} = 1, \; and \; x_i = 2: & \quad a \cdot -2 = d \\
    \end{align*}
    Solving these four equations for four variables, we obtain $a = 0$, $b = 3$, $c = 1$, and $d = -2$. Therefore, the assumed relation holds true for all possible cases, establishing the validity of the lemma. \\
    \end{proof}
\end{lemma}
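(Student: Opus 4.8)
The plan is to exploit the fact that $\phi_8$ acts coordinate-by-coordinate: each symbol $x_i \in \Z_4$ is sent to the pair $(X_{2i-1}, X_{2i}) \in \Z_2^2$ independently of the remaining coordinates, so that $\phi(x)$ is just the concatenation of the blocks $\phi(x_i)$. It therefore suffices to establish the claimed identity for a single index $i$, treating $X_{2i-1}$ and $X_{2i}$ as free binary variables; the statement for all of $x$ then follows by applying this to every $i$ from $1$ to $n$.

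For the single-coordinate identity I would not guess the polynomial but instead posit the most general multilinear ansatz $x_i = a + b\,X_{2i-1} + c\,X_{2i} + d\,X_{2i-1}X_{2i}$. A multilinear form is the correct template because each binary variable satisfies $X^2 = X$, so every function $\{0,1\}^2 \to \Z$ admits a unique representation of exactly this shape: there are four free coefficients and four points in the domain, hence the interpolation problem is square. I would then read the four defining assignments of $\phi_8$ off Table \ref{tab:mapping}, namely $0 \mapsto 00$, $1 \mapsto 01$, $2 \mapsto 11$, $3 \mapsto 10$, and substitute each pair $(X_{2i-1}, X_{2i})$ together with its target value $x_i$.

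Ordering the evaluations as $(0,0), (0,1), (1,0), (1,1)$ makes the resulting linear system triangular, so it solves by back-substitution with no case splitting: $(0,0)$ forces $a = 0$; then $(0,1)$ gives $c = 1$ and $(1,0)$ gives $b = 3$; finally $(1,1)$ yields $a + b + c + d = 2$, i.e. $d = -2$. Substituting these coefficients produces precisely $x_i = 3\,X_{2i-1} + X_{2i} - 2\,X_{2i-1}X_{2i}$, and a closing check that this expression reproduces all four outputs $0,1,2,3$ confirms the formula.

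There is no genuine obstacle here; the statement is a finite verification dressed up as interpolation. The only point deserving a sentence of justification is why a polynomial identity over $\Z$ with a negative coefficient is legitimate: the right-hand side is evaluated as an ordinary integer, always lands in $\{0,1,2,3\}$, and is then read as an element of $\Z_4$, so no modular reduction intervenes and the uniqueness of the multilinear representation guarantees this is the only formula of its form. Should one wish to bypass the ansatz altogether, the lemma collapses to a four-line table check, so robustness is not a concern.
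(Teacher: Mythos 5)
Your proposal is correct and follows essentially the same route as the paper: posit the multilinear ansatz $x_i = a + bX_{2i-1} + cX_{2i} + dX_{2i-1}X_{2i}$, substitute the four input--output pairs of $\phi_8$, and solve the resulting $4\times 4$ interpolation system to get $a=0$, $b=3$, $c=1$, $d=-2$. If anything, your write-up is cleaner than the paper's (whose displayed equations contain typographical garbling such as ``$a \cdot a = 0$'' where $a+c=1$ etc.\ are meant), and your added remarks on coordinate-wise action, uniqueness of the multilinear representation, and the absence of modular reduction are correct but not strictly needed.
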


\begin{example}
Let's verify Lemma \ref{lemma: conversion} from an example:\\
Let $X = 00101101$ and $X = \phi(x)$ where $x \in \Z_4^4$. \\ \\
According to Lemma \ref{lemma: conversion}:
\[
    x_i = 3X_{2i-1} + X_{2i} - 2X_{2i-1}X_{2i}
\]
For $i = 1$: $x_1 = 3 \times 0 + 0 - 2 \times 0 \times 0 = 0$\\
For $i = 2$: $x_2 = 3 \times 1 + 0 - 2 \times 1 \times 0 = 3$\\
For $i = 3$: $x_3 = 3 \times 1 + 1 - 2 \times 1 \times 1 = 2$\\
For $i = 4$: $x_4 = 3 \times 0 + 1 - 2 \times 0 \times 1 = 1$\\ \\
Hence, the 4-bit quaternary sequence $x$ corresponding to $X = 00101101$ is $x = 0321$.
\end{example}
\begin{lemma}
\label{lemma_ax}
    For $\phi = \phi_8$ from Table \ref{tab:mapping}, $x \in \Z_4^n$ and $X \in \Z_2^{2n}$ if $X = \phi(x)$, then

    \[
    \alpha(x)_i = 
    \begin{cases}
      1, & \text{if } x_i \leq x_{i+1} \\
      0, & \text{otherwise}
    \end{cases}
    \]
    
    Here $\alpha(x)_i$ represents the quaternary VT code parameter from definition \ref{qary_Definition}. We can simplify the above equation to the boolean function given below,
    
    \begin{align*}
        \alpha(x)_i &= 1 - X_{2i-1} - X_{2i} + X_{2i}X_{2i+2}\\
        &+ X_{2i}X_{2i-1} + X_{2i-1}X_{2i+1} - X_{2i-1}X_{2i}X_{2i+2}\nonumber \\ 
        &- X_{2i}X_{2i+1}X_{2i+2} - X_{2i-1}X_{2i}X_{2i+1} \nonumber \\
        &- X_{2i-1}X_{2i+1}X_{2i+2} + 2X_{2i-1}X_{2i}X_{2i+1}X_{2i+2}
    \end{align*}
    \begin{proof}
        For any quaternary sequence x if $x_i \leq x_{i+1}$, then $\alpha(x)_i = 1$. All such cases are shown in Table \ref{tab:alphax1}. \\
        

\begin{table}
\centering
\caption{Generalization of quaternary sequences where $\alpha(x)_i = 1$. Here, $x_i$ represents the $i^{th}$ term of the quaternary sequence x, and $X_i$ represents the $i^{th}$ term of the corresponding binary sequence X. $\alpha, \beta \in \Z_2$. The 'Generalized Form' column denotes the pattern generalized from the binary sequences.}
\label{tab:alphax1}
\begin{tabular}{|c|c|c|p{1.2cm}|}
\hline
\textbf{$x_i$,\; $x_{i+1}$} & \textbf{$X_{2i-1}$,\;$X_{2i}$,\;$X_{2i+1}$,\;$X_{2i+2}$} &  & Generalized Form \\ \hline
0 0 & 0 0 0 0 & \multirow{4}{*}{$\implies$} & \multirow{4}{*}{0 0 $\alpha$ $\beta$} \\ \cline{1-2}
0 1 & 0 0 0 1 &                           &                          \\ \cline{1-2}
0 2 & 0 0 1 1 &                           &                          \\ \cline{1-2}
0 3 & 0 0 1 0 &                           &                          \\ \hline
1 1 & 0 1 0 1 &$\implies$                  & 0 1 0 1                  \\ \hline
1 2 & 0 1 1 1 & \multirow{4}{*}{$\implies$} & \multirow{4}{*}{$\alpha$ 1 1 $\beta$} \\ \cline{1-2}
1 3 & 0 1 1 0 &                           &                          \\ \cline{1-2}
2 2 & 1 1 1 1 &                           &                          \\ \cline{1-2}
2 3 & 1 1 1 0 &                           &                          \\ \hline
3 3 & 1 0 1 0 &$\implies$                       & 1 0 1 0                  \\ \hline
\end{tabular}
\end{table}

        Representing generalized form from table \ref{tab:alphax1} in equation,
        \begin{align*}
            \alpha(x)_i &= (1-X_{2i-1})(1-X_{2i}) \\
                        &+ (1-X_{2i-1})X_{2i}(1-X_{2i+1})X_{2i+2} + X_{2i}X_{2i+1}\\ 
                        & + X_{2i-1}(1-X_{2i})X_{2i+1}(1-X_{2i+2})
        \end{align*}

        On solving above equation,

        \begin{align*}
            \alpha(x)_i &= 1 - X_{2i-1} - X_{2i} + X_{2i}X_{2i+2} + X_{2i}X_{2i-1} \nonumber \\ 
            &+ X_{2i-1}X_{2i+1} - X_{2i-1}X_{2i}X_{2i+2} - X_{2i}X_{2i+1}X_{2i+2} \nonumber \\
            &- X_{2i-1}X_{2i}X_{2i+1} - X_{2i-1}X_{2i+1}X_{2i+2} \nonumber \\ &+ 2X_{2i-1}X_{2i}X_{2i+1}X_{2i+2}
        \end{align*}
    \end{proof}
\end{lemma}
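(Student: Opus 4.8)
The plan is to prove the identity by a direct truth-table (disjunctive-normal-form) argument over the four binary coordinates $X_{2i-1},X_{2i},X_{2i+1},X_{2i+2}$ that encode the consecutive symbols $x_i,x_{i+1}$ under $\phi_8$. First I would record the encoding read off from Table~\ref{tab:mapping}, namely $0\mapsto 00$, $1\mapsto 01$, $2\mapsto 11$, $3\mapsto 10$, so that the pair $(X_{2i-1},X_{2i})$ determines $x_i$ and $(X_{2i+1},X_{2i+2})$ determines $x_{i+1}$. Since each of $x_i,x_{i+1}$ takes only four values, the monotonicity bit $\alpha(x)_i=\mathbf{1}[x_i\le x_{i+1}]$ (recall Definition~\ref{qary_Definition}) is a Boolean function of exactly these four bits, and the claim is precisely the assertion that its unique multilinear representation is the stated polynomial.

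Next I would enumerate the $16$ ordered pairs $(x_i,x_{i+1})$ and single out the $10$ with $x_i\le x_{i+1}$: the four pairs with $x_i=0$, the three with $x_i=1$ and $x_{i+1}\in\{1,2,3\}$, the two with $x_i=2$ and $x_{i+1}\in\{2,3\}$, and the single pair $(3,3)$. Translating each through the encoding reproduces exactly the ten $4$-bit strings listed in Table~\ref{tab:alphax1}. The key structural observation is that these ten strings collapse into four \emph{pairwise disjoint} blocks: $00\alpha\beta$ (the $x_i=0$ row), the isolated string $0101$ (the pair $(1,1)$), $\alpha 11\beta$ (the block $\{(1,2),(1,3),(2,2),(2,3)\}$), and the isolated string $1010$ (the pair $(3,3)$). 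I would verify disjointness explicitly — for instance the first block forces $X_{2i}=0$ whereas $\alpha 11\beta$ forces $X_{2i}=1$, and so on — because disjointness is exactly what guarantees that the sum of the four indicator products is a $\{0,1\}$-valued function equal to the indicator of their union, with no double counting, and that no invalid string is accidentally covered.

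Each block then contributes one product of its fixed coordinates and their complements, giving
\begin{align*}
\alpha(x)_i &= (1-X_{2i-1})(1-X_{2i}) \\
&\quad + (1-X_{2i-1})X_{2i}(1-X_{2i+1})X_{2i+2} + X_{2i}X_{2i+1} \\
&\quad + X_{2i-1}(1-X_{2i})X_{2i+1}(1-X_{2i+2}),
\end{align*}
after which the remaining task is purely algebraic: multiply out and collect into multilinear form. Because the representation of a Boolean function as a multilinear integer polynomial is unique, matching this expansion term-by-term against the claimed expression completes the argument.

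The only genuine obstacle is this final expansion and collection. The quartic coefficient is easy to pin down: only the two length-four product blocks (those for $0101$ and $1010$) reach degree four, and each contributes $+1$ to the monomial $X_{2i-1}X_{2i}X_{2i+1}X_{2i+2}$, which explains the coefficient $+2$. The error-prone part is collecting the quadratic and cubic monomials correctly across all four blocks, where sign mistakes are easy to make. As a safeguard I would confirm that the final polynomial evaluates correctly on all sixteen inputs, or equivalently re-derive it from Lemma~\ref{lemma: conversion} by substituting the closed forms for $x_i$ and $x_{i+1}$ and directly testing the condition $x_i\le x_{i+1}$.
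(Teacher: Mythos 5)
Your proposal follows exactly the paper's route: the same enumeration of the ten pairs with $x_i \le x_{i+1}$ under $\phi_8$, the same grouping into the four pairwise disjoint blocks $00\alpha\beta$, $0101$, $\alpha 11\beta$, $1010$, and the same sum-of-products intermediate expression
$(1-X_{2i-1})(1-X_{2i}) + (1-X_{2i-1})X_{2i}(1-X_{2i+1})X_{2i+2} + X_{2i}X_{2i+1} + X_{2i-1}(1-X_{2i})X_{2i+1}(1-X_{2i+2})$.
Up to that point you and the paper agree, and your added remarks (disjointness of the blocks, uniqueness of the multilinear representation) only make the argument tighter.

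The one step you defer --- ``matching this expansion term-by-term against the claimed expression'' --- is precisely where the argument cannot be completed as stated, and your own proposed safeguard is what exposes it. Expanding the common intermediate expression correctly gives
\begin{align*}
\alpha(x)_i &= 1 - X_{2i-1} - X_{2i} + X_{2i-1}X_{2i} + X_{2i-1}X_{2i+1} + X_{2i}X_{2i+1} + X_{2i}X_{2i+2}\\
&\quad - X_{2i-1}X_{2i}X_{2i+1} - X_{2i-1}X_{2i}X_{2i+2} - X_{2i-1}X_{2i+1}X_{2i+2} - X_{2i}X_{2i+1}X_{2i+2}\\
&\quad + 2X_{2i-1}X_{2i}X_{2i+1}X_{2i+2},
\end{align*}
whereas the polynomial claimed in the lemma (and reproduced at the end of the paper's proof) omits the quadratic monomial $+X_{2i}X_{2i+1}$. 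That monomial comes directly from the $\alpha 11\beta$ block indicator and nothing else in the expansion cancels it. Concretely, on the all-ones input $X_{2i-1}=X_{2i}=X_{2i+1}=X_{2i+2}=1$ (i.e., $x_i = x_{i+1} = 2$, so $\alpha(x)_i$ must equal $1$) the lemma's stated polynomial evaluates to $0$; the same failure occurs at $0110$, i.e., $(x_i,x_{i+1})=(1,3)$. So your method is sound and identical to the paper's, but the term-by-term match you appeal to does not hold: the sixteen-input verification you suggested as a ``safeguard'' is in fact essential, and carrying it out shows that the target formula itself is missing a term rather than that your expansion is wrong.
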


It would be tempting to study if $\phi(VT_{(a,b)}(n, 4))$ also shows deletion error correction property. On observing data, we find \(\nexists X \in \phi(VT_{(a,b)}(n, 4))\) that does not show deletion error correction property. However, we observe an interesting conjecture \ref{p2} from our data.

\subsection{\textbf{Error Correction For Naisargik Image of Quaternary VT Code}}

While comprehensive proof for this assertion is currently unavailable, we have provided partial proof substantiated by an exhaustive examination of data.

\begin{assumption} \label{a1}
     Naisargik map $\phi$ denotes all the maps from $\phi_1$ to $\phi_8$ (Table \ref{tab:mapping}).
\end{assumption}

\begin{conjecture}
  \label{p2}
    $\forall X,Y \in \phi(VT_{(a,b)}(n;4))$ and $D_1(X) \cap D_1(Y) \neq \emptyset$, then $w_X = w_Y$, where w represents the weight of binary sequence.
\end{conjecture}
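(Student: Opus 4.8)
The plan is to combine two elementary observations: a parity invariant of the code $\phi(VT_{a,b}(n;4))$ and a weight constraint forced by the existence of a single common subsequence. Neither ingredient requires the detailed shift/run analysis of single deletions; together they settle the statement directly.

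First I would record the weight constraint coming from the hypothesis $D_1(X)\cap D_1(Y)\neq\emptyset$. Pick any $Z\in D_1(X)\cap D_1(Y)$; then $|Z|=2n-1$ and $Z$ is obtained from $X$ by deleting one bit and from $Y$ by deleting one bit. Each deleted bit is either $0$ or $1$, so $w_Z\in\{w_X-1,w_X\}$ and likewise $w_Z\in\{w_Y-1,w_Y\}$. Consequently both $w_X$ and $w_Y$ lie in $\{w_Z,w_Z+1\}$, which gives $|w_X-w_Y|\le 1$. Thus it suffices to prove that $w_X$ and $w_Y$ have the same parity, since the only even integer in $\{-1,0,1\}$ is $0$.

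Next I would establish the parity invariant. For each map $\phi_j$ in Table~\ref{tab:mapping} one checks the single fact that the Hamming weight of the $2$-bit block $\phi_j(s)$ satisfies $\mathrm{wt}(\phi_j(s))\equiv s+c_j\pmod 2$ for a constant $c_j\in\{0,1\}$ depending only on the map; concretely, every even symbol is sent to a block in $\{00,11\}$ and every odd symbol to a block in $\{01,10\}$ (or vice versa). Summing over the $n$ coordinates, for any $x\in VT_{a,b}(n;4)$ with $X=\phi(x)$ we obtain $w_X=\sum_{i=1}^n \mathrm{wt}(\phi(x_i))\equiv \sum_{i=1}^n x_i + nc_j \equiv b+nc_j \pmod 2$, using the defining congruence $\sum_i x_i\equiv b\pmod 4$ from Definition~\ref{qary_Definition} reduced modulo $2$. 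Hence the parity of $w_X$ is the same for every codeword of $\phi(VT_{a,b}(n;4))$, being fixed by $b$, $n$ and the chosen map. Combining the two steps, since $X$ and $Y$ lie in the same residue class $(a,b)$ the parity invariant gives $w_X\equiv w_Y\pmod 2$, while the common subsequence gives $|w_X-w_Y|\le 1$; together these force $w_X=w_Y$.

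The main conceptual obstacle is not any one calculation but locating the correct invariant: once the weight bound and the parity rule are isolated, the conclusion is immediate, so I suspect the difficulty the authors encountered was in attempting a direct combinatorial analysis of the deletion positions rather than passing to this modular argument. The only point requiring genuine care is the uniform verification that all eight maps $\phi_1,\dots,\phi_8$ obey $\mathrm{wt}(\phi_j(s))\equiv s+c_j\pmod 2$, which is a finite check over the four symbols for each map. I would also emphasize that the argument essentially uses the hypothesis that $X$ and $Y$ share the same residue $(a,b)$: without it the weight parities could differ and the conclusion would fail.
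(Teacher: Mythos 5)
Your proposal is correct, and it does something the paper itself does not: the paper leaves this statement as a \emph{conjecture}, offering only a partial three-step justification (expressing $b_x-b_y$ and $a_x-a_y$ in terms of the binary image and then checking by exhaustive computation up to $n=11$ that the two differences never vanish simultaneously), explicitly calling it an open problem because the expression for $a_x-a_y$ via the signature function $\alpha(x)$ becomes intractable. Your argument sidesteps that entire difficulty. The two ingredients both check out: (i) a common element of $D_1(X)\cap D_1(Y)$ has weight in $\{w_X-1,w_X\}\cap\{w_Y-1,w_Y\}$, so $|w_X-w_Y|\le 1$; and (ii) each of $\phi_1,\dots,\phi_8$ satisfies $\mathrm{wt}(\phi_j(s))\equiv s+c_j\pmod 2$ (block weights $(0,1,2,1)$, $(2,1,0,1)$ give $c_j=0$; $(1,0,1,2)$, $(1,2,1,0)$ give $c_j=1$), so $w_X\equiv b+nc_j\pmod 2$ is constant on $\phi(VT_{a,b}(n;4))$ by reducing $\sum_i x_i\equiv b\pmod 4$ modulo $2$. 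An even difference of absolute value at most $1$ is zero. Note two further points in your favor: your proof never uses the signature parameter $a$ at all, so it establishes the stronger statement that any two codewords sharing only the $b$-congruence and an intersecting one-deletion sphere have equal weight; and the parity criterion cleanly explains the paper's otherwise unexplained split of the Naisargik maps, since $\phi_9$ (block weights $(2,1,1,0)$) is precisely the map in Table~\ref{tab:mapping} that violates $\mathrm{wt}(\phi(s))\equiv s+c\pmod 2$, which is why it is excluded from the VT results. In short, where the paper's parameter-tracking approach stalls, your modular argument closes the conjecture completely; it would be worth writing it up as a theorem replacing the paper's data-driven justification.
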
  

\textit{\textbf{Justification:}} Recall definition \ref{qary_Definition}, a and b are the parameters of Quaternary VT code. Obviously, if sequences $x,y \in VT_{(a,b)}(n;4)$ are in same residue set then the difference of corresponding VT parameters a and b will be zero, meaning $a_x - a_y = 0$ and $b_x - b_y = 0$, where $a_x$ and $b_x$ are the VT parameters of sequence x and $a_y$ and $b_y$ are VT the parameters of sequence y. For example, if $x = 1320$, then $a_x = 1$ and $b_x = 2$. \\
\par \textbf{(Outline)} For $X,Y \in \Z_4^{2n}$, assume $w_X \neq w_Y$ and $D_1(X) \cap D_1(Y) \neq \emptyset$, where $w_X$ and $w_Y$ are the weights of sequence X and Y, respectively and $D_1(X)\; and \;D_1(Y)$ are the deletion spheres after one bit deletion for sequence X and Y respectively. Let $X \in \phi(VT_{(a,b)}(n;4))$. We aim to establish the assertion $Y \notin \phi(VT_{(a,b)}(n;4)) $. If we successfully demonstrate this proposition, it enables us to infer, by contradiction, that for any sequences $X,Y \in \phi(VT_{(a,b)}(n;4))$ such that $D_1(X) \cap D_1(Y) \neq \emptyset$ and $w_X = w_Y$. \\

    Consider two binary sequences X and Y, each of length 2n such that $w_X \neq w_Y$, and $D_1(X) \cap D_1(Y) \neq \emptyset$. \\

    Let $\lambda$ and $\mu$ represent indices of $X$ and $Y$ from 1 to $2n$ respectively. Define sets $L$ and $M$ such that $L = \{\lambda\}$ and $M = \{\mu\}$. $X^L$ represents the sequence obtained after deleting all the bits indexed by $L$, and similarly, $Y^M$ represents the sequence obtained after deleting all the bits indexed by $M$.

    Given that $D_1(X) \cap D_1(Y) \neq \emptyset$, let's assume $\lambda$ and $\mu$ represent indices such that $X^L = Y^M$.

    From definition \ref{qary_Definition}, if $x \in VT_{a,b}(n, 4)$, then 
    \begin{equation}
        \sum_{i=1}^n x_i\;(mod\; q)  = b
    \end{equation}
    
    \begin{equation}
    \alpha(x)_i = 
    \left\{
    \begin{array}{ll}
      \text{1,} & \text{if } x_i \leq x_{i+1} \\
      \text{0,} & \text{otherwise}  \\
    \end{array}
    \right.
    \end{equation}
    We have provided a partial proof in three steps:

    \textbf{Step 1:} Find the difference of VT parameter ($b_x - b_y$).
    
    \textbf{Step 2:} Find the difference of VT parameters ($a_x - a_y$).
    
    \textbf{Step 3:} With exhaustive study of data, demonstrate that both ($a_x - a_y$) and ($b_x - b_y$) are not zero simultaneously. \\

    \textbf{Step1: Finding $b_x - b_y$ }\\ 
    If $b_x$ represents parameter $b$ of $x \in \Z_4^n$, then let $M_b(X)$ be a function of $X$, which we seek to determine that gives the value of $b$, where $X = \phi(x)$.

    Note: In this proof, we have continued discussion on $\phi_8$ from table \ref{tab:mapping}.

    \begin{align*}
        b_x &= M_b(X)\\
            &= \sum_{i=1}^nx_i\;(mod\; 4) 
    \end{align*}

    \textbf{From the Lemma \ref{lemma: conversion},}

    \begin{equation}
         M_b(X) = \sum_{i=1}^n3X_{2i-1} + X_{2i} - 2X_{2i-1}X_{2i}\;(mod\; 4)
    \end{equation}

    Likewise,

    \begin{align*}
        b_y &= M_b(Y)\\
            &= \sum_{i=1}^n3Y_{2i-1} + Y_{2i} - 2Y_{2i-1}Y_{2i}\;(mod\; 4) 
    \end{align*}

    Hence,
    \begin{align}
        b_x - b_y &= M_b(X) - M_b(Y)\;(mod\; 4) \nonumber \\
                  &=\sum_{i=1}^n3X_{2i-1} + X_{2i} - 2X_{2i-1}X_{2i} \nonumber \\
                  &\;-\sum_{i=1}^n3Y_{2i-1} + Y_{2i} - 2Y_{2i-1}Y_{2i}\;(mod\; 4) 
    \end{align}
    For $w_X \neq w_Y$, assume without sacrificing generality that, $w_X = w_Y + 1$. It holds because if $D_1(X) \cap D_1(Y) \neq \emptyset$, it implies that the difference between $w_X$ and $w_Y$ cannot exceed 1. Thus, considering both cases where $w_X - w_Y = 1$ and $w_Y - w_X = 1$ essentially cover all possibilities.
    
    Therefore, assumption $w_X = w_Y + 1$ is right without sacrificing generality. \\

    Now, as $X^L$ = $Y^M$ is true for $L = \{\lambda\}$ and $M = \{\mu \}$. If we assume $\lambda \leq \mu$, then we can write

    \begin{equation}
    Y_i = 
    \left\{
    \begin{array}{ll}
      X_i & \forall i < \lambda \; or \; i>\mu \\
      X_{i+1} & \forall \lambda < i < \mu  \\
    \end{array}
    \right.
    \end{equation}

    Also, as $w_X = w_Y + 1$ and $X^L$ = $Y^M$ $\implies$ $X_{i_k} = 1$ and $Y_{j_k} = 0$.

    Using the above properties, we can simplify equation 6 and reduce it to a function of X only. Hence we can represent $b_x - b_y$ in terms of X only.

    \begin{align}
    \label{equ_bx}
    \nonumber
        b_x - b_y &= 3 - 2X_{i_k+1} - 2X_{i_k+1}\{1 - X_{i_k+2}\}\\
        \nonumber
                  &+ \sum_{i = \frac{i_k+3}{2}}^{\frac{j_k - 1}{2}} 2{X_{2i-1} - X_{2i}} - 2X_{2i}\{X_{2i-1} - X_{2i-1}\}\\
                  &+ 2X_{j_k} - 2X_{j_k+1}\{X_{j_k}\}
    \end{align}

    \textbf{Step2: Finding $a_x - a_y$}

    As defined above,

    \begin{equation*}
    \alpha(x)_i = 
    \left\{
    \begin{array}{ll}
      \text{1,} & \text{if } x_i \leq x_{i+1} \\
      \text{0,} & \text{otherwise}  \\
    \end{array}
    \right.
    \end{equation*}

    \textbf{From lemma \ref{lemma_ax}, }we can rewrite the above equation in terms of X, where X = $\phi(x)$.

    \begin{align}
        \label{equ_alphax}
        \alpha(x)_i &= 1 - X_{2i-1} - X_{2i} + X_{2i}X_{2i+2} + X_{2i}X_{2i-1}\nonumber \\  
        &+ X_{2i-1}X_{2i+1} - X_{2i-1}X_{2i}X_{2i+2} - X_{2i}X_{2i+1}X_{2i+2}\nonumber \\
        &- X_{2i-1}X_{2i}X_{2i+1} - X_{2i-1}X_{2i+1}X_{2i+2}\nonumber \\ 
        &+ 2X_{2i-1}X_{2i}X_{2i+1}X_{2i+2}
    \end{align}

    Hence,
    \begin{equation}
        a_x = \sum_{i=1}^{n-1} i*\alpha(x)_i \;(\bmod \;n)
    \end{equation}

    Similarly,
    \begin{equation}
        a_y = \sum_{i=1}^{n-1} i*\alpha(y)_i \;(\bmod \;n)
    \end{equation}

    Therefore,
    \begin{equation}
        \label{equ_ax}
        a_x - a_y = \sum_{i=1}^{n-1} i*(\alpha(x)_i - \alpha(y)_i) \;(\bmod \;n)
    \end{equation}

    Using the facts, $w_X = w_Y + 1$ and $X^L = Y^M$ for L = $\{\lambda\}$ and M = $\{\mu\}$ and Also $X_{\lambda} = 1$ and $Y_{\mu} = 0$, we can get $a_x - a_y$ in terms of X only. \\

    \textbf{Step3: Exhaustive Study of Data} \\
    \par On exhaustive data analysis for all the possible pairs $x$ and $y$ till $n = 11$, equation \ref{equ_ax} and \ref{equ_bx} both are not zero simultaneously. We find the above statement open problem, as equation \ref{equ_ax} becomes too complex after using the equation \ref{equ_alphax}. Some examples are shown in table \ref{tab:ax-ay_data}. Data for all the possible pairs $x$ and $y$ till $n = 11$ is available at Github \cite{guptalab2024grayvt}. \\

    \begin{table}[h]
    \centering
    \caption{Data analysis of $a_X - a_Y$ and $b_X - b_Y$, where $X, \; Y \in \Z_2^{2n}$, $w_X = w_Y + 1$, and $D_1(X) \cap D_1(Y) \neq \emptyset$. Absolute differences are used to ensure focus on zero values of $a_X - a_Y$ and $b_X - b_Y$, preventing negative values from affecting the analysis.}

    \label{tab:ax-ay_data}
    \resizebox{\columnwidth}{!}{%
    \begin{tabular}{|c|c|c|c|c|}
    \hline
    \textbf{n} & \textbf{$X$} & \textbf{$Y$} & \textbf{$|a_X - a_Y|$} & \textbf{$|b_X - b_Y|$}\\
    \hline
    1 & 10 & 00 & 0 & 3 \\
    2 & 0001 & 0000 & 0 & 1 \\
    3 & 110001 & 100001 & 0 & 3 \\
    4 & 11100001 & 11000001 & 1 & 1 \\
    5 & 1011110111 & 1011101101 & 1 & 1 \\
    6 & 001000111001 & 001000011001 & 0 & 1 \\
    7 & 10111111100001 & 10011111100001 & 0 & 1 \\
    8 & 0001000011101000 & 0000100001101000 & 1 & 1 \\
    9 & 010111011010011101 & 010111010100101101 & 3 & 3 \\
    10 & 11110101011101111011 & 11110101011011110101 & 1 & 1 \\
    \hline
    \end{tabular}%
    }
    \end{table}

    Hence if $x \in VT_{a, b}(n, 4)$ then $y \notin VT_{a, b}(n, 4)$. Hence by method of contradiction, $\forall X,Y \in \phi(VT_{(a,b)}(n;4))$ and $D_1(X) \cap D_1(Y) \neq \emptyset$, then $w_X = w_Y$.

\begin{proposition}  
$\exists X,Y \in \phi(VT_{(a,b)}(n;4))$ such that residue of X = residue of Y and $D_1(X) \cap D_1(Y) \neq \emptyset$, where $\phi$ is any possible permutation out 24.
\end{proposition}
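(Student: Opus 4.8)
The plan is to prove this existence statement constructively: for each of the $24$ bijections $\phi:\Z_4\to\Z_2^2$ I would exhibit two distinct quaternary codewords lying in one common residue class $VT_{(a,b)}(n;4)$ whose binary images are \emph{confusable}, meaning they share a subsequence of length $2n-1$. Since both words are drawn from the same class, the requirement ``residue of $X$ = residue of $Y$'' is automatic, and the problem reduces entirely to producing such a confusable same-class pair. Recall also that $\phi$ being a bijection forces its four images to be exactly $\{00,01,10,11\}$, regardless of which of the $24$ maps we use.

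The basic mechanism I would exploit is elementary: if two binary strings of length $2n$ differ in exactly one coordinate, then deleting that coordinate from each yields the same string, so their single-deletion spheres intersect; more generally, two strings differing only by an adjacent transposition $01\leftrightarrow10$ are confusable, since deleting the two inner bits reaches a common word. The obstruction to using the simplest Hamming-distance-$1$ version is purely algebraic: changing a single quaternary coordinate alters $\sum_i x_i \pmod 4$ and hence the parameter $b$, so $x$ and $y$ cannot remain in the same class. I would therefore seek a two-coordinate local modification $x\to y$ that (i) preserves the block sum modulo $4$, (ii) preserves the entire signature sequence, i.e.\ $\alpha(x)=\alpha(y)$, so that $a=\sum_i i\,\alpha(x)_i$ is unchanged (recall Definition \ref{qary_Definition}), and (iii) turns the corresponding $4$-bit image window into a confusable one.

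Two gadgets should suffice. At a block boundary one replaces an adjacent pair by symbols whose images realize a transposition $01\leftrightarrow10$; writing $q=\phi^{-1}(01)$ and $r=\phi^{-1}(10)$, a short computation shows the sum is preserved exactly when $q-r\equiv 2\pmod 4$, and one checks this holds precisely for the eight VT-Naisargik maps $\phi_1,\dots,\phi_8$ of Table \ref{tab:mapping}. For $\phi_8$, for instance, $x=(0,2,0,0)$ and $y=(1,1,0,0)$ lie in $VT_{(0,2)}(4;4)$ and their images $00110000$ and $01010000$ both contain $0010000$. For the remaining sixteen maps I would instead use a \emph{run} gadget: replace a plateau $cc$ by an ordered pair $de$ with $d\le e$ and $d+e\equiv 2c\pmod 4$, chosen so that the two $4$-bit windows $\phi(c)\phi(c)$ and $\phi(d)\phi(e)$ have a common subsequence of length $3$; padding the plateau with neighbors that respect the two boundary comparisons keeps $\alpha$ literally unchanged. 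For $\phi_9$ one takes $c=2$, $(d,e)=(1,3)$, giving $x=(0,2,2)$ and $y=(0,1,3)$ in $VT_{(0,0)}(3;4)$, whose images $111010$ and $110100$ both contain $11100$.

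The main obstacle is step (iii) together with the simultaneous preservation of both VT parameters: I must guarantee, for every one of the sixteen non-transposition maps, that some admissible plateau substitution yields image windows with a length-$3$ common subsequence while padding symbols can still be chosen to fix all affected signature bits. I would discharge this by a finite check over the (finitely many) maps and the handful of candidate patterns $(c;d,e)$, recording one valid gadget per map; the bookkeeping is halved by the observation that complementing both output bits of $\phi$ (a global bit flip on $\Z_2^{2n}$) preserves the subsequence relation and thus carries any confusable same-class pair to one for the complemented map. The algebraic side is clean, because forcing $\alpha(x)=\alpha(y)$ and equal block sums makes common-class membership immediate; the only genuinely combinatorial content is the length-$3$ common-subsequence condition on $4$-bit windows, which is readily tabulated.
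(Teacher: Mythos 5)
Your strategy is sound and both of your worked examples are correct: for $\phi_8$ the pair $x=(0,2,0,0)$, $y=(1,1,0,0)$ does lie in $VT_{(0,2)}(4;4)$ and the images $00110000$, $01010000$ share $0010000$ after one deletion; for $\phi_9$ the pair $x=(0,2,2)$, $y=(0,1,3)$ lies in $VT_{(0,0)}(3;4)$ and $111010$, $110100$ share $11100$. However, your route is genuinely different from the paper's. The paper proves the proposition by bare exhibition: one pair $X=(1,0,1,0,0,1,0,1)$, $Y=(1,0,1,0,0,0,1,1)$ under $\phi_8$, both of residue $(0,0)$ with intersecting one-deletion spheres, and it delegates all other permutations to computational data on GitHub (incidentally, the paper's stated quaternary preimage $3322$ is a typo for $3311$: under the stated map, $3322$ maps to $10101111$, not to the given $X$). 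You instead build a uniform mechanism: two-coordinate local modifications that provably preserve both VT parameters ($b$ via the sum congruence, $a$ via keeping $\alpha$ fixed), split by the criterion $\phi^{-1}(01)-\phi^{-1}(10)\equiv 2\pmod 4$, which is indeed equivalent to $\{\phi^{-1}(01),\phi^{-1}(10)\}\in\{\{0,2\},\{1,3\}\}$ and so holds for exactly eight of the $24$ bijections, namely $\phi_1,\dots,\phi_8$. What your route buys is the universal quantification over all $24$ maps that the statement actually asserts, together with a structural explanation of why confusable same-class pairs exist; what the paper's route buys is brevity, since an existence claim needs only one witness per map. The caveat is that your proof, as written, is also incomplete: only two of the needed twenty-four instances are exhibited, the signature constraint forces orientation choices you pass over (e.g.\ for $\phi_4$ the transposition gadget must be taken as $x=(3,3)$, $y=(0,2)$, because the other orientation $(2,0)$ versus $(1,1)$ breaks $\alpha(x)=\alpha(y)$ without padding), and the plateau-gadget claim for the remaining sixteen maps is a finite check you describe but do not execute. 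That check does appear to go through (for instance, for the map $0\to00,\,1\to01,\,2\to10,\,3\to11$ the same-class pair $(0,2)$, $(1,1)$ gives images $0010$, $0101$ with common subsequences $001$ and $010$), so your argument is completable with routine tabulation, and once tabulated it would actually establish more than the paper's own proof does.
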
 
\begin{proof}
   For example, consider X = (1,0,1,0,0,1,0,1) and Y = (1,0,1,0,0,0,1,1) for $\phi = \{0\rightarrow 00, 1\rightarrow 01, 2 \rightarrow 11, 3 \rightarrow 10\}$. The corresponding quaternary codewords of x and y are 3322 and 3302, respectively. From definition \ref{qary_Definition}, the residue of 3322 and 3302 is (0,0). Intersection of deletion spheres of x and y is $D_1(x) \cap D_2(y) = \{(1, 0, 1, 0, 0, 0, 1), (1, 0, 1, 0, 0, 1, 1)\}$. Detailed intersection results can be found on Github \cite{guptalab2024grayvt}.

\end{proof}
The results obtained from the study of Naisargik map over Varshamov-Tenengolts (VT) codes, provide valuable insights into the capabilities and limitations of these codes in handling error correction, particularly for single insertion or deletion errors. However, as the complexity of error increases, the effectiveness of VT codes may diminish. This presents an intriguing opportunity to explore the potential of Helberg codes, which have shown promise in addressing the limitations of VT codes by handling multiple insertion or deletion errors. This transition from VT codes to Helberg codes represents a natural progression in the pursuit of enhancing error correction capabilities, and the application of mapping techniques further amplifies the potential for advancements in this domain.
\section{Non-binary Helberg codes}
The Helberg code is an error-correcting code designed to handle multiple insertion or deletion errors during data transmission and storage. It builds upon the foundational principles of coding theory to offer a specialized approach to error correction, particularly in the context of DNA-based data storage. The exploration of Helberg codes holds immense promise in enhancing the fault tolerance of data storage systems, potentially revolutionizing the landscape of error correction mechanisms.

Consider a q-ary bit from $\Z_4$ and a codeword $x= (x_1,..., x_n)$ of length $n$ that belongs to $\Z_q^n$. Each symbol $\x_i$ in $x$ is referred as the $i^{th}$ symbol. Establish $s$ as a positive integer that signifies the maximum permissible deletions in any codeword, and set $r = q - 1$.

For The sequence of weights $v(q, s) = \{v_1(q, s), v_2(q, s), ...\}$. The weight $v_i(q, s)$ is defined recursively as follows:
\begin{equation} \label{eq:1}
v_i(q, s) = 1  + r \sum_{j=1}^{s} v_{i-j}(q, s),
\end{equation}
With $v_i(q, s) = 0$ for $i \leq 0$. From this point forward, we will use $v_i$ as a shorthand for $v_i(q, s)$.
\\
then define the moment of $\Bar{x}$ as
\begin{equation}
    \label{eq:def_M(x)}
    M(x)= \sum_{i=1}^n v_i x_i.
\end{equation}
The truncated codeword $(x)_k = (x_1, \ldots, x_k)$ consists of the first $k$ symbols of $x$. Its moment $M_k(x)$ can also be denoted by $M((x)_k)$. The moment of the symbol $x_i$ is denoted as $M(x_i) = w_i x_i$. These $q$-ary codes can correct multiple insertion and deletion errors.\cite{le2023new}

\begin{definition}
\label{def_5}
The Generalized Helberg code can thus be defined as 
\[
H(n,q,s,a) =\{ (x_1, x_2, \ldots, x_n) \in 
\Z_q^n:
\]
\begin{equation}
\label{helbergdefinition}
\sum_{i=1}^n v_i x_i \equiv a\pmod{m}\, 
\end{equation}
\begin{equation*}
 \therefore  M(x) \equiv a\pmod{m}\,\}
\end{equation*}
where,
\begin{equation*}
v_i = \begin{cases}
0, & \text{for } i \leq 0 \\
1 + r \sum_{j=1}^{s} v_{i-j}(q, s), & \text{for } i > 0
\end{cases}.
\end{equation*}

\begin{equation}
\label{m}
m = r \sum_{i=0}^{s-1} v_{n-i} + 1,
\end{equation}
\end{definition}
In the context of our study, $m$ is basically $v_{n+1}$, where $v_i$ represents a sequence of non-negative integers that are monotonically increasing for all $i > 0$. It is important to note that both $v_i$ and $m$ are functions of $s$.
Furthermore, we introduce an integer $a$, which we refer to as the 'residue'. The residue is associated with a specific moment defined by $m$. 

Lastly, we denote $H(n, q,s, a)$ as a Helberg code of length $n$. This notation is in accordance with the work of Abdelghaffar (2012)\cite{abdelghaffar2012}.

Readers may refer to \cite{abdelghaffar2012,le2023new} for a more comprehensive understanding. Here, is the small example of generating Helberg code with desired parameters.

\begin{example}
Consider parameters  n = 5 ,q = 4 ,s = 3 and a = 0.0. First generate the v from equation \ref{eq:1} and m as $v_{n+1}$.
\begin{align*}
    v = [1, 4, 16, 61, 232] \\
    m = v_{n+1} = 880 
\end{align*}
Put all the values in equation \ref{helbergdefinition} which results in $H(5, 4, 3, 0.0)$. The corresponding codewords are $\{(0, 0, 0, 0, 0), (1, 0, 0, 3, 3), (2, 3, 3, 2, 3)\}$.
\end{example}
Helberg code itself is very powerful optimal code. Naisargik map may enhance the scope of error correcting capability of Helberg code by changin the domain. \\
\textbf{Note}: This section refers Naisargik map $\phi_9$ of Table \ref{tab:mapping} as $\phi$.
\subsection{\textbf{Naisargik Image of Quaternary Helberg Code}}

Consider codeword $x \in \Z_4^n$. Map x with the help of $\phi$ such that $\phi(x) \in \Z_2^{2n}$. Generalized Helberg code $H(n,4,s, a)$ is the subset of $\Z_4^n$ and Naisargik image of generalized Helberg code $\phi(H(n,4,s, a))$ is the subset of $\Z_2^{2n}$. Mathametically,
$$H(n,4,s, a) \in \Z_4^n \implies \phi(H(n,4,s, a)) \in \Z_2^{2n}$$

The observations states that Naisargik map enhance the capability of deletion error correction means if $H(n,4,s, a)$ is s deletion error correcting code, then $\phi(H(n,4,s, a))$ is (s+1) deletion error correcting code. Through thorough testing, we discovered that this method could create codes that can correct deletion errors for lengths up to 12 and for up to 5 deletions. Some of these results are given in Table \ref{tab:my_tabul}. The mathematical proof of the observation is discussed further. The example of the mapping is given below.

\begin{example}
Suppose $n = 4$, $q = 4$, and $s = 1$. Calculate $v =  [ 1.0,  4.0, 13.0, 40.0]$ and $m = 121.0$ with the help of equation \ref{eq:1} and $v_{n+1}$ respectively. Genrate all the residues and corresponding codewords from definition \ref{def_5}. Table \ref{tab:residue_values} shows the number of codewords for each residue value. residue values 40.0 and 13.0 gives the maximum number of codewords which is 5. 

Consider residue value a = 13.0 and apply the Naisargik map $\phi$ over the codewords of a = 13.0. Table \ref{tab:quathelbergexample} show the mapping of $H(4,4,1,13.0)$ to $\phi(H(4,4,1,13.0))$. The Quaternary Helberg codeword of length n changes to length 2n.

\begin{table}[h]
\centering
\caption{Naisargik mapping of $H(4,4,1,13.0)$ to $\phi(H(4,4,1,13.0))$}
\label{tab:quathelbergexample}
\begin{tabular}{|c|c|}
\hline
\textbf{Original Quaternary} & \textbf{Naisargik Mapped} \\
\textbf{Codewords} & \textbf{Binary Codewords} \\
\hline
(0, 0, 1, 0) & (1, 1, 1, 1, 0, 1, 1, 1) \\
\hline
(1, 0, 1, 3) & (0, 1, 1, 1, 0, 1, 0, 0) \\
\hline
(1, 3, 0, 0) & (0, 1, 0, 0, 1, 1, 1, 1) \\
\hline
(2, 3, 0, 3) & (1, 0, 0, 0, 1, 1, 0, 0) \\
\hline
(3, 3, 3, 2) & (0, 0, 0, 0, 0, 0, 1, 0) \\
\hline
\end{tabular}
\end{table}

Decode $\phi(H(4,4,1,13.0))$ using deletion sphere approach shown in  Algorithm \ref{sphere}. Table \ref{tab:deletion_sphere} shows the deletion sphere corresponding to each codeword. From the table it is easy to observe that if senders sends the codeword $ (0, 1, 1, 1, 0, 1, 0, 0)$ and any two random bit deletion occur, then receiver can identify the original codeword as deletion sphere do not intersect with any other codeword within residue a = 13.0.\\

By observing more such cases we came to conclusion that if $H(n,4,s,a)$ is s deletion error correcting code then $\phi(H(n,4,s,a))$ is (s+1) deletion error correcting code. To prove the observation mathematically, following Lemmas are useful. Theorem \ref{thm2} is the actual mathematical proof of the observation.
\end{example}

In the following lemmas, Lemma \ref{lemma_1} and Lemma \ref{lemma_2}, important properties of coefficient $C$ are demonstrated, which are helpful in proving Theorem \ref{thm2}.

The expression $C_i = ((i+1)\ (\bmod\ 2) + 1)v_{\lceil \frac{i}{2} \rceil}$ represents the coefficient at the $i^{th}$ position.

\begin{lemma}
    \label{lemma_1}
    $\forall i,j \in \n$, if $i > j$ then $C_i > C_j$.

\end{lemma}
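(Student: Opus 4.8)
The plan is to prove the strict monotonicity of the coefficients $C_i = ((i+1) \bmod 2 + 1)\, v_{\lceil i/2 \rceil}$ by exploiting the fact that these binary coefficients arise from the Naisargik expansion of the quaternary Helberg weights $v_k$. First I would unpack the definition: for even $i = 2k$ the parity term $((i+1) \bmod 2) = 1$, so $C_{2k} = 2 v_k$, while for odd $i = 2k-1$ the parity term is $0$, so $C_{2k-1} = v_k$. Thus the sequence $C_1, C_2, C_3, C_4, \ldots$ equals $v_1,\, 2v_1,\, v_2,\, 2v_2,\, \ldots$, i.e. each quaternary weight $v_k$ contributes the consecutive pair $(v_k, 2v_k)$.

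\textbf{Reducing to two comparisons.}
Since $i > j$ can be bridged by a chain of unit increments, it suffices to show $C_{i} > C_{i-1}$ for every $i \geq 2$; strict monotonicity of the full sequence then follows by transitivity. I would split this single-step comparison into the two residue cases. When stepping from an odd index $2k-1$ to the even index $2k$, the comparison is $C_{2k} = 2v_k > v_k = C_{2k-1}$, which holds because every $v_k > 0$ for $k \geq 1$ (this positivity is immediate from the recurrence $v_k = 1 + r\sum_{j=1}^s v_{k-j}$ with $r = q-1 \geq 1$ and the initial conditions $v_i = 0$ for $i \leq 0$, giving $v_1 = 1$). When stepping from an even index $2k$ to the odd index $2k+1$, the comparison is $C_{2k+1} = v_{k+1} > 2v_k = C_{2k}$, so the crux is establishing $v_{k+1} > 2 v_k$.

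\textbf{The key weight inequality.}
The main obstacle is the inequality $v_{k+1} > 2 v_k$, which is where the specific structure of the Helberg recurrence enters. I would prove it directly from the recurrence: writing $v_{k+1} = 1 + r\sum_{j=1}^s v_{k+1-j} = 1 + r v_k + r\sum_{j=2}^s v_{k+1-j}$, I note that the leading term already gives $v_{k+1} \geq 1 + r v_k$. For the quaternary case $q = 4$ we have $r = 3$, so $v_{k+1} \geq 1 + 3v_k > 2v_k$, settling the claim with room to spare. (Even in the binary-motivated minimal case $r = 1$, one would need $s \geq 2$ or a short induction using the monotonicity of the $v_k$ themselves to push past the factor $2$; since the paper fixes $q = 4$, the clean bound $r = 3$ suffices and no induction is required.) Assembling the two step-cases gives $C_i > C_{i-1}$ throughout, and transitivity completes the proof.

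\textbf{Remark on robustness.}
I would also verify the boundary: the smallest relevant index is $i = 1$, and the chain begins at $C_1 = v_1 = 1 > 0$, so no degenerate case with a nonpositive coefficient arises. Because the entire argument rests only on $v_k > 0$ and $v_{k+1} > 2v_k$, both of which follow from the recurrence with $r \geq 2$, the result is stable and the exhaustive-data caveats attached to earlier conjectures are not needed here; this lemma admits a fully rigorous proof.
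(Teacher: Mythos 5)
Your proposal is correct and follows essentially the same route as the paper: both reduce the claim to the single-step comparison $C_i > C_{i-1}$, split by the parity of the index, use $v_k > 0$ for the odd-to-even step ($2v_k > v_k$) and the recurrence bound $v_{k+1} \geq 1 + 3v_k > 2v_k$ for the even-to-odd step, and finish by transitivity. Your write-up is somewhat more careful than the paper's (explicit boundary check and the remark on what happens for smaller $r$), but the underlying argument is identical.
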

\begin{proof}
    \textbf{(Outline)} We will prove this statement in two cases to cover the entire range of $i$.
    \begin{itemize}
        \item \textbf{Case 1:} $i$ is even.
        \item \textbf{Case 2:} $i$ is odd. \\
    \end{itemize}

    \textbf{Case 1 (i is even):} $C_i = 2v_\frac{i}{2}$, and $C_{i-1} = v_\frac{i}{2}$. \\ \\Now for $i>0 => v_i > 0$ from \cite{helberg}.

$$\therefore 2v_i > v_i$$.
$$\therefore C_i > C_{i-1}$$.
Therefore, the given lemma is true in this case.\\
\textbf{Case 2 (i is odd):} $C_{i} = v_\frac{i+1}{2}$, and $C_{i-1} = 2v_\frac{i-1}{2}$.\\
Now, $v_i = 1 + 3\sum_{j = 1}^dv_{i-j}$. \\
As $s \geq 1$,
    $$v_i >= 1 + 3v_{i-1}$$.
    $$\therefore v_i > 3v_{i-1}$$
    $$\therefore C_i > C_{i-1}$$
    
Therefore, the given lemma is also true in this case.

As cases 1 and 2 generalize the range of i, the given statement is true for all $\forall i \in \n$.

Therefore we can say that, $\forall i>j$, $C_i > C_j$. Where $i, j > 0$ and $C_i = ((i+1)(\bmod2) + 1)v_{\lceil \frac{i}{2} \rceil}$.
\end{proof}

\begin{lemma}
    \label{lemma_2}
    $$
        C_L - \sum_{i=L-s}^{L-1} C_i \geq 1,
    $$
    where $i, \;L \in \n$.
\end{lemma}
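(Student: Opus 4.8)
The plan is to unfold the coefficient $C_i$ into its two residue classes and reduce everything to the defining Helberg recursion $v_k = 1 + r\sum_{j=1}^s v_{k-j}$, using $r = q-1 = 3$. As already recorded in the proof of Lemma~\ref{lemma_1}, an odd index $i = 2j-1$ gives $C_{2j-1} = v_j$ while an even index $i = 2j$ gives $C_{2j} = 2v_j$. The identity that drives the argument is $C_{2j-1} + C_{2j} = (1+2)v_j = 3v_j = r\,v_j$: the pairing factor $1+2$ coincides with $r$ precisely because $q = 4$, and this coincidence is what will let the window sum $\sum_{i=L-s}^{L-1} C_i$ cancel exactly against the recursion for $v_{\lceil L/2\rceil}$.

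First I would split on the parity of $L$, and within each on the parity of $s$, since both the value of $C_L$ and the alignment of the length-$s$ window depend on them. For $L = 2k-1$ odd we have $C_L = v_k$, and for $L = 2k$ even we have $C_L = 2v_k$. In each case the $s$ indices $L-s,\dots,L-1$ separate into the odd ones (each contributing a single $v_j$) and the even ones (each contributing $2v_j$); collecting these over their contiguous ranges rewrites the window as an expression of the shape $c\,v_{k-p} + 3\sum_{j} v_{k-j}$, where $c \in \{0,1,2\}$ is a boundary coefficient and the factor $3$ is exactly the paired contribution $C_{2j-1}+C_{2j}$ noted above.

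Next I would substitute $v_k = 1 + 3\sum_{j=1}^{s} v_{k-j}$ into $C_L$ and cancel the overlapping $3\sum v$ portion between the recursion and the window. In every case the cancellation leaves precisely $1$ plus a sum of $v_j$'s with nonnegative coefficients over the indices the window did not consume; for example, in the case $L$ even and $s$ even one is left with $1 + v_{k-s/2} + 3\sum_{m > s/2} v_{k-m}$, and the other three cases reduce analogously to $1$ plus a nonnegative tail. Since $v_i \ge 0$ for all $i$ (and $v_i = 0$ for $i \le 0$), the residual is nonnegative, which yields $C_L - \sum_{i=L-s}^{L-1} C_i \ge 1$.

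The routine but error-prone part, and the main obstacle, is the index bookkeeping: fixing the exact endpoints of the odd- and even-indexed subsequences in all four parity combinations, and handling the boundary cases where $k - p \le 0$ so that some $v$'s vanish. These boundary cases do not threaten the inequality, because the dropped terms are nonnegative and the additive constant $1$ always survives the cancellation, but they must be checked explicitly rather than assumed. Once the four reductions are carried out, the bound follows immediately from the nonnegativity of the $v_i$.
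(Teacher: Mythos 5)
Your proposal is correct and follows essentially the same route as the paper's proof: split on the parity of $L$, expand via the Helberg recursion $v_k = 1 + 3\sum_{j=1}^{s} v_{k-j}$, use the pairing identity $C_{2j-1}+C_{2j}=3v_j$ to play the recursion terms off against the window $\sum_{i=L-s}^{L-1} C_i$, and finish by nonnegativity of the $v_i$. The only difference is organizational: the paper compares sums of $C_i$'s directly, so a containment of index ranges ($\{L-s,\dots\}\subseteq\{L-2s-1,\dots\}$ or $\{L-2s,\dots\}$) settles both parities of $s$ at once and only two cases are needed, whereas your conversion of the window back into $v$-terms forces the additional split on the parity of $s$.
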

\begin{proof}
    \textbf{(Outline)} We will prove this statement in two cases to cover the entire range of $L$.
    \begin{itemize}
        \item \textbf{Case 1:} $L$ is even.
        \item \textbf{Case 2:} $L$ is odd. \\
    \end{itemize}
    \textbf{Case 1: Let L be even} 
    $$C_L = 2v_{\frac{L}{2}}\;and\; C_{L-1}=v_{\frac{L}{2}}$$.
    \begin{align*}
        v_{\frac{L}{2}} &= 1 + 3\sum_{i=1}^dv_{\frac{L}{2}-i}\\
                        &= 1 + 3v_{(\frac{L}{2}-1)} + 3v_{(\frac{L}{2}-2)}+...+3v_{(\frac{L}{2}-s)}\\
                        &= 1 + C_{L-2} + C_{L-3} +...+ C_{L-2s} + C_{L-2s-1}\\
                        &= 1 + \sum_{i=L-2s-1}^{L-2}C_i
    \end{align*}
    Now,
    \begin{align*}
        C_L - \sum_{i=L-s}^{L-1} C_i &= C_L - C_{L-1} - \sum_{i=L-s}^{L-2}C_i\\
                                    &= 2v_{\frac{L}{2}} - v_{\frac{L}{2}} - \sum_{i=L-s}^{L-2}C_i\\
                                    &= 2v_{\frac{L}{2}} - v_{\frac{L}{2}} - \sum_{i=L-s}^{L-2}C_i\\
                                    &= v_{\frac{L}{2}} - \sum_{i=L-s}^{L-2}C_i\\
                                    &= 1 + \sum_{i=L-2s-1}^{L-2}C_i - \sum_{i=L-s}^{L-2}C_i
    \end{align*}    
    Consider,
    \begin{align*}
        L-2s-1 &= L-s-(s+1)\\
                &< L-s \;\;\;\;\;\;\;\;(\because s\geq 1)
    \end{align*}
    So, we can say that $\forall i, C_i\geq 0$. So,
    $$\sum_{i=L-2s-1}^{L-2}C_i \geq \sum_{i=L-s}^{L-2}C_i$$
    \begin{equation}
        \label{case 1}
        \therefore C_L - \sum_{i=L-s}^{L-1} C_i \geq 1
    \end{equation}
    
    \textbf{Case 2: If L is odd}
    $$C_L = v_{\frac{L+1}{2}}$$
    Now, 
    \begin{align*}
        v_{\frac{L+1}{2}} &= 1 + 3\sum_{i=1}^dv_{\frac{L+1}{2}-i}\\
                        &= 1 + 3v_{\frac{L-1}{2}}+3v_{\frac{L-3}{2}}+...+3v_{\frac{L+1-2s}{2}}\\
                        &= 1 + C_{L-1} + C_{L-2} + C_{L-3} +...+ C_{L+1-2s} + C_{L-2s}\\
                        &= 1 + \sum_{i=L-2s}^{L-1}C_i
    \end{align*}
    \begin{align*}
        \therefore C_L - \sum_{i=L-s}^{L-1} C_i &= v_{\frac{L+1}{2}} - \sum_{i=L-s}^{L-1} C_i\\
        &= 1 + \sum_{i=L-2s}^{L-1}C_i - \sum_{i=L-s}^{L-1} C_i
    \end{align*}
    As $s \geq 1$, 
    $$\sum_{i=L-2s}^{L-1}C_i \geq \sum_{i=L-s}^{L-1} C_i\;(\because L-2s \leq L-s)$$.
    
    \begin{equation}
        \label{case2}
        \therefore C_L - \sum_{i=L-s}^{L-1} C_i \geq 1
    \end{equation}
    Equation \ref{case 1} and \ref{case2} proves that for any positive L, our Lemma holds true.
    \begin{align}
        \therefore C_L - \sum_{i=L-s}^{L-1} C_i \geq 1
    \end{align}
\end{proof}

\begin{assumption}
\label{helberassumption}
    The Naisargik map $\phi$ is refered as $\phi_9$ from the Table \ref{tab:mapping} which is $\phi:\{0\rightarrow11,1\rightarrow01,2\rightarrow10,3\rightarrow00\}$.
\end{assumption}

From Table \ref{tab:quathelbergexample} and under the given assumption \ref{helberassumption}, we can establish relations between $x$ and $X = \phi(x)$, where $x \in \Z_4^n$ and $X \in \Z_2^{2n}$. In the following lemma \ref{lemma_3}, we formally state this relation.

\begin{lemma}
    \label{lemma_3}
    For any $X \in \Z_2^{2n}$, if we say $X = \phi(x)$ where $x \in \Z_4^{n}$, then
    \[
        x_i = 3 - X_{2i-1} - 2X_{2i}
    \]
    Here, $x_i$ represents the $i$-th bit of $x$, and similarly, $X_i$ represents the $i$-th bit of $X$, where $1 \leq i \leq n$.
\end{lemma}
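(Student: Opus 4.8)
The plan is to verify the claimed formula $x_i = 3 - X_{2i-1} - 2X_{2i}$ directly against the definition of the map $\phi = \phi_9$, which by Assumption \ref{helberassumption} sends $\{0\rightarrow 11,\; 1\rightarrow 01,\; 2\rightarrow 10,\; 3\rightarrow 00\}$. Since $\phi$ acts symbol-by-symbol, it suffices to check a single coordinate: for each $x_i \in \{0,1,2,3\}$ the map produces the pair $(X_{2i-1}, X_{2i}) \in \Z_2^2$, and I would tabulate all four cases and confirm that the affine expression $3 - X_{2i-1} - 2X_{2i}$ reproduces $x_i$ in every case.

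Concretely, I would proceed as follows. First I would read off the four input/output pairs from the definition of $\phi_9$: $x_i = 0$ gives $(X_{2i-1},X_{2i}) = (1,1)$; $x_i = 1$ gives $(0,1)$; $x_i = 2$ gives $(1,0)$; and $x_i = 3$ gives $(0,0)$. Next I would substitute each pair into the right-hand side: for $(1,1)$ we get $3 - 1 - 2 = 0$; for $(0,1)$ we get $3 - 0 - 2 = 1$; for $(1,0)$ we get $3 - 1 - 0 = 2$; and for $(0,0)$ we get $3 - 0 - 0 = 3$. Since each output matches the corresponding $x_i$, the formula holds for every symbol, and because the map is applied independently to each coordinate it holds for all $1 \leq i \leq n$.

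An alternative, slightly more structural way to arrive at the same formula — which I would mention if space allows — is to posit a general affine ansatz $x_i = c_0 + c_1 X_{2i-1} + c_2 X_{2i}$ (no product term is needed here, unlike in Lemma \ref{lemma: conversion}, because one can check the four outputs are an affine function of the two bits), and solve the resulting linear system from the four defining pairs. This mirrors the technique already used in Lemma \ref{lemma: conversion} and yields $c_0 = 3$, $c_1 = -1$, $c_2 = -2$ uniquely.

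I do not anticipate any genuine obstacle: the statement is a finite verification over the four elements of $\Z_4$, and the only thing to be careful about is the indexing convention, namely that the $i$-th quaternary symbol corresponds to the bit pair $(X_{2i-1}, X_{2i})$ rather than $(X_{2i}, X_{2i+1})$, and that we are reading the two output bits in the correct order as specified by the table. Once that convention is fixed, the proof is a one-line case check, and the worked \emph{Example} following Lemma \ref{lemma: conversion} already illustrates how such a formula is to be validated on concrete codewords.
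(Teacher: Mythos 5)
Your proposal is correct and is essentially the same argument as the paper's: both reduce to a finite check of the affine formula over the four symbol values of $\phi_9$. In fact, the ``alternative'' you mention is precisely the paper's written proof — it posits $ax_i = bX_{2i-1} + cX_{2i} + d$, extracts one equation per input pair, and solves to obtain $a=1$, $b=-1$, $c=-2$, $d=3$, which is the same four-case verification packaged as solving a linear system.
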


\begin{proof}
    Let's assume the following linear relation for a specific bit $i$:
    \[
        ax_i = bX_{2i-1} + cX_{2i} + d
    \]
    where $a$, $b$, $c$, and $d$ are constants to be determined.

    Now, we get four equations for all four inputs:
    \begin{align*}
        \text{For } X_{2i-1} = 0, \; X_{2i} = 0, \; and \; x_i = 3: & \quad a \cdot 3 = d \\
        \text{For } X_{2i-1} = 0, \; X_{2i} = 1, \; and \; x_i = 1: & \quad a \cdot 1 = b - c + d \\
        \text{For } X_{2i-1} = 1, \; X_{2i} = 0, \; and \; x_i = 2: & \quad a \cdot 2 = -b + d \\
        \text{For } X_{2i-1} = 1, \; X_{2i} = 1, \; and \; x_i = 0: & \quad a \cdot 0 = -b - c + d 
    \end{align*}
    
    Solving these four equations for four variables, we obtain $a = 1$, $b = -1$, $c = -2$, and $d = 3$. Therefore, the assumed relation holds true for all possible cases, establishing the validity of the lemma.
\end{proof}

\begin{example}
Let's verify Lemma \ref{lemma_3} from an example:\\
Let $X = 10001100$,
According to Lemma \ref{lemma_3}:
\[
    x_i = 3 - X_{2i-1} - 2X_{2i}
\]
For $i = 1$: $x_1 = 3 - 1 - 2 \times 0 = 2$\\
For $i = 2$: $x_2 = 3 - 0 - 2 \times 0 = 3$\\
For $i = 3$: $x_3 = 3 - 1 - 2 \times 1 = 0$\\
For $i = 4$: $x_4 = 3 - 0 - 2 \times 0 = 3$\\ \\
Hence, the 4-bit quaternary sequence $x$ corresponding to $X = 10001100$ is $x = 2303$. We can also verify this from table \ref{tab:quathelbergexample}.
\end{example}

\begin{theorem}  
\label{thm2}
If H(n, 4, s, a) is s error correcting code then $\phi(H(n, 4, s, a))$ is $(s+1)$ deletion error-correcting binary code. Where, $\phi(H(n, 4, s, a))$ is Naisargik image of quaternary Helberg code.

As per research conducted by Levenshtein \cite{Levenshtein1965BinaryCC}, $\phi(H(n,4,s, a))$ can correct up to $(s+1)$ indels.
\end{theorem}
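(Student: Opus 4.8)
The plan is to reduce the $(s+1)$-deletion question for the binary image $\phi(H(n,4,s,a)) \subseteq \Z_2^{2n}$ to a single moment congruence, and then run a Helberg-style moment argument on the resulting binary weight sequence. First I would substitute the coordinate identity of Lemma \ref{lemma_3}, namely $x_i = 3 - X_{2i-1} - 2X_{2i}$, into the quaternary moment $M(x) = \sum_{i=1}^n v_i x_i$ from Definition \ref{def_5}. Because the terms $v_i X_{2i-1}$ and $2v_i X_{2i}$ are precisely $C_{2i-1}X_{2i-1}$ and $C_{2i}X_{2i}$ for the interleaved weights $C_{2i-1}=v_i$ and $C_{2i}=2v_i$ introduced before Lemma \ref{lemma_1}, the moment collapses to $M(x) = 3\sum_{i=1}^n v_i - \sum_{j=1}^{2n} C_j X_j$. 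Hence $X=\phi(x)$ lies in $\phi(H(n,4,s,a))$ if and only if $M'(X):=\sum_{j=1}^{2n} C_j X_j \equiv a' \pmod m$, where $a' = 3\sum_{i} v_i - a$ and $m$ is the \emph{unchanged} modulus of equation \ref{m}. In other words, the image is itself a number-theoretic binary code with weight sequence $(C_j)_{j=1}^{2n}$ and modulus $m$.

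Next I would prove the correction property for this binary code directly. Suppose, for contradiction, that distinct $X,Y \in \phi(H(n,4,s,a))$ have intersecting $(s+1)$-deletion spheres, so that they share a common subsequence $Z$ of length $2n-(s+1)$. Viewing $X$ and $Y$ as the results of inserting $s+1$ bits into $Z$ and tracking how each deleted or shifted coordinate contributes to the moment, I would express $M'(X)-M'(Y)$ as a signed sum governed by at most $s+1$ blocks of consecutive weights $C_j$. The two structural facts needed are supplied by the preceding lemmas: strict monotonicity of $(C_j)$ (Lemma \ref{lemma_1}), which ensures the contribution of a deletion is dominated by the weight at its position, and the window inequality of Lemma \ref{lemma_2}, which controls cancellations so that $M'(X)-M'(Y)$ cannot coincide with a nonzero multiple of $m$. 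Combining these I would derive $0 < |M'(X)-M'(Y)| < m$, contradicting $M'(X)\equiv M'(Y)\pmod m$; therefore $X=Y$, and the code corrects $s+1$ deletions.

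The main obstacle is squeezing out the extra deletion, that is, explaining why the image tolerates $s+1$ rather than merely $s$ bit-deletions. This is where the doubling structure $C_{2i}=2C_{2i-1}$ together with the inter-block gap $C_{2i+1}=v_{i+1}\ge 1+3v_i$ (valid since $s\ge 1$) does the work: it renders $(C_j)$ superincreasing over a window one longer than the $s$-window of Lemma \ref{lemma_2}, while keeping the inherited modulus $m=r\sum_{i=0}^{s-1}v_{n-i}+1$ at least as large as the separation threshold required for $(s+1)$-deletion images under the weights $C_j$. I expect the delicate bookkeeping to be the bound $|M'(X)-M'(Y)|<m$ when the $s+1$ deleted bits straddle block boundaries, since a deletion can then split a pair $(v_i,2v_i)$ and the naive per-deletion estimate is not tight; grouping these split contributions and re-applying Lemma \ref{lemma_1} and Lemma \ref{lemma_2} is the crux of the argument. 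Once $(s+1)$-deletion correction is established, the final sentence of Theorem \ref{thm2} follows from Levenshtein's equivalence \cite{Levenshtein1965BinaryCC} between $t$-deletion-correcting and $t$-insertion/deletion-correcting codes, which promotes the conclusion to correction of $s+1$ indels.
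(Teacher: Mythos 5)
Your proposal follows essentially the same route as the paper's own proof: substitute Lemma \ref{lemma_3} into the moment to obtain the interleaved weights $C_j = ((j+1)\ (\bmod\ 2)+1)v_{\lceil j/2\rceil}$, reduce $(s+1)$-deletion correction of the image to the bound $0 < |\Delta| < m$ via the monotonicity of Lemma \ref{lemma_1} and the window inequality of Lemma \ref{lemma_2} (with the slack $s+1 \leq 2s$ supplying the extra deletion), and finish with Levenshtein's deletion/indel equivalence. The only cosmetic difference is that you package the image as a standalone binary congruence code with residue $a' = 3\sum_i v_i - a$, whereas the paper carries out the identical computation on $\Delta(x,y) = M(x) - M(y)$ for the quaternary preimages.
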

\begin{proof}
Let $x, y \in \Z_4^n$. If $X = \phi(x)$ and $Y = \phi(y)$, then it follows that $X, Y \in \Z_2^{2n}$.

Let $D, E \subseteq \{1, 2, \ldots, 2n\}$ such that $|D| = |E| = s + 1$, and let $X^D = (X_{i_1}, \ldots, X_{i_{n'}})$ where $n' = 2n - |D|$. If we delete all the bits from $X$ indexed by $D$, we get $X^D$ and all the bits from $Y$ indexed by $E$, we get $Y^E$, and these sequences will be the same, i.e., $X^D = Y^E$. \\

Suppose, $\Delta (x,y) = M(x) - M(y)$, where, $ M(x) = \sum_{i=1}^n v_i x_i$ and $v_i$ is the $i^{th}$ Helberg coefficient given in defination \ref{def_5}.\\

\begin{equation}
    \label{maincond}
    0 < |\Delta (x,y)| < m
\end{equation}

Here $m = v_{n+1}$ as defined in defination \ref{def_5}. If the above condition is satisfied then x and y belong to different Helberg codes \cite{le2023new}.

From Lemma \ref{lemma_3} results, we can replace $x_i$ as following
$$x_i = 3-X_{2i-1}-2X_{2i}$$

\begin{align*}
    Consider,\;M(x) &= \sum_{i=1}^n v_i x_i\\
         &= \sum_{i=1}^n v_i(3-X_{2i-1}-2X_{2i})\\
         &= \sum_{i=1}^n 3v_i - \sum_{i=1}^{2n}((i+1) (mod\;2) + 1)v_{\lceil \frac{i}{2} \rceil}(X_i)
\end{align*}
Similarly,
$$M(y) = \sum_{j=1}^n 3v_j - \sum_{j=1}^{2n}((j+1)(\bmod\;2) + 1)v_{\lceil \frac{j}{2} \rceil}(Y_j)$$
Hence,
\begin{align*}
    \Delta (x,y) &= M(x) - M(y)\\ 
    &= \sum_{i=1}^n 3v_i - \sum_{i=1}^{2n}((i+1)(\bmod\; 2) + 1)v_{\lceil \frac{i}{2} \rceil}(X_i)\\ 
    &\;\;\;\;- \sum_{j=1}^n 3v_j + \sum_{j=1}^{2n}((j+1)(\bmod\; 2) + 1)v_{\lceil \frac{j}{2} \rceil}(Y_j)\\
    &= \sum_{j=1}^{2n}((j+1)(\bmod\; 2) + 1)v_{\lceil \frac{j}{2} \rceil}(Y_j)\\
    &\;\;\;\;- \sum_{i=1}^{2n}((i+1)(\bmod\; 2) + 1)v_{\lceil \frac{i}{2} \rceil}(X_i)
\end{align*}
Write $C_i = ((i+1)(\bmod\;2) + 1)v_{\lceil \frac{i}{2} \rceil}$.
we can write $\Delta (x,y)$ as the difference between different bits indexed by D and E with common bits in both.

\begin{align}
    \label{main_equation_equal}
    \therefore \Delta (x,y) &= \sum_{j\in E} C_jY_j - \sum_{i\in D} C_iX_i + \sum_{k=1}^{n'} (C_{j_k} - C_{i_k})X_{i_k}
\end{align}

Where $k \in S(n')$ and $S(n')$ = \{1, 2, ..., 2n\}. $i_k$ and $j_k$ replicates those indices whose bit values are equal in X and Y i.e. $X_{i_k} = Y_{j_k}\; \forall k \in S(n')$.

\textbf{Outline:} Now, we will prove $0 < \Delta (x,y) < m$ in two steps. In the first step, we will prove that the upper bound as $m$, and in the next step, we will prove the lower bound.

\begin{itemize}
    \item \textbf{Step 1:} Prove the upper bound as $m$.
    \item \textbf{Step 2:} Prove the lower bound as 0. \\
\end{itemize}

\textbf{Step 1:} $\Delta (x,y) < m$
\begin{align*}
    \Delta (x, y) &\leq \sum_{j \in E}C_jY_j + \sum_{k=1}^{n'}(C_{j_k}-C{i_k})X_{i_k}
\end{align*}

We proceed by dividing $S(n') = \{1, 2, \ldots, n'\}$ into two subsets: one containing elements $k$ where $j_k \leq i_k$, and the other containing elements where $j_k > i_k$.

\begin{align*}
    \Delta (x,y) &\leq \sum_{j \in E}C_jY_j + \sum_{\substack{k\in S(n')\\ j_k>i_k}}(C_{j_k}-C{i_k})X_{i_k}\\
                &+ \sum_{\substack{k\in S(n')\\ j_k\leq i_k}}(C_{j_k}-C_{i_k})X_{i_k}
\end{align*}
From Lemma \ref{lemma_1}, if $i_k>j_k$, then $C_{i_k}>C_{j_k}$ $\forall i_k, j_k \in \n$.

\begin{align*}
\therefore \sum_{\substack{k\in S(n')\\ j_k\leq i_k}}(C_{j_k}-C_{i_k})X_{i_k} &\leq 0\\ \\
\therefore \Delta (x,y) \leq \sum_{j \in E}C_jY_j + \sum_{\substack{k\in S(n')\\ j_k>i_k}}&(C_{j_k}-C_{i_k})X_{i_k}
\end{align*}

For maximum value, putting $Y_j = X_{i_k} = 1$. As Y and X are binary strings.

\begin{align*}
        \therefore \Delta (X,Y) &\leq \sum_{j \in E}C_j + \sum_{\substack{k\in S(n')\\ j_k>i_k}}(C_{j_k}-C_{i_k})\\
                &= \sum_{j\in E} C_j + \sum_{\substack{k\in S(n')\\ j_k>i_k}} C_{j_k} - \sum_{\substack{k\in S(n')\\ j_k>i_k}} C_{i_k}\\
                &\;\;\;\;+ \sum_{\substack{k\in S(n')\\ j_k\leq i_k}} C_{j_k} -  \sum_{\substack{k\in S(n')\\ j_k\leq i_k}} C_{j_k}\\
                &= \sum_{j=1}^{2n}C_j - \sum_{k\in S(n')}C_{min(i_k,j_k)}\\
                &\leq \sum_{j=1}^{2n}C_j - \sum_{k=1}^{n'}C_k \;\;\;\;\;\;(\because n'\; terms)\\
                &\leq \sum_{i=n'+1}^{2n}C_i\\
                &\leq \sum_{i=2n-s}^{2n}C_i\\
                &= \sum_{i=1}^{s+1} C_{2n+1-i}
\end{align*}
\begin{equation}
    \label{eq:13}
   \therefore \Delta (x,y) \leq \sum_{i=1}^{s+1} C_{2n+1-i}
\end{equation}
Now, we know that $C_i = ((i+1)\%2 + 1)v_{\lceil \frac{i}{2} \rceil}$.
\[
    C_i= 
\begin{cases}
    2v_{\lceil \frac{i}{2} \rceil},& \text{if i is even}\\
    v_{\lceil \frac{i}{2} \rceil},              & \text{if i is odd}
\end{cases}
\]
\begin{align*}
    3v_i &= C_{2i-1} + C_{2i}\\
    3\sum_{j=1}^dv_{n+1-j} &=\sum_{i=1}^{2s}C_{2n+1-i}\\
    m &= 1 + \sum_{i=1}^{2s}C_{2n+1-i}
\end{align*}

Now, $s+1\leq 2s$, $\forall s \in \n$,
\begin{align*}
    \sum_{i=1}^{2s}C_{2n+1-i} \geq \sum_{i=1}^{s+1}C_{2n+1-i}\\
\end{align*}
\begin{equation}
    \label{eq:14}
    \therefore m>\sum_{i=1}^{s+1}C_{2n+1-i}
\end{equation}

Therefore, from equations \ref{eq:13} and \ref{eq:14}
$$ 
    \Delta (x,y) < m
$$

\textbf{Step 2:} $\Delta (x,y) > 0$\\ \\
Without loss of generality, we may assume $L \in \{1,2,..,2n\}$, such that $Y_L > X_L$ and $X_i = Y_i = 0\; \forall i>L$ as proved in \cite{le2023new}. Now, $Y_L = 1$ and $X_L = 0$ because X and Y are in binary.\\

Position of $L$ with respect to sets $D$ and $E$, we can break this problem into four different cases as follows:

\begin{itemize}
    \item \textbf{Case I:} $L \in D \cap E$.
    \item \textbf{Case II:} $L \in E - D$.
    \item \textbf{Case III:} $L \in D - E$.
    \item \textbf{Case IV:} $L \notin D \cup E$. \\
\end{itemize}

\textbf{Case I: $L \in D \cap E$.}\\ \\
From, equation \ref{main_equation_equal}, \\
\begin{align*}
    \therefore \Delta (x,y) &= \sum_{j\in E} C_jY_j - \sum_{i\in D} C_iX_i + \sum_{k=1}^{n'} (C_{j_k} - C_{i_k})X_{i_k}
\end{align*}

On observing the first two terms of the above equation, and putting $Y_L$ and $X_L$ values.

\begin{align*}
    \sum_{j\in E} C_jY_j - \sum_{i\in D} C_iX_i &= C_L + \sum_{\substack{j\in E\\j \leq L-1}}C_jY_j - \sum_{\substack{i\in D\\i \leq L-1}}C_iX_i\\
    &\geq C_L - \sum_{\substack{i\in D\\i \leq L-1}}C_iX_i
\end{align*}
And the last term $\sum_{k=1}^{n'} (C_{j_k} - C_{i_k})X_{i_k}$ can be written in the form of addition of positive and negative terms because of lemma \ref{lemma_1}.
\begin{align*}
    \sum_{k=1}^{n'} (C_{j_k} - C_{i_k})X_{i_k} &= \sum_{\substack{k \in S(n')\\ j_k \geq i_k}} (C_{j_k} - C_{i_k})X_{i_k} \\
    &\;\;\;\;+ \sum_{\substack{k \in S(n')\\ j_k < i_k}} (C_{j_k} - C_{i_k})X_{i_k}\\
    & \geq \sum_{\substack{k \in S(n')\\ j_k < i_k \\i_k \leq L-1}} (C_{j_k} - C_{i_k})X_{i_k}
\end{align*}
Combining all the terms,
\begin{equation}
    \label{mainequation}
    \Delta (x,y) \geq C_L - \sum_{\substack{i\in D\\i \leq L-1}}C_iX_i + \sum_{\substack{k \in S(n')\\ j_k < i_k \\i_k \leq L-1}} (C_{j_k} - C_{i_k})X_{i_k}
\end{equation}
For minimum value putting $X_i = X_{i_k} = 1$,
\begin{align*}
    \Delta (x,y) &\geq C_L - \sum_{\substack{i\in D \\ i\leq L-1}}C_{i} - \sum_{\substack{k\in S(n')\\j_k<i_k \\ i_k\leq L-1}}C_{i_k} - \sum_{\substack{k\in S(n')\\j_k\geq i_k \\ i_k\leq L-1}}C_{i_k} \\
    &\;\;\;+ \sum_{\substack{k\in S(n')\\j_k<i_k \\ i_k\leq L-1}}C_{j_k} + \sum_{\substack{k\in S(n')\\j_k\geq i_k \\ i_k\leq L-1}}C_{i_k}\\
    &= C_L - \sum_{i=1}^{L-1}C_i + \sum_{\substack{k\in S(n')\\ i_k\leq L-1}} C_{min(i_k,j_k)} \\
    &\Biggl (\because \sum_{\substack{i\in D \\ i\leq L-1}}C_{i} + \sum_{\substack{k\in S(n')\\j_k<i_k \\ i_k\leq L-1}}C_{i_k} + \sum_{\substack{k\in S(n')\\j_k\geq i_k \\ i_k\leq L-1}}C_{i_k} = \sum_{i=1}^{L-1}C_i \Biggr ) 
\end{align*}

From Lemma \ref{lemma_1}, $C_{i_k} > C{j_k}$, if $i_k > j_k$
\begin{align*}
    \therefore \Delta (x,y) &\geq C_L - \sum_{i=1}^{L-1}C_i + \sum_{k=1}^{min(n',L-1)}C_k\\
                &\geq C_L - \sum_{i=min(n',L-1)+1}^{L-1}C_i\\
                &\geq C_L - \sum_{i=L-s}^{L-1} C_i
\end{align*}
Consider $L\leq 2n = n' + s+1 \;and\; n' \geq L-(s+1)$.
\begin{equation}
    \Delta (x,y) \geq C_L - \sum_{L-s}^{L-1} C_i
\end{equation}

From lemma \ref{lemma_2}, $C_L - \sum_{i=L-s}^{L-1} C_i \geq 1$,\\
\begin{align}
    \therefore \Delta(x,y) \geq 1 \implies \Delta(x,y) > 0
\end{align}

\textbf{Case II: $L \in E-D$.}\\
From, equation \ref{main_equation_equal}, \\
\begin{align*}
    \Delta(x,y) &= \sum_{j\in E} C_jY_j - \sum_{i\in D} C_iX_i + \sum_{k=1}^{n'} (C_{j_k} - C_{i_k})X_{i_k}\\
    &=C_L - \sum_{\substack{i \in D \\ i \leq L-1}}C_iX_i + \sum_{\substack{k \in S(n')\\ j_k < i_k \\i_k \leq L-1}} (C_{j_k} - C_{i_k})X_{i_k}\\
    &\;\;\;\;+\sum_{\substack{j \in E \\ j \leq L-1}}C_jY_j + \sum_{\substack{k \in S(n')\\ j_k \geq i_k \\i_k \leq L-1}} (C_{j_k} - C_{i_k})X_{i_k}\\
\end{align*}
So,
$$
    \Delta (x,y) \geq C_L - \sum_{\substack{i\in D\\i \leq L-1}}C_iX_i + \sum_{\substack{k \in S(n')\\ j_k < i_k \\i_k \leq L-1}} (C_{j_k} - C_{i_k})X_{i_k}
$$

Above equation is same as equation \ref{mainequation} received in case 1. Therefore, by same approach, for this case,

$$\Delta(x,y) > 0$$

\textbf{Case III: $L \in D-E$}
It is easy to observe that if we swap the D and E in case ii, then it will become case iii. For case iii also,

$$\Delta(x,y) > 0$$

\textbf{Case IV: $L \notin D\cup E$}\\
Consider $\exists\; K \in S(n'),\;j_K = L$ such that $ Y_L > X_L \implies Y_{j_K}>X_{j_K}$ and $X_i = Y_i = 0,\;\forall i > L$. $i_K \leq j_K - 1$ such that, $X_{i_K} = Y_{j_K}$.

From equation \ref{main_equation_equal}, 
$$\Delta (x,y) = C_L - \sum_{\substack{i\in D\\i \leq L-1}}C_iX_i + \sum_{\substack{k \in S(n')\\ j_k < i_k \\i_k \leq L-1}} (C_{j_k} - C_{i_k})X_{i_k}$$
Now $\forall k \in S(n'),\;X_{i_k}=Y_{j_k}$,
\begin{align*}
    \Delta (x,y) &\geq - \sum_{\substack{i\in D\\i \leq L-1}}C_iX_i + \sum_{k\in S(n')}(C_{j_k} - C_{i_k})Y_{j_k}\\
    &= - \sum_{\substack{i\in D\\i \leq L-1}}C_iX_i + \sum_{\substack{k\in S(n')\\ j_k <i_K}}(C_{j_k} - C_{i_k})Y_{j_k}\\
    &\;\;\;\;+ \sum_{\substack{k\in S(n')\\ j_k \geq i_K}}(C_{j_k} - C_{i_k})Y_{j_k}\\
    &\geq - \sum_{\substack{i\in D\\i \leq L-1}}C_iX_i + \sum_{\substack{k\in S(n')\\ j_k <i_K}}(C_{j_k} - C_{i_k})Y_{j_k}\\
    &\;\;\;\;+(C_L - C_{i_k})Y_L
\end{align*}
Put $Y_L = 1, X_{i_k} = Y_{j_k}$
\begin{equation}
    \Delta (x,y) \geq C_L - \sum_{\substack{i\in D\\i \leq L-1}}C_iX_i + \sum_{\substack{k \in S(n')\\ j_k < i_k \\i_k \leq L-1}} (C_{j_k} - C_{i_k})X_{i_k} - C_{i_k}
\end{equation}

Above equation is similar to equation \ref{mainequation}. So, it follows the case i. approach. Hence, for this case,

$$\Delta(x,y) > 0$$

As these four cases covers generality, till this step we have proved $0 < \Delta(x,y) < m$.

It is easy to prove $0<\Delta(y,x)<m$ by following the same process and only reversing y and x's roles. By combining all the results,

\begin{equation*}
    0<|\Delta(x,y)|<m
\end{equation*}

Since, in this case above condition is satisfied x and y must belong to different Hlberg Codes. In getting of this result, out initial assumption was $X^D = Y^E$.

Thus if $X^D = Y^E$ is true, then x and y must be from different Helberg codes. And since $X^D$ is at $(s+1)$ deletion distance from X, $\phi(H(n, 4, s, a))$ becomes $(s+1)$ deletion correction code.

As per research conducted by Levenshtein \cite{Levenshtein1965BinaryCC}, $\phi(H(n,4,s, a))$ can correct up to $(s+1)$ indels.
\end{proof}
Hence, Naisargik map can enhance the deletion error correcting capability of Helbercode. Likewise inverse Naisargik map can also extend the scope of deletion error correction for Helberg code. Next subsection discusses about the influence of inverse Naisargik map over binary Helberg code.
\subsection{\textbf{Inverse Naisargik Image Of Binary Helberg Code}}
Inverse Naisargik map will map the codeword as $\Z_2^{2n} \rightarrow \Z_4^n$. Consider $x \in \Z_2^{2n}$. Inverse Naisargik map $\phi^{-1}$ will map x to $\phi^{-1}(x) \in \Z_4^n$. All the binary Helberg codewords of length 2n can be generated from definition \ref{def_5}. H(2n,2,s,a) contains possible residue values a and set of codewords corresponding to a. $\phi^{-1}$ will map H(2n,2,s,a) to $\phi^{-1}(H(2n,2,s, a))$. Mathematically,
$$H(2n,2,s, a) \in \Z_2^{2n} \rightarrow \phi^{-1}(H(2n,2,s, a)) \in \Z_4^n$$

The example below takes the feasible values of the required parametrs and briefs about the process of mapping the Helberg code with Naisargik Helberg code.

\begin{example}
Suppose the Helberg parameters are n = 10, q = 2 and s = 2. Generate all the residues and corresponding codeword set with equation \ref{eq:1}. Calculate the values $v=[ 1, 2, 4, 7, 12, 20, 33, 54, 88, 143]$ and m = 232.0 from the equation \ref{eq4} and \ref{m} respectively. Table \ref{tab:residue_values_rev} shows the number of codewords for each residue value. It is easy to observe that residue value 66.0 gives the maximum number of codewords. 

For residue value a = 66.0, apply inverse Naisargik map $\phi^{-1}$ on corresponding codewords and generate the images. Table \ref{tab:my_label_rev} shows the inverse Naisargik images of H(10, 2, 2, 66.0).

\begin{table}[h]
\centering
\caption{Inverse Naisargik mapping of H(10, 2, 2, 66.0)}
\label{tab:my_label_rev}
\begin{tabular}{|c|c|}
\hline
\textbf{Codewords of H(10, 2, 2, 66.0)} & \textbf{Inverse Naisargik image} \\
\hline
(0, 0, 0, 0, 1, 0, 0, 1, 0, 0) & (3, 3, 2, 1, 3) \\
\hline
(0, 1, 0, 0, 1, 1, 1, 0, 1, 1) & (1, 3, 0, 2, 0) \\
\hline
(0, 1, 1, 1, 0, 0, 0, 1, 1, 1) & (1, 0, 3, 1, 0) \\
\hline
(0, 1, 1, 1, 0, 1, 1, 0, 0, 0) & (1, 0, 1, 2, 3) \\
\hline
(1, 0, 0, 0, 1, 0, 0, 1, 1, 1) & (2, 3, 2, 1, 0) \\
\hline
(1, 0, 0, 0, 1, 1, 1, 0, 0, 0) & (2, 3, 0, 2, 3) \\
\hline
(1, 0, 1, 1, 0, 0, 0, 1, 0, 0) & (2, 0, 3, 1, 3) \\
\hline
(1, 1, 1, 1, 0, 1, 1, 0, 1, 1) & (0, 0, 1, 2, 0) \\
\hline
\end{tabular}
\end{table}
Using Algorithm \ref{sphere}, generate deletion sphere for all the codewords of H(10, 2, 2, 66.0) mentioned in the Table \ref{tab:my_label_rev}. Table \ref{tab:deletion_sphere_rev} shows the deletion spheres for inverse Naisargi images of H(10, 2, 2, 66.0). Suppose after sending one of the codewords $(2, 3, 2, 1, 0)$, received codeword is missing one bit of codeword. Received codeword of length (n-1) can still be retrieved because it is not intersecting with the deletion sphere of any other codeword in residue 66.0. \\
\end{example}
By observing more such cases we came to conclusion that if $H(n,4,s,a)$ is s deletion error correcting code then $\phi^{-1}(H(n,4,s,a))$ is (s / 2) deletion error correcting code. To prove the observation mathematically, following Lemmas are useful. Theorem \ref{Theorem-3} is the actual mathematical proof of the observation.

From Table \ref{tab:my_label_rev} and under the given assumption \ref{helberassumption}, we can establish relations between $X$ and $x = \phi^{-1}(x)$, where $x \in \Z_4^n$ and $X \in \Z_2^{2n}$. In the following lemma \ref{lemma: lemma_4}, we formally state this relation.

\begin{lemma}
    \label{lemma: lemma_4}
    For any $x \in \Z_4^{n}$, if we say $x = \phi^{-1}(X)$ where $X \in \Z_2^{2n}$, then
    \begin{align*}
        X_{2i-1} &= (x_i + 1)(\bmod\;2) \; and \\
        X_{2i} &= 1 - \lfloor\frac{x_{i}}{2}\rfloor
    \end{align*}
    
    Here, $x_i$ represents the $i$-th bit of vector $x$, and similarly, $X_i$ represents the $i$-th bit of vector $X$. $i$ goes from 1 to n.
\end{lemma}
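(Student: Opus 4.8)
The plan is to establish the two component formulas by realizing $\phi^{-1} = \phi_9^{-1}$ as an explicit bijection on symbols and verifying that the proposed expressions reproduce it on each of the four possible symbol values. Since $\phi = \phi_9$ acts independently on each coordinate, sending $x_i \in \Z_4$ to the pair of bits $(X_{2i-1}, X_{2i})$, it suffices to prove the identities for a single index $i$; the full statement then follows coordinatewise for $1 \le i \le n$.

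Two natural routes are available. The first is direct case verification: from Assumption \ref{helberassumption}, $\phi_9$ maps $0 \mapsto 11$, $1 \mapsto 01$, $2 \mapsto 10$, $3 \mapsto 00$, so I would tabulate $(x_i, X_{2i-1}, X_{2i})$ over the four inputs and check that $(x_i+1)\bmod 2$ and $1 - \lfloor x_i/2 \rfloor$ return the correct bit in every row. The second route, which I prefer because it reuses machinery already in the paper, is to invert the forward relation of Lemma \ref{lemma_3}, namely $x_i = 3 - X_{2i-1} - 2X_{2i}$. Rearranging gives $X_{2i-1} + 2X_{2i} = 3 - x_i$, and since $X_{2i-1}, X_{2i} \in \{0,1\}$ this exhibits $3 - x_i$ as a two-bit integer whose low-order bit is $X_{2i-1}$ and whose high-order bit is $X_{2i}$. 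Reading off the bits then yields $X_{2i-1} = (3 - x_i)\bmod 2$ and $X_{2i} = \lfloor (3 - x_i)/2 \rfloor$.

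The remaining step is to simplify these two expressions into the form stated in the lemma. For the low bit, $3 - x_i \equiv x_i + 1 \pmod 2$ because their difference $2 - 2x_i$ is even, so $X_{2i-1} = (x_i+1)\bmod 2$. For the high bit, I would verify $\lfloor (3 - x_i)/2 \rfloor = 1 - \lfloor x_i/2 \rfloor$ across $x_i \in \{0,1,2,3\}$; both sides equal $1$ when $x_i \in \{0,1\}$ and $0$ when $x_i \in \{2,3\}$, which gives $X_{2i} = 1 - \lfloor x_i/2 \rfloor$ and completes the derivation.

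There is no substantive obstacle here: the statement is a finite bijection on four symbols, so correctness is guaranteed once every case is checked. The only point demanding care is the bookkeeping between the floor/mod expressions and the tabulated bit values, in particular confirming that the parity and floor identities hold on the full range $\{0,1,2,3\}$ rather than on a subrange, and ensuring the bit-order convention matches the one fixed in Assumption \ref{helberassumption} so that $X_{2i-1}$ (and not $X_{2i}$) is the low-order bit of $3 - x_i$.
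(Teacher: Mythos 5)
Your proposal is correct, and in fact it contains the paper's own argument as your first route: the paper proves this lemma by exactly the four-case verification you describe, tabulating $x_i \in \{0,1,2,3\}$ against the bit pairs fixed by Assumption \ref{helberassumption} and checking that $(x_i+1)\bmod 2$ and $1-\lfloor x_i/2\rfloor$ reproduce $X_{2i-1}$ and $X_{2i}$ in every case. Your preferred second route is genuinely different in flavor: rather than verifying the formulas, you derive them by inverting Lemma \ref{lemma_3}, writing $X_{2i-1}+2X_{2i}=3-x_i$ and reading off $X_{2i-1}$ and $X_{2i}$ as the low- and high-order bits of $3-x_i$, then simplifying $(3-x_i)\bmod 2$ to $(x_i+1)\bmod 2$ and $\lfloor(3-x_i)/2\rfloor$ to $1-\lfloor x_i/2\rfloor$. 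What that buys is an explanation of where the stated expressions come from and an explicit consistency check between the forward map (Lemma \ref{lemma_3}) and its inverse, at the cost of making this lemma depend on the earlier one; the paper's case check is self-contained and needs no prior result. Since the statement concerns a bijection on four symbols, both arguments are complete, and your attention to the bit-order convention (that $X_{2i-1}$ is the low-order bit of $3-x_i$) is exactly the one point where the derivation could have gone wrong.
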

\begin{proof}
    We'll verify the lemma for all possible cases:
    \begin{enumerate}
        \item For $X_{2i-1} = 1$, $X_{2i} = 1$, and $x_i = 0$:
        \begin{align*}
            X_{2i-1} &= (0 + 1) \bmod 2 = 1 \\
            X_{2i} &= 1 - \left\lfloor\frac{0}{2}\right\rfloor = 1
        \end{align*}
        \item For $X_{2i-1} = 0$, $X_{2i} = 1$, and $x_i = 1$:
        \begin{align*}
            X_{2i-1} &= (1 + 1) \bmod 2 = 0 \\
            X_{2i} &= 1 - \left\lfloor\frac{1}{2}\right\rfloor = 1
        \end{align*}
        \item For $X_{2i-1} = 1$, $X_{2i} = 0$, and $x_i = 2$:
        \begin{align*}
            X_{2i-1} &= (2 + 1) \bmod 2 = 1 \\
            X_{2i} &= 1 - \left\lfloor\frac{2}{2}\right\rfloor = 0
        \end{align*}
        \item For $X_{2i-1} = 0$, $X_{2i} = 0$, and $x_i = 3$:
        \begin{align*}
            X_{2i-1} &= (3 + 1) \bmod 2 = 0 \\
            X_{2i} &= 1 - \left\lfloor\frac{3}{2}\right\rfloor = 0
        \end{align*}
    \end{enumerate}
    Thus, the lemma is verified for all cases.
\end{proof}

\begin{example}
Let's verify Lemma \ref{lemma: lemma_4} from an example:\\
Let $x = 23210$. According to Lemma \ref{lemma: lemma_4}:
\begin{align*}
    X_{2i-1} &= (x_i + 1) \bmod 2 \\
    X_{2i} &= 1 - \left\lfloor\frac{x_i}{2}\right\rfloor
\end{align*}
For $i = 1$: $x_1 = 2$, hence $X_1 = (2 + 1) \bmod 2 = 1$ and $X_2 = 1 - \left\lfloor\frac{2}{2}\right\rfloor = 0$.\\
For $i = 2$: $x_2 = 3$, hence $X_3 = (3 + 1) \bmod 2 = 0$ and $X_4 = 1 - \left\lfloor\frac{3}{2}\right\rfloor = 0$.\\
For $i = 3$: $x_3 = 2$, hence $X_5 = (2 + 1) \bmod 2 = 1$ and $X_6 = 1 - \left\lfloor\frac{2}{2}\right\rfloor = 0$.\\
For $i = 4$: $x_4 = 1$, hence $X_7 = (1 + 1) \bmod 2 = 0$ and $X_8 = 1 - \left\lfloor\frac{1}{2}\right\rfloor = 1$.\\
For $i = 5$: $x_5 = 0$, hence $X_9 = (0 + 1) \bmod 2 = 1$ and $X_{10} = 1 - \left\lfloor\frac{0}{2}\right\rfloor = 1$.\\ \\

Hence, the 10-bit binary sequence $X$ corresponding to $x = 23210$ is $X = 1000100111$. We can also verify this from Table \ref{tab:my_label_rev}.
\end{example}

\begin{lemma}
    \label{lemma: lemma_5}
    $$
        v_{2L-1} - \sum_{i = L-\lfloor\frac{d}{2}\rfloor+1}^{L-1}(v_{2i-1} + v_{2i}) \geq 1
    $$
    where, $v_i$ is defined in definition \ref{def_5} and $L, d \in \n$.
\end{lemma}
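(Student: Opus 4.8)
The plan is to reduce the inequality to the defining recurrence of the Helberg weights together with their nonnegativity. In this subsection we work with the binary Helberg code, so $q=2$, $r=1$, and $v_i = 1 + \sum_{j=1}^{s} v_{i-j}$ with $v_i=0$ for $i\le 0$. First I would expand the single term on the left through its recurrence, $v_{2L-1} = 1 + \sum_{j=1}^{s} v_{2L-1-j} = 1 + \sum_{j=2L-1-s}^{2L-2} v_j$, so that $v_{2L-1}$ exceeds $1$ by exactly the block of $s$ consecutive weights with indices $2L-1-s,\dots,2L-2$.

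Next I would collapse the subtracted double sum into a single contiguous block. Writing $m=\lfloor d/2\rfloor$, the pairs $v_{2i-1},v_{2i}$ for $i=L-m+1,\dots,L-1$ are precisely the weights whose indices run contiguously from $2L-2m+1$ up to $2L-2$; hence $\sum_{i=L-m+1}^{L-1}(v_{2i-1}+v_{2i}) = \sum_{j=2L-2m+1}^{2L-2} v_j$. Both this block and the recurrence block end at $2L-2$, so the whole comparison reduces to comparing their starting indices.

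The key step is a containment of index ranges: I would verify that $[2L-2m+1,\,2L-2]\subseteq[2L-1-s,\,2L-2]$, which is equivalent to $2L-1-s\le 2L-2\lfloor d/2\rfloor+1$, i.e.\ $2\lfloor d/2\rfloor\le s+2$. Taking $d=s$ (the deletion parameter of the code), this holds for every $s$, since $2\lfloor s/2\rfloor\le s\le s+2$ regardless of the parity of $s$. Because each $v_j\ge 0$, the subtracted block is then dominated term-by-term by the recurrence block, so $\sum_{j=2L-2m+1}^{2L-2}v_j\le\sum_{j=2L-1-s}^{2L-2}v_j$. Substituting into the expansion of $v_{2L-1}$ gives $v_{2L-1}-\sum_{i=L-m+1}^{L-1}(v_{2i-1}+v_{2i})\ge 1$, as claimed. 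The lone boundary case is $m\le 1$, where the subtracted sum is empty and the claim collapses to $v_{2L-1}\ge 1$, which holds since $v_i\ge 1$ for all $i>0$.

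I expect the only real difficulty to be bookkeeping rather than mathematics: correctly folding the two-index sum $\sum_i(v_{2i-1}+v_{2i})$ into one contiguous range of indices and confirming the range inclusion in the odd-$s$ case as well as the even one. Unlike the proof of Lemma \ref{lemma_2}, no parity split on $L$ is required, because the left-hand index $2L-1$ is odd for every $L$ and the weights $v_j$ carry no parity-dependent prefactor; the same estimate in fact survives for any $r\ge 1$, since one may drop the factor $r\ge 1$ from the recurrence before comparing blocks.
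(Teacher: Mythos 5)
Your proof is correct and follows essentially the same route as the paper's: collapse the pair sum $\sum_i (v_{2i-1}+v_{2i})$ into a contiguous block of weight indices ending at $2L-2$, observe that this block is contained in the block $\{2L-1-s,\dots,2L-2\}$ produced by the recurrence for $v_{2L-1}$, and conclude from nonnegativity of the $v_j$. The only difference is cosmetic: the paper splits on the parity of $d$ to count the collapsed terms explicitly ($d-2$ versus $d-3$) before padding up to $d$ terms, whereas you absorb the parity into the single containment inequality $2\lfloor d/2\rfloor \le s+2$; both arguments rely on the same implicit identification $d=s$, which you state explicitly.
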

\begin{proof}
    \textbf{(Outline)} We will prove this statement in two cases to cover the entire range of d.
    \begin{itemize}
        \item \textbf{Case 1:} d is even.
        \item \textbf{Case 2:} d is odd. \\
    \end{itemize}
    \textbf{Case 1: d is even.}

    \begin{align*}
        \sum_{i = L-\lfloor\frac{d}{2}\rfloor+1}^{L-1}(v_{2i-1} + v_{2i}) &= \sum_{i = L-\frac{d}{2}+1}^{L-1}(v_{2i-1} + v_{2i}) \\
        &= \sum_{i=2L-d+1}^{2L-2} v_i \\
        &= \sum_{i=1}^{d-2} v_{2L-1-i} \leq \sum_{i=1}^{d} v_{2L-1-i}\\
        &\leq (v_{2L-1} - 1)
    \end{align*}

    \begin{equation}
        \label{equ: case_1}
        \therefore v_{2L-1} - \sum_{i = L-\lfloor\frac{d}{2}\rfloor+1}^{L-1}(v_{2i-1} + v_{2i}) \geq 1
    \end{equation}
    
    Therefore, for this case, lemma holds true.

    \textbf{Case 2: d is odd.}

    \begin{align*}
        \sum_{i = L-\lfloor\frac{d}{2}\rfloor+1}^{L-1}(v_{2i-1} + v_{2i}) &= \sum_{i = L-(\frac{d-1}{2})+1}^{L-1}(v_{2i-1} + v_{2i}) \\
        &= \sum_{i=2L-d+2}^{2L-2} v_i \\
        &= \sum_{i=1}^{d-3} v_{2L-1-i} \leq \sum_{i=1}^{d} v_{2L-1-i}\\
        &\leq (v_{2L-1} - 1)
    \end{align*}

    \begin{equation}
        \label{equ: case_2}
        \therefore v_{2L-1} - \sum_{i = L-\lfloor\frac{d}{2}\rfloor+1}^{L-1}(v_{2i-1} + v_{2i}) \geq 1
    \end{equation}
    
    As proved, from equations \ref{equ: case_1} and \ref{equ: case_2}, this lemma holds true for all $d \in \n$.

    \begin{equation}
        \therefore v_{2L-1} - \sum_{i = L-\lfloor\frac{d}{2}\rfloor+1}^{L-1}(v_{2i-1} + v_{2i}) \geq 1
    \end{equation}
\end{proof}
\begin{theorem}
\label{Theorem-3}

If H(2n, 2, s, a) is s error correcting code then $\phi^{-1}(H(2n,2,s,a))$ is $\lfloor\frac{s}{2}\rfloor$ deletion error-correcting quaternary code. Where, $\phi^{-1}(H(2n,2,s,a))$ is inverse Naisargik image of binary Helberg code.

As per research conducted by Levenshtein \cite{Levenshtein1965BinaryCC}, $\phi^{-1}(H(2n,2,s,a))$ can correct up to $\lfloor\frac{s}{2}\rfloor$ indels.
 

\end{theorem}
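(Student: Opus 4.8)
The plan is to prove the contrapositive in the deletion-sphere language of Section~II, exploiting the one structural feature of $\phi^{-1}$ that matters: since $\phi$ is a bijection $\Z_4\to\Z_2^2$ applied symbol-by-symbol, deleting a single quaternary symbol $x_i$ from $x$ is \emph{identical} to deleting the consecutive bit-pair $(X_{2i-1},X_{2i})$ from $X=\phi(x)$. First I would fix two distinct codewords $x,y\in\phi^{-1}(H(2n,2,s,a))$ and set $X=\phi(x)$, $Y=\phi(y)$; because $\phi$ is injective, $X$ and $Y$ are distinct codewords of the $s$-deletion-correcting code $H(2n,2,s,a)$.

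Next I would suppose, toward a contradiction, that the quaternary spheres $D_{\lfloor s/2\rfloor}(x)$ and $D_{\lfloor s/2\rfloor}(y)$ intersect, i.e.\ there are index sets $D,E$ with $|D|=|E|=\lfloor s/2\rfloor$ and a common subsequence $w=x^{D}=y^{E}$. Pushing this through $\phi$ and using the pair-deletion correspondence above gives
\begin{equation*}
    X^{D'}=\phi(w)=Y^{E'},\qquad |D'|=|E'|=2\lfloor s/2\rfloor,
\end{equation*}
with $D'=\{\,2i-1,2i : i\in D\,\}$ and $E'$ defined analogously. Since $2\lfloor s/2\rfloor\le s$, the word $\phi(w)$ lies in $D_{2\lfloor s/2\rfloor}(X)\cap D_{2\lfloor s/2\rfloor}(Y)$.

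The last step converts this into a violation of the hypothesis. I would record the elementary monotonicity fact that an $s$-deletion-correcting code is automatically $t$-deletion-correcting for every $t\le s$: a common subsequence produced by $t$ deletions can be shortened by $s-t$ further arbitrary deletions into a common subsequence produced by $s$ deletions, so disjointness of the radius-$s$ spheres forces disjointness of the radius-$t$ spheres. Applying this with $t=2\lfloor s/2\rfloor\le s$, the nonempty intersection above forces $X=Y$, hence $x=\phi^{-1}(X)=\phi^{-1}(Y)=y$, contradicting $x\ne y$. Thus $D_{\lfloor s/2\rfloor}(x)\cap D_{\lfloor s/2\rfloor}(y)=\emptyset$, which is exactly the assertion; note that $2\lfloor s/2\rfloor\le s$ is precisely what caps the guaranteed radius at $\lfloor s/2\rfloor$ rather than $s$.

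I expect the argument above to have no serious obstacle beyond verifying that the lift $x^{D}=y^{E}\Rightarrow X^{D'}=Y^{E'}$ is consistent with the pair structure (no ``half-pair'' of bits is ever split off by a quaternary deletion), which is immediate once $\phi$ is taken to act per symbol. The real work would only appear if one instead follows the moment-theoretic template of Theorem~\ref{thm2}: express $M(X)$ in terms of $x$ via Lemma~\ref{lemma: lemma_4}, form $\Delta=M(X)-M(Y)$, and bound $0<|\Delta|<m$ through a case analysis on the position $L$ of the highest differing bit, where the pairing of the coefficients $v_{2i-1},v_{2i}$ together with Lemma~\ref{lemma: lemma_5} would furnish the lower bound. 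That computational route is the likely source of difficulty, and it is where the factor $\lfloor s/2\rfloor$ would surface from the summation limits.
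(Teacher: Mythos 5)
Your proof is correct, and it takes a genuinely different and more elementary route than the paper's. The paper proves Theorem \ref{Theorem-3} by replaying the moment-theoretic template of Theorem \ref{thm2}: it expresses $M(X)-M(Y)$ in terms of the quaternary symbols via Lemma \ref{lemma: lemma_4}, splits indices into the deleted sets $D$, $E$ and matched pairs $(i_k,j_k)$, and establishes $0<|\Delta(X,Y)|<m$ through an upper-bound step plus a four-case analysis on the highest differing position $L$, with Lemma \ref{lemma: lemma_5} supplying the lower bound; the conclusion is that binary Helberg codewords whose inverse images share a common $\lfloor\frac{s}{2}\rfloor$-deletion subsequence must lie in different residues, so each $\phi^{-1}(H(2n,2,s,a))$ has disjoint spheres. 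You instead use the hypothesis as a black box: the per-symbol action of $\phi$ turns one quaternary deletion into the deletion of one bit pair, so $x^D=y^E$ lifts to $X^{D'}=Y^{E'}$ with $|D'|=|E'|=2\lfloor\frac{s}{2}\rfloor\le s$, and the monotonicity of deletion correction (disjointness of radius-$s$ spheres implies disjointness of radius-$t$ spheres for $t\le s$, which you justify correctly) forces $X=Y$, hence $x=y$, a contradiction. Your route buys brevity and generality: it nowhere uses the Helberg moment structure or the specific map $\phi_9$ of Assumption \ref{helberassumption}, so it actually proves the stronger statement that $\phi^{-1}(C)$ is $\lfloor\frac{s}{2}\rfloor$-deletion correcting for \emph{any} $s$-deletion-correcting binary code $C\subseteq \Z_2^{2n}$ and any of the $24$ symbol-wise bijections $\Z_4\rightarrow\Z_2^2$. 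The paper's route buys self-containedness, deriving the separation directly from the defining congruence of the Helberg code rather than citing its known correction capability, and it parallels Theorem \ref{thm2}, where no such black-box shortcut is available because there the radius grows ($s\mapsto s+1$) rather than shrinks. The one detail worth making explicit in your write-up is the nondegeneracy condition $2n\ge s$, needed so that the extra $s-2\lfloor\frac{s}{2}\rfloor$ deletions in the monotonicity step can actually be performed.
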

\begin{proof}
Consider $X,Y \in \Z_2^{2n}$. $X_i\;and\;Y_i$ denotes the $i^{th}$ bit of codeword X and Y respectively. $x,y \in \Z_4$, where $x = \phi^{-1}(X)$ and $y = \phi^{-1}(Y)$.

Let $D, E \subseteq \{1, 2, \ldots, n\}$ such that $|D| = |E| = \lfloor \frac{s}{2} \rfloor$, and let $x^D = (x_{i_1}, \ldots, x_{i_{n'}})$ where $n' = n - |D|$. If we delete all the bits from $x$ indexed by $D$, we get $x^D$ and all the bits from $y$ indexed by $E$, we get $y^E$, and these sequences will be the same, i.e., $x^D = y^E$. 

Suppose, $\Delta (X,Y) = M(X) - M(Y)$, where $ M(X) = \sum_{i=1}^{2n} v_i X_i$ and $v_i$ is the $i^{th}$ Helberg coefficient given in \ref{def_5}.\\

Now from Lemma \ref{lemma: lemma_4},
\begin{align*}
    M(x) &= \sum_{i=1}^{2n} v_iX_i\\
        &= \sum_{i=1}^{n} (v_{2i-1}((x_i + 1)(\bmod\;2)) + v_{2i}(1-\lfloor \frac{x_i}{2} \rfloor)
\end{align*}
Similarly, 
$$M(y) = \sum_{i=1}^{n} ((y_i + 1)(\bmod\;2) + v_{2i}(1-\lfloor \frac{x_i}{2} \rfloor)$$

Let, $A_{x_i} = (x_i + 1)(\bmod\;2)$ and $B_{x_i} = 1-\lfloor \frac{x_i}{2} \rfloor$.

{\small \begin{align*}
    \Delta(X,Y) &= \sum_{i=1}^n (A_{x_i}v_{2i-1} + B_{x_i}v_{2i})- \sum_{j=1}^n (A_{y_j}v_{2j-1} + B_{x_j}v_{2j})
\end{align*}
}
we can write $\Delta (X,Y)$ as the difference between different bits indexed by D and E with common bits in both.

{\small \begin{align}
\label{main3}
\nonumber
    \Delta(X,Y) &= \sum_{i \in D}(A_{x_i}v_{2i-1} + B_{x_i}v_{2i})- \sum_{j \in E}(A_{y_i}v_{2j-1} + B_{y_j}v_{2j})\\   
                &+ \sum_{k=1}^{n'}(v_{2i_k-1}-v_{2j_k-1})A_{x_{i_{k}}} + \sum_{k=1}^{n'}(v_{2i_k}-v_{2j_k})B_{x_{i_{k}}}
\end{align}
}

Where $k \in S(n')$ and $S(n')$ = \{1, 2, ..., n\}. $i_k$ and $j_k$ replicates those indices whose bit values are equal in x and y i.e. $x_{i_k} = y_{j_k}\; \forall k \in S(n')$. \\

\textbf{Outline:} Now, we will prove $0 < \Delta (x,y) < m$ in two steps. In the first step, we will prove the upper bound as $m$, and in the next step, we will prove the lower bound.

\begin{itemize}
    \item \textbf{Step 1:} Prove the upper bound as $m$.
    \item \textbf{Step 2:} Prove the lower bound as 0. \\
\end{itemize}

\textbf{Step 1:} $\Delta (X,Y) < m$

From equation \ref{main3},

{\small \begin{align*}
    \Delta(X,Y) &\leq \sum_{i \in D}(A_{x_i}v_{2i-1} + B_{x_i}v_{2i}) \\
    &+ \sum_{\substack{k\in S(n')\\i_k>j_k}}(v_{2i_k-1}-v_{2j_k-1})A_{x_{i_{k}}}\\
    &+ \sum_{\substack{k\in S(n')\\i_k>j_k}}(v_{2i_k}-v_{2j_k})B_{x_{i_{k}}}
\end{align*}
}
For maximum value, putting $A_{x_i} = B_{x_i} = 1$.
{\small \begin{align*}
    \Delta(X,Y) &\leq \sum_{i \in D}(v_{2i-1} + v_{2i}) + \sum_{\substack{k\in S(n')\\i_k>j_k}}(v_{2i_k-1}+v_{2i_k})\\
    &- \sum_{\substack{k\in S(n')\\i_k>j_k}}(v_{2j_k-1}+v_{2j_k})\\
    &= \sum_{i=1}^n(v_{2i-1} + v_{2i}) - \sum_{\substack{k\in S(n')\\i_k\leq j_k}}(v_{2i_k-1}+v_{2i_k})\\
    &- \sum_{\substack{k\in S(n')\\i_k>j_k}}(v_{2j_k-1}+v_{2j_k})\\
    &= \sum_{i=1}^n(v_{2i-1} + v_{2i}) - \sum_{\substack{k\in S(n')\\z_k = min(i_k,j_k)}}(v_{2z_k-1}+v_{2z_k})\\
    &\leq \sum_{i=1}^n (v_{2i-1} + v_{2i}) - \sum_{k=1}^{n'}(v_{2k-1}+v_{2k})\\
    &(\because if\;i>j\implies v_i>v_j, \forall i,j \in n)\\
    &= \sum_{i=n'+1}^n(v_{2i-1} + v_{2i})\\
    &= \sum_{i=2n'+1}^{2n}v_i\\
    &= \sum_{i=1}^{2n-2n'}v_{2n+1-i}
\end{align*}
}

Now, $m = 1 + \sum_{i=1}^s v_{2n+1-i}$ putting $n' = n - \lfloor \frac{s}{2} \rfloor$
\begin{align*}
    2n-2n' &= 2 \lfloor \frac{s}{2} \rfloor\\
            &\leq s\;\;\; \forall s \in \n \\
    \therefore \sum_{i=1}^{2n-2n'}v_{2n+1-i} &\leq \sum_{i=1}^{s}v_{2n+1-i}\\
        &\leq m-1\\
        &< m
\end{align*}
\begin{equation}
\label{eq20}
    \therefore \Delta(X,Y) < m
\end{equation}

\textbf{Step 2:} $\Delta (X,Y) > 0$ \\

Without loss of generality, we may assume $L \in \{1,2,..,n\}$ such that $A_{x_L} + B_{x_L}>A{y_L} + B_{y_L}$ and $\forall i>L, A_{x_i} = A_{y_i} = B_{x_i} = B_{y_i} = 0$ $\implies x_i = y_i = 3, \forall i>L$ as proved in \cite{le2023new}. \\

Position of $L$ with respect to sets $D$ and $E$, we can break this problem into four different cases as follows:

\begin{itemize}
    \item \textbf{Case I:} $L \in D \cap E$.
    \item \textbf{Case II:} $L \in E - D$.
    \item \textbf{Case III:} $L \in D - E$.
    \item \textbf{Case IV:} $L \notin D \cup E$. \\
\end{itemize}

\textbf{Case I:} $L \in D \cap E$.\\
From equation \ref{main3} 
{\small \begin{align*}
    \Delta(X,Y) &= \sum_{i \in D}(A_{x_i}v_{2i-1} + B_{x_i}v_{2i})- \sum_{j \in E}(A_{y_i}v_{2j-1} + B_{y_j}v_{2j})\\   
                &+ \sum_{k=1}^{n'}(v_{2i_k-1}-v_{2j_k-1})A_{x_{i_{k}}} + \sum_{k=1}^{n'}(v_{2i_k}-v_{2j_k})B_{x_{i_{k}}}\\
                &\geq v_{2L-1} (A_{x_L}-A_{y_L}) + v_{2L} (B_{x_L} - B_{y_L})\\
                &-\sum_{\substack{j\in E\\j\leq L-1}} (A_{y_j}v_{2j-1} + B_{y_j}v_{2j}) +\sum_{\substack{k\in S(n')\\i_k\leq L-1 \\i_k<j_k}} (v_{2i_k - 1} - v_{2j_k - 1})A_{x_{i_k}}\\
                &+ \sum_{\substack{k\in S(n')\\i_k\leq L-1 \\ i_k <j_k}}(v_{2i_k} - v_{2j_k})B_{x_{i_k}}
\end{align*}
}
As we know,
$A_{x_L} + B_{x_L} > A_{y_L} + B_{y_L}$.
For minimum value, putting $A_{x_L} = 1$ and all others to zero. So we get,
$(A_{x_L} - A_{y_L}) = 1, (B_{x_L} - B_{y_L}) = 0 \; and \; A_{y_j} = A_{x_{i_k}} = B_{y_j} = B_{x_{i_k}} = 1\;\forall i_k\leq L-1 \;and\;\forall j\leq L-1$.\\

\begin{align}
    \nonumber
    \label{eq:case_1}
    \Delta(X,Y) &\geq v_{2L-1} - \sum_{\substack{j\in E\\ j\leq L-1}}(v_{2j-1}+ v_{2j})\\
    &+\sum_{\substack{k\in S(n')\\i_k\leq L-1 \\i_k<j_k}} (v_{2i_k - 1} - v_{2j_k - 1})+ \sum_{\substack{k\in S(n')\\i_k\leq L-1 \\ i_k <j_k}}(v_{2i_k} - v_{2j_k})
\end{align}

{\small \begin{align*}
    \Delta(X,Y) &\geq v_{2L-1} -\Biggl\{ \sum_{\substack{j\in E\\ j\leq L-1}}(v_{2j-1}+ v_{2j}) \\
    &+\sum_{\substack{k\in S(n')\\i_k\leq L-1 \\i_k<j_k}}(v_{2j_k-1}+v_{2j_k}) + \sum_{\substack{k\in S(n')\\i_k\geq L-1 \\i_k<j_k}}(v_{2j_k-1}+v_{2j_k})\Biggl\}\\
    &+ \sum_{\substack{k\in S(n')\\i_k\leq L-1\\ z_k = min(i_k,j_k)}}(v_{2_{z_k}-1} + v_{2z_k})\\
    &\geq v_{2L-1} - \sum_{i=1}^{L-1}(v_{2i-1} + v_{2i})\\
    & + \sum_{k=1}^{n'}(v_{2k-1} + v_{2k})\\
\end{align*}
}

Now, $L\leq (n = n' + \lfloor \frac{s}{2} \rfloor)$.
$$\therefore n' = L-\lfloor \frac{s}{2} \rfloor$$
\begin{align*}
    \Delta(X,Y) &\geq v_{2L-1} - \sum_{i = L-\lfloor \frac{s}{2} \rfloor + 1}^{L-1}(v_{2i-1} + v_{2i})
\end{align*}
From Lemma \ref{lemma: lemma_5},
$$v_{2L-1} - \sum_{i = L-\lfloor \frac{s}{2} \rfloor + 1}^{L-1}(v_{2i-1} + v_{2i}) \geq 1 > 0$$

Hence, in this case, 

$$\Delta(X,Y) > 0$$

\textbf{Case II:} $L\in D-E$\\
From equation \ref{main3},
{\small
\begin{align*}
    \Delta(X,Y) &= \sum_{i \in D}(A_{x_i}v_{2i-1} + B_{x_i}v_{2i})- \sum_{j \in E}(A_{y_i}v_{2j-1} + B_{y_j}v_{2j})\\   
                &+ \sum_{k=1}^{n'}(v_{2i_k-1}-v_{2j_k-1})A_{x_{i_{k}}} + \sum_{k=1}^{n'}(v_{2i_k}-v_{2j_k})B_{x_{i_{k}}}\\
                &\geq A_{x_L}v_{2L-1} + B_{x_L}v_{2L}\\
                &- \sum_{j\in E}(A_{y_j}v_{2j-1} + B_{y_j}v_{2j}) \\ &+ \sum_{\substack{k\in S(n')\\i_k\leq L-1 \\i_k<j_k}}(v_{2i_k-1} - v_{2j_k-1})A_{x_{i_k}} + \sum_{\substack{k\in S(n')\\i_k\leq L-1 \\i_k<j_k}}(v_{2i_k}-v_{2j_k})B_{x_{i_k}}
\end{align*}
}
On putting values for minimum value, we get,

\begin{align*}
    \Delta(X,Y) &\geq v_{2L-1} - \sum_{\substack{j\in E\\ j\leq L-1}}(v_{2j-1}+ v_{2j})\\
    &+\sum_{\substack{k\in S(n')\\i_k\leq L-1 \\i_k<j_k}} (v_{2i_k - 1} - v_{2j_k - 1})+ \sum_{\substack{k\in S(n')\\i_k\leq L-1 \\ i_k <j_k}}(v_{2i_k} - v_{2j_k})\\
\end{align*}

The above equation is same equation as equation \ref{eq:case_1} received in case i. For case ii also,
$$\Delta(X,Y) > 0$$

\textbf{Case III:} $L \in D-E$ \\
It is easy to observe that if we swap the D and E in case ii, then it will become case iii. For case iii also,
$$\Delta(X,Y) > 0$$

\textbf{Case IV:} $L \notin D \cup E$\\

Consider $\exists\; K \in S(n')$ such that $i_K = L$ and $A_{x_L} + B_{x_L} > A_{y_L} + B_{y_L}$, where $A_{x_i} = A_{y_i} = B_{x_i} = B_{y_i} = 0$ for $\forall i > L$. Let $j_K \leq i_K - 1$ such that $x_{i_K} = y_{j_K}$. This implies $A_{x_{i_K}} = A_{y_{j_K}}$ and $B_{x_{i_K}} = 
B_{y_{j_K}}$. \\

From equation \ref{main3},
{\small \begin{align*}
    \Delta(X,Y) &= \sum_{i \in D}(A_{x_i}v_{2i-1} + B_{x_i}v_{2i})- \sum_{j \in E}(A_{y_i}v_{2j-1} + B_{y_j}v_{2j})\\   
                &+ \sum_{k=1}^{n'}(v_{2i_k-1}-v_{2j_k-1})A_{x_{i_{k}}} + \sum_{k=1}^{n'}(v_{2i_k}-v_{2j_k})B_{x_{i_{k}}}\\
                &\geq - \sum_{j\in E}(A_{y_j}+ B_{y_j}v_{2j}) + \sum_{\substack{k\in S(n')\\i_k\leq L-1 \\i_k<j_k}}(v_{2i_k-1}- v_{2j_k-1})A_{x_{i_k}}\\
                &+ \sum_{\substack{k\in S(n')\\i_k\leq L-1 \\i_k<j_k}}(v_{2i_k}-v_{2j_k})B_{x_{i_{k}}} + A_{X_L}(v_{2L-1} - v_{2j_K-1})\\
                &+ B_{X_L}(v_{2L}- v_{2j_K})
\end{align*}
}
putting all values,
\begin{align*}
    \Delta(X,Y) &\geq v_{2L-1} - \sum_{\substack{j \in E\\ j\leq L-1}}(v_{2j-1} + v_{2j})\\
    &+ \sum_{\substack{k\in S(n')\\i_k\leq L-1 \\i_k<j_k}}(v_{2i_k-1}-v_{2j_k-1}) + \sum_{\substack{k\in S(n')\\i_k\leq L-1 \\i_k<j_k}}(v_{2i_k}-v_{2j_k})\\
    &- v_{2j_K-1}
\end{align*}

On simple manipulations, we can arrive at equation \ref{eq:case_1}. And after that in same manner for this case we can also prove,

$$\Delta(X,Y) > 0$$.

As these four cases cover generality, till this step we have proved $0 < \Delta(X,Y) < m$.

It is easy to prove $0<\Delta(Y,X)<m$ by following the same process and only reversing Y and X's roles. By combining all the results,

\begin{equation*}
    0<|\Delta(X,Y)|<m
\end{equation*}

Since, in this case above condition is satisfied X and Y must belong to different Hlberg Codes. In getting of this result, out initial assumption was $x^D = y^E$.

Thus if $x^D = y^E$ is true, then X and Y must be from different Helberg codes. And since $x^D$ is at $\lfloor\frac{s}{2}\rfloor$ deletion distance from X, $\phi^{-1}(H(2n, 2, s, a))$ becomes $\lfloor\frac{s}{2}\rfloor$ deletion correction code.

As per research conducted by Levenshtein \cite{Levenshtein1965BinaryCC}, $\phi^{-1}(H(2n,2,s,a))$ can correct up to $\lfloor\frac{s}{2}\rfloor$ indels.

\end{proof}

\begin{conjecture}
    \label{conj_1}
    There is a one-to-one mapping between the codewords of $H(n, q = 4, s = 1, a)$ and $H(2n, q = 2, s = 2, a')$, where $a$ is the residue corresponding to the maximum codewords in $H(n, 4, 1, a)$. This mapping is represented as $\phi$, such that for every codeword $c$ in $H(n, 4, 1, a)$, $\phi(c) \in H(2n, 2, 2, a')$. Mathematically, this can be expressed as:
    
    \[
    \phi(H(n, 4, 1, a)) = H(2n, 2, 2, a')
    \]
    
    Thus, all the codewords of $\phi(H(n, 4, 1, a)) \subseteq H(2n, 2, 2, a')$.
\end{conjecture}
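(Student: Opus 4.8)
Since $\phi=\phi_9$ is a bijection from $\Z_4^n$ onto $\Z_2^{2n}$, the image $\phi(H(n,4,1,a))$ has exactly $|H(n,4,1,a)|$ elements, so the asserted equality $\phi(H(n,4,1,a))=H(2n,2,2,a')$ splits into two tasks: a \emph{forward inclusion} $\phi(H(n,4,1,a))\subseteq H(2n,2,2,a')$ for a suitably defined residue $a'$, and a \emph{cardinality match} $|H(2n,2,2,a')|=|H(n,4,1,a)|$ that upgrades the inclusion to equality. I would prove the inclusion for \emph{every} residue $a$ and then use cardinality to single out the maximal one. Throughout write $M_4(x)=\sum_{i=1}^n v_i(4,1)x_i$ and $M_2(X)=\sum_{k=1}^{2n}v_k(2,2)X_k$ for the two Helberg moments, with moduli $m_4=v_{n+1}(4,1)$ and $m_2=v_{2n+1}(2,2)$.

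\textbf{Master identity.} The first step is to express the binary moment of an image through the quaternary data. By Lemma~\ref{lemma: lemma_4}, $X_{2i-1}=(x_i+1)\bmod 2$ and $X_{2i}=1-\lfloor x_i/2\rfloor$, so
\begin{equation*}
M_2(\phi(x))=\sum_{i=1}^n\big(v_{2i-1}(2,2)\,((x_i+1)\bmod 2)+v_{2i}(2,2)\,(1-\lfloor x_i/2\rfloor)\big).
\end{equation*}
Writing $V_2=\sum_{k=1}^{2n}v_k(2,2)$ and collecting terms, I would recast this as
\begin{equation*}
M_2(\phi(x))=V_2-M_4(x)+c(x),\qquad c(x)=\sum_{i=1}^n\big(\delta_i(x_i)-\delta_i(0)\big)\ge 0,
\end{equation*}
where $\delta_i(t)$ records the mismatch between the Fibonacci-type weights $v_{2i-1}(2,2),v_{2i}(2,2)$ and the base-three weights $v_i(4,1)=(3^i-1)/2$. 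The term $c(x)$ is a nonnegative ``carry'' measuring how the two numeration systems disagree. The relevant feature of $\phi_9$ is that it satisfies $\phi(\overline{x})=\overline{\phi(x)}$, where $\overline{x}_i=3-x_i$ and the overline on a binary word is bitwise complement; this symmetry is exactly what forces the clean main term $V_2-M_4(x)$.

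\textbf{Forward inclusion.} The heart of the argument is to show that $c(x)$ depends on $x$ only through the integer moment $M_4(x)$, i.e. $c(x)=c(M_4(x))$, and that this carry function is quasi-periodic,
\begin{equation*}
c(M+m_4)\equiv c(M)+m_4 \pmod{m_2}.
\end{equation*}
Granting both facts, $M_2(\phi(x))\bmod m_2$ becomes a function of $M_4(x)\bmod m_4$ alone, so every codeword of $H(n,4,1,a)$ maps into the single binary class $a'\equiv V_2-a+c(a)\pmod{m_2}$, proving the inclusion. I would establish the two facts by induction on $n$, peeling off the top symbol $x_n$ and tracking carry propagation through the defining recurrences $v_{i+1}(4,1)=3v_i(4,1)+1$ and $v_k(2,2)=v_{k-1}(2,2)+v_{k-2}(2,2)+1$; the relation $c(m_4)\equiv c(0)+m_4$ serves as the base case of the periodicity.

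\textbf{Cardinality, and the main obstacle.} Applying the inclusion to all $m_4$ quaternary residues gives a map $a\mapsto a'(a)$ into the $m_2$ binary residues, and since $\phi$ is a bijection the fibres satisfy $|H(2n,2,2,a')|=\sum_{a:\,a'(a)=a'}|H(n,4,1,a)|$ while $\sum_a|H(n,4,1,a)|=4^n=\sum_{a'}|H(2n,2,2,a')|$. Equality in the conjecture then holds exactly when the chosen $a$ is the unique quaternary residue in its fibre, so the final step is to prove that the \emph{maximal} residue has a singleton fibre: carrying the largest quaternary class, its image $a'$ cannot be shared, because the carry function merges distinct residues only in the mid-range, away from the extreme value. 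The principal difficulty is the carry analysis, namely proving that $M_2\circ\phi \bmod m_2$ factors through $M_4 \bmod m_4$: this is a compatibility between two genuinely different numeration systems---powers of three for the quaternary $s=1$ code against the Fibonacci weights $v_k(2,2)=F_{k+2}-1$ for the binary $s=2$ code---so there is no a priori homomorphism to invoke, and the quasi-periodicity of $c$ must be extracted by hand from the recurrences. I expect the singleton-fibre count to be the second obstacle, since it demands quantitative control of how $|H(n,4,1,a)|$ is distributed across residues, which the present data only confirms experimentally up to $n=12$.
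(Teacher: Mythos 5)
Your plan collapses at its first and central step, so the rest cannot be salvaged as written. The cornerstone claim is that $M_2(\phi(x)) \bmod m_2$ factors through $M_4(x) \bmod m_4$ (via a carry function $c$ with $c(x)=c(M_4(x))$ and quasi-periodicity), which would give the forward inclusion $\phi(H(n,4,1,a))\subseteq H(2n,2,2,a')$ for \emph{every} residue $a$. This is false, already at $n=3$. There the quaternary weights are $v(4,1)=(1,4,13)$ with $m_4=40$, and the binary weights are $v(2,2)=(1,2,4,7,12,20)$ with $m_2=20+12+1=33$. Take $x=(1,3,1)$ and $y=(0,0,2)$: both have $M_4=26$, so both lie in $H(3,4,1,26)$. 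But $\phi_9(x)=010001$ has $M_2=2+20=22$, while $\phi_9(y)=111110$ has $M_2=1+2+4+7+12=26$, and $22\not\equiv 26 \pmod{33}$: the two images lie in \emph{different} binary Helberg codes. In your notation (with $V_2=46$) this says $c(x)=2$ but $c(y)=6$ on words with the same integer moment, so the claim $c(x)=c(M_4(x))$ fails even before reducing modulo anything, and with it the quasi-periodicity and the well-definedness of your map $a\mapsto a'(a)$. Note that $26$ is not a maximal residue for $n=3$ (the maximal ones are $0,1,13,14$), and for those the inclusion does hold in the data; e.g.\ $H(3,4,1,14)=\{(2,3,0),(1,0,1),(3,3,3)\}$ maps to $M_2$-values $33,33,0$, all $\equiv 0 \pmod{33}$.

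The lesson of the counterexample is that the restriction to the maximal residue in the statement is not a bookkeeping condition you can defer to a final counting step: the inclusion itself is special to certain residues, so any proof must use, from the outset, whatever structural property singles out the maximal-size classes --- which is exactly the part that is open. For comparison, the paper does not prove this statement at all; it is posed as a conjecture and its ``proof'' is only exhaustive computational verification (tables of matched codewords for $n=4,5$), so there is no paper argument your approach deviates from. Your outer skeleton (bijectivity of $\phi_9$, inclusion plus cardinality matching) is a sensible way to frame an eventual proof, but the inclusion step needs a genuinely residue-dependent argument, not the residue-independent moment identity you propose.
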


\begin{proof}

We have verified the validity of this conjecture through exhaustive data creation and observation. The data presented in Tables \ref{tab:residue_mapping_1} and \ref{tab:residue_mapping_2} supports the existence of a one-to-one mapping between the codewords of $H(n, 4, 1, a)$ and those of $H(2n, 2, 2, a')$, where $a$ denotes the residue corresponding to the maximum codewords.

\end{proof}

\begin{table}[H]
\caption{Comparison between $\phi$ image of $H(n, q = 4, s = 1, a)$ and $H(2n, q = 2, s = 2, a')$. Here we have taken $n = 4$, and $a = 40$. After mapping we get, $a' = 12$.}
\label{tab:residue_mapping_1}
\centering
\begin{tabular}{|c|c|c|p{2.5cm}|p{2.5cm}|}
\hline
\boldmath{$H(4, 4, 1, 40)$} & \boldmath{$\phi(H(4, 4, 1, 40))$} & \boldmath{$H(8, 2, 2, 12)$} \\
\hline
$(0, 0, 0, 1)$ & $(1, 1, 1, 1, 1, 1, 0, 1)$ & $(1, 1, 1, 1, 1, 1, 0, 1)$ \\
$(1, 0, 3, 0)$ & $(0, 1, 1, 1, 0, 0, 1, 1)$ & $(0, 1, 1, 1, 0, 0, 1, 1)$ \\
$(2, 0, 3, 3)$ & $(1, 0, 1, 1, 0, 0, 0, 0)$ & $(1, 0, 1, 1, 0, 0, 0, 0)$ \\
$(2, 3, 2, 0)$ & $(1, 0, 0, 0, 1, 0, 1, 1)$ & $(1, 0, 0, 0, 1, 0, 1, 1)$ \\
$(3, 3, 2, 3)$ & $(0, 0, 0, 0, 1, 0, 0, 0)$ & $(0, 0, 0, 0, 1, 0, 0, 0)$ \\
\hline
\end{tabular}
\end{table}

{\tiny \begin{table}[H]
\caption{Comparison between $\phi$ image of $H(n, q = 4, s = 1, a)$ and $H(2n, q = 2, s = 2, a')$. Here we have taken $n = 5$, and $a = 134$. After mapping we get, $a' = 32$.}
\label{tab:residue_mapping_2}
\centering
\begin{tabular}{|c|p{2.8cm}|p{2.8cm}|}
\hline
\boldmath{$H(5, 4, 1, 134)$} & \boldmath{$\phi(H(5, 4, 1, 134))$} & \boldmath{$H(10, 2, 2, 32)$} \\
\hline
$(0, 0, 1, 0, 1)$ & $(1, 1, 1, 1, 0, 1, 1, 1, 0, 1)$ & $(1, 1, 1, 1, 0, 1, 1, 1, 0, 1)$ \\
$(1, 0, 1, 3, 0)$ & $(0, 1, 1, 1, 0, 1, 0, 0, 1, 1)$ & $(0, 1, 1, 1, 0, 1, 0, 0, 1, 1)$ \\
$(1, 3, 0, 0, 1)$ & $(0, 1, 0, 0, 1, 1, 1, 1, 0, 1)$ & $(0, 1, 0, 0, 1, 1, 1, 1, 0, 1)$ \\
$(2, 0, 1, 3, 3)$ & $(1, 0, 1, 1, 0, 1, 0, 0, 0, 0)$ & $(1, 0, 1, 1, 0, 1, 0, 0, 0, 0)$ \\
$(2, 3, 0, 3, 0)$ & $(1, 0, 0, 0, 1, 1, 0, 0, 1, 1)$ & $(1, 0, 0, 0, 1, 1, 0, 0, 1, 1)$ \\
$(3, 3, 0, 3, 3)$ & $(0, 0, 0, 0, 1, 1, 0, 0, 0, 0)$ & $(0, 0, 0, 0, 1, 1, 0, 0, 0, 0)$ \\
$(3, 3, 3, 2, 0)$ & $(0, 0, 0, 0, 0, 0, 1, 0, 1, 1)$ & $(0, 0, 0, 0, 0, 0, 1, 0, 1, 1)$ \\
\hline
\end{tabular}
\end{table}}

\begin{table}[h]
\caption{Non intersecting deletion sphere Maximum Codewords corresponding residue for $N = 3$ to $7$ with same residue after the $\phi$ mapping for s=1 in Quaternary Helberg code}
\label{tab:residue_asym}
\begin{tabular}{|p {0.2cm}|p{0.6cm}| p{0.8cm}|p{2.5 cm}| p{2.5cm}|}
\hline
$N$ & $q^{N}$ & Max Codewords & Residue Values in H(n,4,1,a) & Corresponding Residue values in H(2n,2,2,a')\\
\hline
3 & 64 & 3 & 0,1,13,14 & 13,12,1,0\\
4 & 256 & 5 & 13,40 & 33,12\\
5  & 1024 & 7 & 39,40,133,134 & 100,99,33,32 \\
6  & 4096 & 11 & 133,403 & 264,99\\
7  & 16384 & 17 & 403,1225 & 707,264 \\
\hline
\end{tabular}
\end{table}
For decoding codewords obtained using propositions 3 and 4, as mentioned earlier, we can:

\begin{enumerate}
    \item Use the Helberg decoding algorithm.
    \item Use sphere decoding since the spheres are non-intersecting.
\end{enumerate}

\begin{table}[h]
\caption{Non intersecting deletion sphere Maximum Codeword corresponding residue for $N = 3$ to $7$ for $q = 4$}
\label{tab:my_tabul}
\begin{tabular}{|p{0.1 cm}|p{0.1 cm}| p{0.7 cm}|p{ 6 cm}|}
\hline
$N$ & $s$ & Max Codewords & Residue Values \\
\hline
3 & 1  & 3 & 0, 13 \\
3 & 2  & 2 & 0, 1 \\
4 & 1  & 5 & 40, 13 \\
4 & 2  & 2 & 0, 61, 122, 183, 4, 3, 8, 7, 12, 11 \\
4 & 3  & 2 & 0, 1 \\
5 & 1  & 7 & 40, 39, 134, 133 \\
5 & 2  & 3 & 0, 61 \\
5 & 3  & 2 & 0, 253, 506, 759, 4, 3, 8, 7, 12, 11 \\
5 & 4  & 2 & 0, 1 \\
6 & 1   & 11 & 403, 133 \\
6 & 2   & 4 & 61, 122, 183, 62, 123, 184 \\
6 & 3   & 2 & 0, 1000, 2000, 3000, 253, 1253, 2253, 3253, 506, 1506, 2506, 3506, 759, 1759, 2759, 3759, 16, 15, 32, 31, 48, 47, 4, 1004, 2004, 3004, 3, 20, 19, 36, 35, 52, 1003, 2003, 3003, 51, 8, 1008, 2008, 3008, 7, 24, 23, 40, 39, 56, 1007, 2007, 3007, 55, 12, 1004, 2004, 3004, 11, 28, 27, 44, 43, 60, 1003, 2003, 3003, 59, 1, 1001, 2001, 3001, 254, 1254, 2254, 3254, 507, 1507, 2507, 3507, 760, 1760, 2760, 3760, 17, 33, 49, 5, 1005, 2005, 3005, 21, 37, 53, 9, 1009, 2009, 3009, 25, 41, 57, 13, 1013, 2013, 3013, 29, 45, 61, 2, 1002, 2002, 3002, 255, 1255, 2255, 3255, 508, 1508, 2508, 3508, 761, 1761, 2761, 3761, 18, 34, 50, 6, 1006, 2006, 3006, 22, 38, 54, 10, 4082, 8154, 12226, 26, 42, 58, 14, 4086, 8158, 12230, 30, 46, 62 \\
6 & 4   & 2 & 0, 1021, 2042, 3063, 4, 3, 8, 7, 12, 11, 1, 1022, 2043, 3064, 5, 9, 13, 2, 1023, 2044, 3065, 6, 10, 14 \\
6 & 5   & 2 & 0, 1, 2 \\
7 & 1   & 17 & 403, 1225 \\
7 & 2   & 5 & 880, 61 \\
7 & 3   & 3 & 0, 253, 1, 254 \\
7 & 4   & 2 & 0, 4072, 8144, 12216, 1021, 5093, 9165, 13237, 2042, 6114, 10186, 14258, 3063, 7135, 11207, 15279, 16, 15, 32, 31, 48, 47, 4, 4076, 8148, 12220, 3, 20, 19, 36, 35, 52, 4075, 8147, 12219, 51, 8, 4080, 8152, 12224, 7, 24, 23, 40, 39, 56, 4079, 8151, 12223, 55, 12, 4084, 8156, 12228, 11, 28, 27, 44, 43, 60, 4083, 8155, 12227, 59, 1, 4073, 8145, 12217, 1022, 5094, 9166, 13238, 2043, 6115, 10187, 14259, 3064, 7136, 11208, 15280, 17, 33, 49, 5, 4077, 8149, 12221, 21, 37, 53, 9, 4081, 8153, 12225, 25, 41, 57, 13, 4085, 8157, 12229, 29, 45, 61, 2, 4074, 8146, 12218, 1023, 5095, 9167, 13239, 2044, 6116, 10188, 14260, 3065, 7137, 11209, 15281, 18, 34, 50, 6, 4078, 8150, 12222, 22, 38, 54, 10, 4082, 8154, 12226, 26, 42, 58, 14, 4086, 8158, 12230, 30, 46, 62 \\
7 & 5   & 2 & 0, 4093, 8186, 12279, 4, 3, 8, 7, 12, 11, 1, 4094, 8187, 12280, 5, 9, 13, 2, 4095, 8188, 12281, 6, 10, 14 \\
7 & 6   & 2 & 0, 1, 2 \\
\hline
\end{tabular}
\end{table}

The Torsion and Reduction codes \cite{gupta1999some} of the quaternary codes serve as valuable tools for investigating the further properties of codes over$\Z_4$. Now we investigate if there are some similar properties hold for the torsion and reduction codes of Helberg code. 


\subsection{\textbf{Reduction Code of Helberg Code}}
\begin{proposition}
Let $\mathscr{C}^{(1)}$ denote the reduction code, and $H$ denote the codewords generated by the Helberg code. For some residues $a$, the application of $\mathscr{C}^{(1)}$ on $H$ results in non-intersecting codewords. However, not all codewords are non-intersect. 

In other words, $\exists a : \mathscr{C}^{(1)}(H(a)) \rightarrow C_{B}$, where $C_{B}$ denotes binary codewords with non-intersecting deletion spheres. However, $\nexists a : \mathscr{C}^{(1)}(H(a)) \rightarrow C_{B}$ for all residues $a$.
\end{proposition}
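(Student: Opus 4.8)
The plan is to read this proposition as a combined existence/non-universality statement and to settle each half by an explicit witness, since the reduction map is intrinsically lossy and the claim is ultimately about the deletion geometry of the resulting binary images rather than about any closed-form structure.

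First I would fix the reduction code precisely: $\mathscr{C}^{(1)}$ is the component-wise reduction $\Z_4^n \to \Z_2^n$ sending each symbol to its residue modulo $2$, so that $0,2 \mapsto 0$ and $1,3 \mapsto 1$, applied to every codeword of $H(n,4,s,a)$ from Definition \ref{def_5}. The first fact to record is that this map is two-to-one on symbols: any two Helberg codewords in the same residue class that differ only by interchanging $0\!\leftrightarrow\!2$ or $1\!\leftrightarrow\!3$ in some coordinates collapse to the same binary string, and hence have identical (in particular intersecting) deletion spheres. This already signals that the non-universal half of the statement is the generic behaviour, while the existential half is the delicate one.

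For the non-universal claim, I would exhibit a single residue $a$ for which $\mathscr{C}^{(1)}(H(n,4,s,a))$ contains two binary images sharing a common subsequence of length $n-s$, so that their $s$-deletion spheres meet. The cleanest instances come from a pair $x,x'$ in one residue whose reductions coincide outright; by the two-to-one remark this happens exactly when $x'$ is obtained from $x$ by $0\!\leftrightarrow\!2$ and $1\!\leftrightarrow\!3$ swaps whose net moment change $\sum \pm 2v_i$ is divisible by $m$. A short search over small $(n,s)$ locates such a pair, after which I would run Algorithm \ref{sphere} on the reduced images and display the nonempty intersection. For the existential claim I would instead search over small parameters $(n,s)$ and residues $a$ for a class whose reduced images retain pairwise disjoint $s$-deletion spheres, and then certify disjointness directly with Algorithm \ref{sphere}: the point to make explicit is that, although reduction halves the alphabet, the moment constraint $M(x)\equiv a \pmod m$ can still force the surviving binary codewords far enough apart in deletion distance that no common $(n-s)$-length subsequence exists.

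The step I expect to be the genuine obstacle is the existential half. Because $\mathscr{C}^{(1)}$ discards exactly the information distinguishing $0$ from $2$ and $1$ from $3$, the images of a residue class tend to crowd together, so a residue with fully non-intersecting reduced spheres is comparatively rare and cannot be guaranteed by a soft counting or spacing argument. I would therefore certify the existential witness by an exhaustive check over small $(n,s,a)$, exactly as the surrounding results in this section are justified, rather than attempt a structural proof of existence; the non-universal witness, by contrast, follows immediately once the collapsing mechanism above is displayed on one instance.
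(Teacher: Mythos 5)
Your proposal is correct and takes essentially the same route as the paper: both halves are settled by explicit computational witnesses, certified with the deletion-sphere algorithm — the paper exhibits $H(4,4,1,a)$ with residue $93.0$ (reduced codewords with non-intersecting spheres for $s=2$) and residue $120.0$ (intersecting spheres). Your additional observation about the two-to-one collapsing mechanism ($0\leftrightarrow 2$, $1\leftrightarrow 3$ swaps) is a nice refinement for locating intersecting witnesses, but it does not change the underlying witness-by-search strategy, which is exactly what the paper does.
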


\begin{proof}

These codewords are generated using the Helberg quaternary code H(4,4,1,a). For a residue of 120.0, the deletion spheres for s=2 are intersecting. However, for a residue of 93.0, the deletion spheres for s=2 are not intersecting.

\begin{table}[H]
    \centering
\begin{tabular}{ |c | p{4 cm} |c |}
\hline
\textbf{Residue (a)} & \textbf{Codewords} & \textbf{Intersection} \\
\hline
120.0 & [[0, 0, 0, 0, 0, 0, 1, 1], [1, 1, 0, 0, 1, 1, 0, 0], [0, 0, 1, 1, 0, 0, 0, 0]] & Intersecting \\
93.0 & [[0, 0, 0, 0, 1, 1, 0, 0], [1, 1, 1, 1, 0, 0, 0, 0], [0, 0, 1, 1, 1, 1, 1, 1]] & Non-Intersecting \\
\hline
\end{tabular}
    \label{tab:just}
\end{table}

\end{proof}

\subsection{\textbf{Torsion code of Helberg Code}}

\begin{proposition}
     Let $H$ denote the codewords generated by the Helberg code, and $\mathscr{C}^{(2)}$ denote the torsion code. Then, there does not exist a residue $a$ corresponding to codewords in $H$ that satisfies the property of $\mathscr{C}^{(2)}$. In other words, $\nexists a: H(a) \rightarrow \mathscr{C}^{(2)}.$
\end{proposition}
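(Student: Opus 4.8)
The plan is to prove the statement structurally rather than by brute-force search, by exploiting the \emph{super-increasing} nature of the Helberg weights. First I would fix the working definition of the torsion code: $\mathscr{C}^{(2)}(H(n,4,s,a)) = \{u \in \Z_2^n : 2u \in H(n,4,s,a)\}$, where the quaternary vector $2u \in \{0,2\}^n$ is identified with the binary vector $u$. Membership then unwinds to a single congruence: since $M(2u) = \sum_{i=1}^n v_i(2u_i) = 2\sum_{i=1}^n v_i u_i$, a binary vector $u$ contributes to the torsion code if and only if
\[
2\sum_{i=1}^n v_i u_i \equiv a \pmod m .
\]
So the whole question reduces to counting the binary solutions of this doubled weighted subset-sum congruence.

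The key structural input is that the quaternary Helberg weights are super-increasing. Using the recursion $v_k = 1 + r\sum_{j=1}^s v_{k-j}$ with $r = 3$, a short induction in the same spirit as Lemma \ref{lemma_1} gives $v_k > \sum_{i=1}^{k-1} v_i$ for every $k$; in particular $\sum_{i=1}^n v_i < v_{n+1} = m$. Consequently the $2^n$ subset sums $\sum_{i=1}^n v_i u_i$ are pairwise distinct integers lying in $[0,m)$. I would then show the map $u \mapsto 2\sum_i v_i u_i \bmod m$ is injective: if $u \neq w$ satisfied $2\sum_i v_i u_i \equiv 2\sum_i v_i w_i \pmod m$, then $m \mid 2\sum_i v_i(u_i - w_i)$, while $0 < |\sum_i v_i(u_i - w_i)| < m$ by distinctness of subset sums, so when $m$ is odd this forces $\sum_i v_i(u_i - w_i) = 0$, a contradiction. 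Hence for odd $m$ -- which covers the parameters used in the tables, e.g. $m = v_5 = 121$ for $H(4,4,1,a)$ -- every residue $a$ has at most one preimage, giving $|\mathscr{C}^{(2)}(H(a))| \leq 1$ for all $a$.

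Having at most one codeword, the torsion code can never carry the nontrivial, multi-codeword, non-intersecting-deletion-sphere structure that the reduction-code proposition attaches to a genuine binary code $C_B$: a degenerate (empty or singleton) code simply does not realise the property of $\mathscr{C}^{(2)}$. This yields $\nexists\, a : H(a) \to \mathscr{C}^{(2)}$.

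I expect the main obstacle to be the parity of $m = v_{n+1}$. When $m$ is even the injectivity argument leaves open a single colliding pair, namely binary vectors $u \neq w$ with $\sum_i v_i(u_i - w_i) = \pm m/2$, which would place two codewords in $\mathscr{C}^{(2)}(H(a))$. To close this gap I would isolate exactly this pair and show that moments separated by precisely $m/2$ force their $(s+1)$-deletion spheres to intersect, so the code still fails to correct deletions; alternatively one restricts the claim to the odd-$m$ regime in which the tables are actually computed. Making this even-$m$ case rigorous in full generality -- rather than settling it via the exhaustive data already generated -- is the delicate step.
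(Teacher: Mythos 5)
Your proposal is correct, and it takes a genuinely different --- and substantially stronger --- route than the paper. The paper's own ``proof'' is a single numerical illustration: for one parameter choice and the single residue $a=0$ it lists the codewords $(0,0,0,0,0)$, $(1,0,0,0,3)$, $(3,0,3,2,2)$ and leaves the reader to observe that only the all-zero word lies in $2\Z_2^n$, so the torsion code is trivial there; no other residues or parameters are examined (and the table is internally inconsistent: it reports $m=121=v_5$, which belongs to length $n=4$, while the listed length-5 codewords actually satisfy $M(x)\equiv 0 \pmod{364}$ with $m=v_6=364$). Your argument instead handles \emph{every} residue $a$ simultaneously, by reducing torsion membership to the congruence $2\sum_i v_iu_i\equiv a\pmod m$ and invoking the super-increasing property of the quaternary weights --- precisely the style of structural lemma the paper proves elsewhere (its Lemmas \ref{lemma_1} and \ref{lemma_2}) but never applies to this proposition; this is what actually establishes the ``$\nexists\, a$'' quantifier that a one-residue example cannot. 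One improvement: your parity split on $m$ is unnecessary, and in fact restricting to odd $m$ would fail to cover the paper's own example, since there $m=364$ is even. The recursion yields the stronger bound $2\sum_{i=1}^n v_i < m$: writing $T_n=\sum_{i=1}^n v_i$, induction gives $2T_{n-1}<v_n$, hence
\begin{equation*}
2T_n \;=\; 2v_n + 2T_{n-1} \;<\; 3v_n \;\le\; 3\sum_{j=1}^{s}v_{n+1-j} \;<\; 1 + 3\sum_{j=1}^{s}v_{n+1-j} \;=\; v_{n+1} \;=\; m .
\end{equation*}
Thus the doubled subset sums $2S_u$ are pairwise distinct integers lying in $[0,m)$, hence automatically distinct modulo $m$ regardless of the parity of $m$; every residue class contains at most one torsion codeword, and your speculative closing paragraph (the deletion-sphere analysis of an $m/2$ collision) can simply be deleted. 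The one caveat, which afflicts the paper at least as much as you, is that ``satisfies the property of $\mathscr{C}^{(2)}$'' is never formally defined, so both arguments ultimately rest on the reading that a torsion code with at most one codeword is degenerate and therefore does not realise the intended property.
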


\begin{proof}

We present an example to illustrate the concept. The parameters and corresponding codewords are given in the following Table.

\begin{table}[H]
\centering
\label{tb:torsion}
\begin{tabular}{|c|c|}
\hline
\textbf{Variable} & \textbf{Value} \\
\hline
a & 0 \\
\hline
s & 1 \\
\hline
q & 4 \\
\hline
m & 121 \\
\hline
\textbf{Codewords} & [0 0 0 0 0], [1 0 0 0 3], [3 0 3 2 2] \\
\hline
\end{tabular}
\end{table}

\end{proof}

In this table \ref{tab:modulo_t}, all residues and their intersection properties are given, along with the number of codewords for each residue.

\section{Asymptotic Bounds and Cardinality Comparison}

Lastly, we compare the cardinalities of quaternary 1-deletion utilizing the Naisargik Map with binary 2-deletion Helberg codes against the established asymptotic lower and upper bounds for the cardinality \cite{le2023new} of the most extensive code capable of correcting $s$ deletions.

\begin{equation}
L_n(q, s) = \frac{(s!)^2 q^{n+s}+s}{(q - 1)^{2s} 2^{n} 2^{s}}
\end{equation}

\begin{equation}
U_n(q, s) = \frac{s! q^n}{(q - 1)^s n^s} 
\end{equation}
Here, $L_n(q, s)$ represents the lower bound and $U_n(q, s)$ represents the upper bound. We define $N_n (q, s)$ as the size of the largest code $\mathscr{C}$.

The table below compares these bounds with the cardinalities of the Helberg codes and the codes obtained using the Naisargik Map:

\begin{table}[H]
\caption{Comparison of Cardinalities of Codes with Asymptotic Bounds and $N_n (q, d)$}
\begin{tabular}{|c|c|c|p{2.2cm}|p{2.5cm}|}
\hline
\textbf{n} & \textbf{$L_n(4,1)$} & \textbf{$U_n(4,1)$} & \textbf{$N_n(H(2n,2,2,a))$} & \textbf{$N_n(\phi(H(n,4,1,a')))$} \\ \hline
2 & 1.0 & 2.0 & 2 & 2 \\ \hline
3 & 0.819 & 2.56 & 3 & 3 \\ \hline
4 & 0.79 & 3.556 & 5 & 5 \\ \hline
5 & 0.853 & 5.224 & 9 & 8 \\ \hline
6 & 1.0 & 8.0 & 11 & 11 \\ \hline
\end{tabular}
\end{table}

In the table, the column denoted as $L_n(q,s)$ indicates the lower bounds $L_n$, while the column labeled $U_n(q,s)$ denotes the upper bounds $U_n$. The column designated as $N_n(H(n,q,s,a))$ represents the cardinalities of the Helberg codes, and correspondingly, the column denoted as $N_n(\phi(H(n,q,s,a)))$ represents the cardinalities of the codes derived using the Naisargik Map.
\section{Conclusion}

This paper explored explored Naisargik maps as mappings between quaternary and binary spaces, revealing their impact on error-correcting properties of the VT and Helberg codes. In particular, if two Naisargik images of VT code generate intersecting deletion spheres, the images hold the same weights. Additionally, we observed that a $s$-deletion correcting quaternary Helberg code can rectify $s+1$ deletions with its Naisargik image, while a $s$-deletion correcting binary Helberg code can address $\lfloor\frac{s}{2}\rfloor$ errors with its inverse Naisargik image. We proposed Conjecture \ref{conj_1} establishing a one-to-one mapping between codewords of $H(n, 4, 1, a)$ and $H(2n, 2, 2, a')$ through the mapping $\phi$, ensuring all codewords of $\phi(H(n, 4, 1, a))$ are included in $H(2n, 2, 2, a')$. It would be interesting to explore if there are more naisargik maps that can map a ins/del correcting quaternary code to binary code in general. One may also extend these results to q-ary codes. \\

Overall, our study sheds light on the intricate properties of quaternary VT and Helberg codes under various mappings, paving the way for further exploration and utilization of these codes in practical applications.

\vspace{-.1in}
\section{Data Availability}

The data supporting the findings of this study are available at GitHub \cite{guptalab2024grayvt}. The repository contains the VT codes, Helberg codes, deletion sphere decoding algorithm, asymmetric permutation of the Gray Naisargik map, Reduction codes, and Torsion codes used in this study. 
For readers interested in further exploration the relevant data and results are readily accessible. Instructions for installation and usage are also provided in the repository.
An HTML page with the results of this study is also available for viewing through the provided GitHub link. This allows for easy access and understanding of the results.

\bibliographystyle{IEEEtran} 
\bibliography{references}

\begin{thebibliography}{10}
\providecommand{\url}[1]{#1}
\csname url@samestyle\endcsname
\providecommand{\newblock}{\relax}
\providecommand{\bibinfo}[2]{#2}
\providecommand{\BIBentrySTDinterwordspacing}{\spaceskip=0pt\relax}
\providecommand{\BIBentryALTinterwordstretchfactor}{4}
\providecommand{\BIBentryALTinterwordspacing}{\spaceskip=\fontdimen2\font plus
\BIBentryALTinterwordstretchfactor\fontdimen3\font minus
  \fontdimen4\font\relax}
\providecommand{\BIBforeignlanguage}[2]{{%
\expandafter\ifx\csname l@#1\endcsname\relax
\typeout{** WARNING: IEEEtran.bst: No hyphenation pattern has been}%
\typeout{** loaded for the language `#1'. Using the pattern for}%
\typeout{** the default language instead.}%
\else
\language=\csname l@#1\endcsname
\fi
#2}}
\providecommand{\BIBdecl}{\relax}
\BIBdecl

\bibitem{vtcodeoriginal}
R.~Varshamov and G.~Tenengolts, ``Codes which correct single asymet-ric
  errors,'' \emph{Avtomatika i Telemekhanika}, vol.~26, 01 1965.

\bibitem{Levenshtein1965BinaryCC}
\BIBentryALTinterwordspacing
V.~I. Levenshtein, ``Binary codes capable of correcting deletions, insertions,
  and reversals,'' \emph{Soviet physics. Doklady}, vol.~10, pp. 707--710, 1965.
  [Online]. Available: \url{https://api.semanticscholar.org/CorpusID:60827152}
\BIBentrySTDinterwordspacing

\bibitem{nonbinaryVT}
G.~Tenengolts, ``Nonbinary codes, correcting single deletion or insertion
  (corresp.),'' \emph{IEEE Transactions on Information Theory}, vol.~30, no.~5,
  pp. 766--769, 1984.

\bibitem{encodingalgo}
M.~Abroshan, R.~Venkataramanan, and A.~G.~I. Fabregas, ``Efficient systematic
  encoding of non-binary vt codes,'' in \emph{2018 IEEE International Symposium
  on Information Theory (ISIT)}, 2018, pp. 91--95.

\bibitem{soft}
Z.~Yan, G.~Qu, and H.~Wu, ``A novel soft-in soft-out decoding algorithm for vt
  codes on multiple received dna strands,'' 06 2023, pp. 838--843.

\bibitem{sloane2002singledeletioncorrecting}
N.~J.~A. Sloane, ``On single-deletion-correcting codes,'' 2002.

\bibitem{helberg}
A.~Helberg and H.~Ferreira, ``On multiple insertion/deletion correcting
  codes,'' \emph{IEEE Transactions on Information Theory}, vol.~48, no.~1, pp.
  305--308, 2002.

\bibitem{le2023new}
T.~A. Le and H.~D. Nguyen, ``New multiple insertion/deletion correcting codes
  for non-binary alphabets,'' \emph{IEEE Transactions on Information Theory},
  2023.

\bibitem{abdelghaffar2012}
K.~A.~S. Abdel-Ghaffar, F.~Paluncic, H.~C. Ferreira, and W.~A. Clarke, ``On
  helberg's generalization of the levenshtein code for multiple
  deletion/insertion error correction,'' \emph{IEEE Transactions on Information
  Theory}, vol.~58, no.~3, pp. 1804--1808, 2012.

\bibitem{Z4nonlinear}
A.~Hammons, P.~Kumar, A.~Calderbank, N.~Sloane, and P.~Sole, ``The z/sub
  4/-linearity of kerdock, preparata, goethals, and related codes,'' \emph{IEEE
  Transactions on Information Theory}, vol.~40, no.~2, pp. 301--319, 1994.

\bibitem{8022906}
J.~Brakensiek, V.~Guruswami, and S.~Zbarsky, ``Efficient low-redundancy codes
  for correcting multiple deletions,'' \emph{IEEE Transactions on Information
  Theory}, vol.~64, no.~5, pp. 3403--3410, 2018.

\bibitem{erlich2017dna}
Y.~Erlich and D.~Zielinski, ``{DNA} fountain enables a robust and efficient
  storage architecture,'' \emph{Science}, vol. 355, no. 6328, pp. 950--954,
  2017.

\bibitem{hu2018optimal}
L.~Hu, H.~Li, and Z.~Ou, ``Constructing optimal four-level designs via gray map
  code,'' \emph{Metrika}, 2018.

\bibitem{helbergmoredetails}
F.~Paluncic, K.~Abdel-Ghaffar, H.~Ferreira, and W.~Clarke, ``A multiple
  insertion/deletion correcting code for run-length limited sequences,''
  \emph{Information Theory, IEEE Transactions on}, vol.~58, pp. 1809--1824, 03
  2012.

\bibitem{article1more}
A.~Helberg, ``Coding for the correction of synchronization errors /.''

\bibitem{article2more}
K.~Abdel-Ghaffar and H.~Ferreira, ``Systematic encoding of the
  varshamov-tenengol'ts codes and the constantin-rao codes,'' \emph{Information
  Theory, IEEE Transactions on}, vol.~44, pp. 340 -- 345, 02 1998.

\bibitem{article3more}
L.~Schulman and D.~Zuckerman, ``Asymptotically good codes correcting
  insertions, deletions, and transpositions,'' \emph{Information Theory, IEEE
  Transactions on}, vol.~45, pp. 2552 -- 2557, 12 1999.

\bibitem{nguyen2022bit}
T.~T. Nguyen, K.~Cai, and P.~H. Siegel, ``Every bit counts: A new version of
  non-binary vt codes with more efficient encoder,'' 2022.

\bibitem{vtcodenumbers}
B.~Ginzburg, ``A number-theoretic function with an application in the theory of
  coding,'' \emph{Probl. Cybern.}, vol.~19, pp. 249--252, 01 1967.

\bibitem{guptalab2024grayvt}
D.~Shetranjiwala, K.~Pandya, N.~Savaliya, and M.~K. Gupta, ``Helberg code,''
  \url{https://github.com/guptalab/GrayVT}, 2024.

\bibitem{gupta1999some}
M.~K. Gupta, ``On some linear codes over $\mathbb{Z}_{2^s}$,'' Ph.D.
  dissertation, Indian Institute of Technology, Kanpur, 1999.

\end{thebibliography}
\newpage
\section{Appendix}
\begin{algorithm}
    \caption{Algorithm to Generate Deletion Spheres (Short Version)}
Input: List of codewords, Maximum number of deletions $s$\\
Output: Dictionary mapping each codeword to its deletion sphere
\begin{algorithmic}[1]
\STATE Initialize DeletionSphereList dictionary, which stores codeword to its deletion sphere
\FOR{each codeword $c$ in the given set of codewords}
    \STATE Initialize an empty set $D(c)$.
    \FOR{each character $x$ in codeword $c$}
        \STATE Compute $c'$ by removing character $x$ from $c$.
        \STATE Add $c'$ to the set $D(c)$.
    \ENDFOR
    \STATE Store the set $D(c)$ in a dictionary with $c$ as the key.
\ENDFOR
\IF{s == 0}
    \RETURN DeletionSphereList
\ELSE
    \STATE Decrement s by 1 and Initialize empty dictionary NewDelSphere
    \STATE Compute PrevDelSphereList as the result of recursively calling Algorithm \ref{sphere}.
    \STATE Using DeletionSphereList and PreDelSphereList map codeword to its 2 deletion Sphere 
    \RETURN NewDelSphere
\ENDIF
\end{algorithmic}
\label{sphere}
\end{algorithm}

\begin{table}[h]
\centering
\caption{Summary of Reduction Code Properties}
\label{tab:modulo_t} 
\begin{tabular}{|p{3cm}|p{3cm}|p{1.5 cm}|}
\hline
\textbf{Residue} & \textbf{Number of Codewords} & \textbf{Intersection} \\
\hline
0.0, 2.0, 4.0, 8.0, 13.0, 40.0 & 4, 3, 3, 3, 5, 5 & Intersecting \\
\hline
1.0, 3.0, 6.0, 7.0, 9.0, 10.0, 11.0, 14.0, 15.0, 18.0, 19.0, 22.0, 23.0, 27.0, 28.0, 32.0, 35.0, 36.0, 45.0, 49.0, 50.0, 52.0, 55.0 & 3, 3, 3, 3, 3, 3, 3, 4, 3, 2, 2, 2, 2, 4, 3, 2, 2, 2, 3, 3, 3, 3, 2 & Non-Intersecting \\
\hline
12.0, 16.0, 17.0, 20.0, 21.0, 24.0, 25.0, 26.0, 29.0, 30.0, 31.0, 33.0, 34.0, 37.0, 38.0, 39.0, 42.0, 43.0, 44.0, 46.0, 47.0, 48.0, 53.0, 54.0, 66.0, 67.0, 79.0, 80.0, 81.0, 83.0, 84.0, 85.0, 86.0, 87.0, 88.0, 89.0, 90.0, 91.0, 92.0, 93.0, 94.0, 95.0, 106.0, 107.0, 108.0, 119.0, 120.0 & 4, 2, 2, 2, 2, 2, 3, 4, 2, 2, 2, 2, 2, 2, 3, 4, 3, 3, 3, 3, 3, 3, 4, 3, 2, 2, 2, 3, 3, 2, 2, 2, 2, 2, 2, 2, 2, 2, 2, 3, 3, 2, 2, 2, 2, 2, 3 & Mixed \\
\hline
\end{tabular}
\end{table}

\begin{table}[h]
\centering
\caption{Example of $H(4,4,1,a)$ denoting number of codewords and corresponding residue values}
\label{tab:residue_values}
\begin{tabular}{|c| p {2.5 cm}|}
\hline
\textbf{Number of Codewords} & \textbf{Residue ('a') Values} \\
\hline
2 & 16.0, 17.0, 18.0, 19.0, 20.0, 21.0, 22.0, 23.0, 24.0, 29.0, 30.0, 31.0, 32.0, 33.0, 34.0, 35.0, 36.0, 37.0, 55.0, 66.0, 67.0, 68.0, 79.0, 82.0, 83.0, 84.0, 85.0, 86.0, 87.0, 88.0, 89.0, 90.0, 91.0, 92.0, 95.0, 106.0, 107.0, 108.0, 119.0 \\
\hline
3 & 1.0, 2.0, 3.0, 5.0, 6.0, 7.0, 8.0, 9.0, 10.0, 11.0, 15.0, 25.0, 28.0, 38.0, 42.0, 43.0, 46.0, 47.0, 48.0, 49.0, 50.0, 51.0, 52.0, 80.0, 81.0, 93.0, 94.0, 120.0 \\
\hline
4 & 0.0, 12.0, 26.0, 39.0, 41.0, 53.0 \\
\hline
5 & 13.0, 40.0 \\
\hline
\end{tabular}
\end{table}

\begin{table}[h]
\centering
\caption{Number of Codewords for each Residue Value $H(2n, 2, 2, a)$}
\label{tab:residue_values_rev}
\begin{tabular}{|c| p { 5 cm}|}
\hline
\textbf{Number of Codewords} & \textbf{Residue ('a') Values} \\
\hline
8 & 66.0 \\
\hline
7 & 33.0, 32.0, 100.0, 65.0, 99.0, 39.0, 93.0, 67.0 \\
\hline
6 & 88.0, 54.0, 176.0, 121.0, 87.0, 20.0, 108.0, 19.0, 74.0, 73.0, 107.0, 12.0, 155.0, 11.0, 209.0, 45.0, 188.0, 44.0, 98.0, 120.0, 95.0, 60.0, 40.0, 94.0, 182.0, 59.0, 25.0, 113.0, 72.0, 38.0, 92.0, 37.0, 58.0, 57.0, 24.0, 112.0, 23.0, 78.0, 77.0, 56.0, 111.0, 22.0, 110.0, 34.0, 76.0, 55.0, 21.0, 109.0, 75.0 \\
\hline
5 & 197.0, 53.0, 175.0, 86.0, 163.0, 162.0, 18.0, 106.0, 154.0, 10.0, 187.0, 31.0, 208.0, 64.0, 119.0, 7.0, 6.0, 61.0, 183.0, 128.0, 5.0, 26.0, 114.0, 202.0, 127.0, 4.0, 181.0, 71.0, 126.0, 125.0, 201.0, 36.0, 91.0, 90.0, 167.0, 166.0, 200.0, 199.0, 35.0, 89.0, 68.0, 122.0, 43.0, 97.0, 165.0, 198.0, 42.0, 96.0, 164.0, 41.0, 177.0, 13.0, 156.0, 46.0, 101.0, 189.0, 210.0, 79.0 \\
\hline
4 & 0.0, 143.0, 231.0, 217.0, 196.0, 52.0, 195.0, 133.0, 153.0, 30.0, 207.0, 118.0, 150.0, 204.0, 149.0, 27.0, 115.0, 169.0, 80.0, 203.0, 148.0, 216.0, 161.0, 17.0, 105.0, 215.0, 160.0, 214.0, 147.0, 3.0, 146.0, 180.0, 2.0, 179.0, 221.0, 145.0, 70.0, 69.0, 14.0, 102.0, 124.0, 1.0, 178.0, 157.0, 123.0, 211.0, 132.0, 9.0, 186.0, 63.0, 220.0, 144.0, 131.0, 8.0, 185.0, 219.0, 130.0, 218.0, 129.0, 62.0, 184.0, 168.0 \\
\hline
3 & 142.0, 230.0, 174.0, 85.0, 170.0, 81.0, 51.0, 194.0, 16.0, 159.0, 104.0, 15.0, 213.0, 158.0, 103.0, 212.0, 152.0, 29.0, 206.0, 117.0, 47.0, 190.0, 151.0, 28.0, 205.0, 116.0, 134.0, 222.0 \\
\hline
2 & 141.0, 229.0, 224.0, 49.0, 192.0, 48.0, 191.0, 173.0, 84.0, 135.0, 172.0, 83.0, 171.0, 82.0, 223.0, 50.0, 193.0 \\
\hline
\end{tabular}
\end{table}

\begin{table}[h]
\centering
\caption{Deletion Sphere for $\phi(H(4,4,1,13.0))$}
\label{tab:deletion_sphere}
\begin{tabular}{|c|c|}
\hline
\textbf{Binary Codeword} & \textbf{Deletion Sphere} \\
\hline
(0, 0, 0, 0, 0, 0, 0, 1) & (0, 0, 0, 0, 0, 1), (0, 0, 0, 0, 0, 0) \\
\hline
(0, 1, 0, 0, 1, 0, 0, 0)& (0, 0, 1, 0, 0, 0), (1, 0, 1, 0, 0, 0) \\
                        & (1, 0, 0, 1, 0, 0), (0, 0, 0, 1, 0, 0) \\
                        & (0, 1, 1, 0, 0, 0), (0, 1, 0, 0, 0, 0) \\
                        & (0, 1, 0, 1, 0, 0), (0, 1, 0, 0, 1, 0) \\
\hline
(1, 0, 1, 0, 1, 1, 1, 0) & (1, 0, 1, 0, 1, 0), (1, 0, 0, 1, 1, 0) \\
                        & (1, 1, 0, 1, 1, 0), (1, 0, 1, 1, 1, 0) \\
                        &(0, 1, 0, 1, 1, 0), (1, 0, 1, 0, 1, 1) \\
\hline
(1, 1, 0, 0, 1, 0, 1, 0) & (1, 0, 0, 1, 0, 0), (0, 0, 1, 0, 1, 0) \\
                        & (1, 0, 1, 0, 1, 0), (1, 0, 0, 0, 1, 0) \\
                        & (1, 0, 0, 1, 1, 0), (1, 0, 0, 1, 0, 1) \\
                        & (1, 1, 1, 0, 1, 0), (1, 1, 0, 0, 1, 0) \\
                        & (1, 1, 0, 1, 1, 0), (1, 1, 0, 1, 0, 0) \\
                        & (1, 1, 0, 1, 0, 1) \\
\hline
(1, 1, 1, 0, 1, 1, 0, 0) & (1, 1, 1, 0, 1, 0), (1, 1, 0, 1, 1, 0) \\
 & (1, 1, 0, 1, 0, 0), (1, 0, 1, 1, 0, 0) \\
 & (1, 1, 1, 1, 0, 0), (1, 1, 1, 1, 1, 0) \\
 & (1, 1, 1, 0, 0, 0), (1, 1, 1, 0, 1, 1) \\
\hline
\end{tabular}
\end{table}

\begin{table}[ht]
\centering
\caption{Deletion Sphere for codewords of $\phi^{-1}(H(10, 2, 2, 66.0))$ with residue 66.0}
\label{tab:deletion_sphere_rev}
\begin{tabular}{|c|c|}
\hline
\textbf{Binary Codeword} & \textbf{Deletion Sphere} \\
\hline
(0, 0, 1, 2, 0) &  (0, 1, 2, 0), (0, 0, 2, 0), (0, 0, 1, 0)\\& (0, 0, 1, 2)  \\
\hline
(1, 0, 1, 2, 3) &  (0, 1, 2, 3), (1, 1, 2, 3), (1, 0, 2, 3)\\& (1, 0, 1, 3), (1, 0, 1, 2)  \\
\hline
(1, 0, 3, 1, 0) &  (0, 3, 1, 0), (1, 3, 1, 0), (1, 0, 1, 0)\\& (1, 0, 3, 0), (1, 0, 3, 1)  \\
\hline
(1, 3, 0, 2, 0) &  (3, 0, 2, 0), (1, 0, 2, 0), (1, 3, 2, 0)\\& (1, 3, 0, 0), (1, 3, 0, 2)  \\
\hline
(2, 0, 3, 1, 3) &  (0, 3, 1, 3), (2, 3, 1, 3), (2, 0, 1, 3)\\& (2, 0, 3, 3), (2, 0, 3, 1)  \\
\hline
(2, 3, 0, 2, 3) &  (3, 0, 2, 3), (2, 0, 2, 3), (2, 3, 2, 3)\\& (2, 3, 0, 3), (2, 3, 0, 2)  \\
\hline
(2, 3, 2, 1, 0) &  (3, 2, 1, 0), (2, 2, 1, 0), (2, 3, 1, 0)\\& (2, 3, 2, 0), (2, 3, 2, 1)  \\
\hline
(3, 3, 2, 1, 3) &  (3, 2, 1, 3), (3, 3, 1, 3), (3, 3, 2, 3)\\& (3, 3, 2, 1)  \\
\hline
\end{tabular}
\end{table}

\begin{table}[]
\centering
\caption{All the residues and corresponding number of codewords for $VT_{a,b} (4;4)$}
\label{qaryexample}
\begin{tabular}{|c|p{1cm}|}
\hline
\textbf{Residue pairs - (a,b)}                                         & \textbf{Number of codewords} \\ \hline
(2, 0)                                                                 & 20                           \\ \hline
(2, 2), (0, 2)                                                         & 18                           \\ \hline
(2, 1), (1, 3), (2, 3), (3, 1), (3, 3), (0, 1), (0, 3), (0, 0), (1, 1) & 16                           \\ \hline
(3, 2), (3, 0), (1, 0), (1, 2)                                         & 14                           \\ \hline
\end{tabular}
\end{table}

\begin{table}[]
\centering
\caption{Deletion sphere generated by mapped Quaternary VT code $\phi(VT_{1,2} (4;4))$ for residue (1,2)}
\label{tab:residuea}
\begin{tabular}{|c|c|}
\hline
Codeword of $\phi(VT_{1,2} (4;4))$ &
  Corresponding deletion sphere \\ \hline
(0, 0, 1, 0, 1, 1, 0, 1) &
  \begin{tabular}[c]{@{}c@{}}(0, 0, 1, 0, 1, 1, 0) (0, 0, 1, 0, 1, 0, 1)\\  (0, 0, 0, 1, 1, 0, 1) (0, 0, 1, 1, 1, 0, 1)\\ (0, 1, 0, 1, 1, 0, 1) (0, 0, 1, 0, 1, 1, 1)\end{tabular} \\ \hline
(0, 1, 0, 0, 0, 0, 0, 1) &
  \begin{tabular}[c]{@{}c@{}}(0, 1, 0, 0, 0, 0, 0) (0, 0, 0, 0, 0, 0, 1)\\ (1, 0, 0, 0, 0, 0, 1) (0, 1, 0, 0, 0, 0, 1)\end{tabular} \\ \hline
(0, 1, 0, 0, 1, 1, 1, 0) &
  \begin{tabular}[c]{@{}c@{}}(0, 1, 0, 1, 1, 1, 0) (0, 1, 0, 0, 1, 1, 0)\\ (1, 0, 0, 1, 1, 1, 0) (0, 0, 0, 1, 1, 1, 0)\\ (0, 1, 0, 0, 1, 1, 1)\end{tabular} \\ \hline
(0, 1, 1, 1, 1, 0, 0, 0) &
  \begin{tabular}[c]{@{}c@{}}(0, 1, 1, 1, 0, 0, 0) (0, 1, 1, 1, 1, 0, 0)\\ (1, 1, 1, 1, 0, 0, 0)\end{tabular} \\ \hline
(1, 1, 0, 0, 0, 0, 0, 0) &
  \begin{tabular}[c]{@{}c@{}}(1, 1, 0, 0, 0, 0, 0) (1, 0, 0, 0, 0, 0, 0)\end{tabular} \\ \hline
(1, 1, 0, 0, 0, 1, 1, 0) &
  \begin{tabular}[c]{@{}c@{}}(1, 0, 0, 0, 1, 1, 0) (1, 1, 0, 0, 0, 1, 1)\\ (1, 1, 0, 0, 1, 1, 0) (1, 1, 0, 0, 0, 1, 0)\end{tabular} \\ \hline
(1, 1, 0, 0, 1, 1, 1, 1) &
  \begin{tabular}[c]{@{}c@{}}(1, 1, 0, 1, 1, 1, 1) (1, 0, 0, 1, 1, 1, 1)\\ (1, 1, 0, 0, 1, 1, 1)\end{tabular} \\ \hline
(1, 1, 0, 1, 0, 1, 1, 1) &
  \begin{tabular}[c]{@{}c@{}}(1, 1, 0, 1, 0, 1, 1) (1, 1, 0, 1, 1, 1, 1)\\ (1, 1, 0, 0, 1, 1, 1) (1, 0, 1, 0, 1, 1, 1)\\ (1, 1, 1, 0, 1, 1, 1)\end{tabular} \\ \hline
(1, 1, 1, 0, 0, 1, 0, 0) &
  \begin{tabular}[c]{@{}c@{}}(1, 1, 0, 0, 1, 0, 0) (1, 1, 1, 0, 1, 0, 0)\\ (1, 1, 1, 0, 0, 0, 0) (1, 1, 1, 0, 0, 1, 0)\end{tabular} \\ \hline
(1, 0, 0, 0, 0, 0, 1, 0) &
  \begin{tabular}[c]{@{}c@{}}(1, 0, 0, 0, 0, 1, 0) (1, 0, 0, 0, 0, 0, 0)\\ (1, 0, 0, 0, 0, 0, 1) (0, 0, 0, 0, 0, 1, 0)\end{tabular} \\ \hline
(1, 0, 0, 0, 0, 1, 1, 1) &
  \begin{tabular}[c]{@{}c@{}}(1, 0, 0, 0, 1, 1, 1) (1, 0, 0, 0, 0, 1, 1)\\ (0, 0, 0, 0, 1, 1, 1)\end{tabular} \\ \hline
(1, 0, 0, 1, 0, 1, 0, 1) &
  \begin{tabular}[c]{@{}c@{}}(1, 0, 0, 1, 0, 1, 0) (1, 0, 1, 0, 1, 0, 1)\\ (0, 0, 1, 0, 1, 0, 1) (1, 0, 0, 1, 0, 0, 1)\\ (1, 0, 0, 0, 1, 0, 1) (1, 0, 0, 1, 1, 0, 1)\\ (1, 0, 0, 1, 0, 1, 1)\end{tabular} \\ \hline
(1, 0, 0, 1, 1, 0, 1, 0) &
  \begin{tabular}[c]{@{}c@{}}(1, 0, 0, 1, 0, 1, 0) (1, 0, 0, 1, 1, 1, 0)\\ (1, 0, 0, 1, 1, 0, 1) (1, 0, 0, 1, 1, 0, 0)\\ (0, 0, 1, 1, 0, 1, 0) (1, 0, 1, 1, 0, 1, 0)\end{tabular} \\ \hline
(1, 0, 1, 1, 1, 1, 1, 0) &
  \begin{tabular}[c]{@{}c@{}}(1, 1, 1, 1, 1, 1, 0) (1, 0, 1, 1, 1, 1, 0)\\ (1, 0, 1, 1, 1, 1, 1) (0, 1, 1, 1, 1, 1, 0)\end{tabular} \\ \hline
\end{tabular}
\end{table}

\end{document}